\newcounter{dummy}
\newcommand\myitem[1][]{\item[#1]\refstepcounter{dummy}\def\@currentlabel{#1}}
\tikzstyle{every picture} = [scale=.45]
\tikzstyle{every node} = [draw, fill=white, circle, inner sep=0pt, minimum size=4pt]
\tikzstyle{d} = [very thick]
\tikzstyle{i} = [draw, fill=black, circle, inner sep=0pt, minimum size=4pt]
\tikzstyle{n} = [draw=none, rectangle, inner sep=2pt]
\let\ld\backslash
\let\le\leqslant
\let\meta\looparrowright
\let\rd/
\let\setminus\smallsetminus
\let\subset\subseteq
\let\sle\sqsubseteq
\let\jle\preccurlyeq
\let\jl\prec
\newcommand\alg{\mathrm{alg}}
\newcommand\po{\mathrm{po}}
\newcommand\lscr[2]{\leftindex_{#1}{#2}}
\newcommand{\m}{\mathbf}
\newcommand\nein{\mathord{\sim}}
\newcommand\no{\mathord{-}}
\newcommand\pair[1]{\langle#1\rangle}
\DeclareMathOperator\Idp{E^+\!}
\DeclareMathOperator\IDP{\mathbf{E}^+\!}
\DeclareMathOperator\ZIdp{ZE^+\!}
\DeclareMathOperator\ZIDP{\mathbf{ZE}^+\!}
\newcommand\category[1]{\expandafter\newcommand\csname#1\endcsname{\textsf{#1}}}
\begin{document}

\title{Balanced residuated partially ordered semigroups}

\address{1 University Drive, Chapman University, Orange CA 92866, USA}

\author{Stefano Bonzio\\
University of Cagliari, Italy\\
stefano.bonzio{@}unica.it
\and Jos\'e Gil-F\'erez\\
Chapman University, Orange CA 92866, USA\\
gilferez{@}chapman.edu
\and Peter Jipsen\\
Chapman University, Orange CA 92866, USA\\
jipsen{@}chapman.edu
\and Adam P\v{r}enosil\\
University of Barcelona, Spain\\
adam.prenosil{@}gmail.com
\and Melissa Sugimoto\\
Chapman University, Orange CA 92866, USA\\
msugimoto{@}chapman.edu}

\maketitle

\runninghead{S. Bonzio, J. Gil-F\'erez, P. Jipsen, A. P\v{r}enosil, M. Sugimoto}{Balanced residuated semigroups}

\begin{abstract}
A \emph{residuated semigroup} is a structure $\pair{A,\le,\cdot,\ld,\rd}$ where $\pair{A,\le}$ is a poset and $\pair{A,\cdot}$ is a semigroup such that the residuation law $x\cdot y\le z\iff x\le z/y\iff y\le x\ld z$ holds. An element $p$ is \emph{positive} if $a\le pa$ and $a \le ap$ for all~$a$. A residuated semigroup is called \emph{balanced} if it satisfies the equation $x \ld x \approx x \rd x$ and moreover each element of the form $a \ld a = a \rd a$ is positive, and it is called \emph{integrally closed} if it satisfies the same equation and moreover each element of this form is a global identity. We show how a wide class of balanced residuated semigroups (so-called \emph{steady} residuated semigroups) can be decomposed into integrally closed pieces, using a generalization of the classical P\l{}onka sum construction. This generalization involves gluing a disjoint family of ordered algebras together using multiple families of maps, rather than a single family as in ordinary P\l{}onka sums.
\end{abstract}

\begin{keywords}
residuated semigroups, residuated monoids, P\l{}onka sums, metamorphisms
\end{keywords}

\section{Introduction}

A \emph{residuated partially ordered semigroup}, or \emph{residuated semigroup} for short, is a structure of the form $\m A = \pair{A,\le,\cdot,\ld,\rd}$ such that $\pair{A,\le}$ is a poset, $\pair{A,\cdot}$ is a semigroup, and the operations $\ld$ and $\rd$ are the \emph{residuals} of the \emph{multiplication} $\cdot$, that is
\[
x\cdot y \le z \quad\iff\quad x \le z\rd y \quad\iff\quad y \le x\ld z.
\]
As usual, we denote the product $x\cdot y$ simply by $xy$. The residuation equivalences imply that $\cdot$ is order-preserving in both arguments and that $\ld$ and $\rd$ are order-preserving in the numerator and order-reversing in the denominator. If $\pair{A,\le}$ is a join-semilattice or a meet-semilattice, then $\le$ is term-definable, hence residuated join-semilattices (or residuated meet-semilattices) form a variety. However, residuated semigroups are only a po-variety, i.e., they are defined by inequations and all fundamental operations are either order-preserving or order-reversing in each argument~\cite{Pi04}.

An element $u$ of a residuated semigroup $\m A$ is called a \emph{global identity} if $u\cdot a = a = a\cdot u$ for all $a\in A$. If $\m A$ contains a global identity, then it is unique and we usually represent it by $1$. We call the structure $\widehat{\m A} = \pair{A,\le,\cdot,1,\ld,\rd}$ a \emph{residuated partially ordered monoid}, or \emph{residuated monoid} for short.\footnote{In~\cite{BGJPS2024} the terminology ``residuated poset'' was used instead of residuated monoid, to emphasize the similarity with residuated lattices (where the partial order is in fact a lattice).}

Examples of residuated semigroups and residuated monoids include all groups (ordered by the antichain order), partially ordered groups, hoops, Brouwerian semilattices and generalized Boolean algebras, as well as all subreducts of residuated lattices. In the case of groups with the antichain order, the residuals are $x\ld y=x^{-1}y$ and $x\rd y=xy^{-1}$. Under this interpretation of the residuals, groups satisfy the identity $x\ld x\approx x\rd x$, which can fail in relation algebras and complex algebras of groups.
Residuated monoids also satisfy $y\le (x\rd x)y$ and $y\le y(x\rd x)$. Residuated semigroups in which these three (in)equations hold are called \emph{balanced}. They are the main focus of our investigations.

The representation theory presented in this paper rests on partitioning a balanced residuated semigroup $\m A$ into a family of sets indexed by the positive ($a \le pa$ and $a \le ap$ for all $a \in A$) idempotents $p$ of $\m A$, namely $A_p := \{ a \in A : a \ld a = p \} = \{ a \in A : a \rd a = p \}$. We focus on the case where these partition classes are universes of subalgebras of $\m A$, i.e., closed under multiplication and residuation. In that case, each $\m A_p$ has $p$ as its global identity, which is the only positive idempotent of~$\m A_p$. Moreover, the positive idempotents form a join semilattice $\IDP \m A := \langle \Idp \m A, \cdot \rangle$, since each positive idempotent $p$ of a balanced residuated semigroup is central (that is, $p a = a p$ for all $a \in A$). In other words, we obtain a family of (reducts of) residuated monoids $\m A_p$ indexed by the join semilattice $\IDP \m A$. This decomposition is the topic of Section~\ref{sec:decompositions}.

The question then arises of how to reconstruct the residuated semigroup $\m A$ from the ordered subalgebras $\m A_p$ indexed by the join semilattice $\IDP \m A$. The appropriate tool for this purpose are so-called P\l{}onka sums of semilattice directed systems of homomorphisms, which are systems consisting of a family of algebras $\m A_p$ indexed by a join semilattice $\m I$ together with a suitable family of homomorphisms $\varphi_{ij}\colon \m A_i \to \m A_j$. The main use of this classical algebraic construction~\cite{Plonka67,Plonka68a,Plonka68b} is to provide a structural description of so-called regular varieties (see e.g.~\cite{PR92,BPP22}). However, it has also been successfully applied to various classes of residuated structures, including:
\begin{itemize}
\item even and odd involutive commutative residuated chains~\cite{Je22},
\item commutative idempotent involutive residuated lattices~\cite{JiTuVa21},
\item locally integral involutive po-monoids in~\cite{GFJiLo23}, and
\item locally integral involutive po-semigroups in~\cite{GFJiSu24}.
\end{itemize}
In~\cite{GFJiLo23,GFJiSu24} involutivity means that the residuals are definable from two unary negation operators $\nein,\no$ as $x\ld y:=\nein(\no y\cdot x)$ and $x\rd y:=\no(y\cdot \nein x)$. This property is then inherited by the components $\m A_p$. Local integrality further ensures that the components $\m A_p$ are \emph{integral}, i.e., have the global identity as their top element. In the current paper we obtain similar structural results for a much larger class of residuated semigroups, where the components need not be involutive or integral. Instead, the components are merely assumed to be \emph{integrally closed}, i.e.\ to satisfy $x \ld x \approx 1$, or equivalently $x \rd x \approx 1$.

The present context in fact calls for a generalization of the P\l{o}nka sum construction, which applies to what we call \emph{semilattice directed systems of metamorphisms}, where the family of homomorphisms $\varphi_{ij}$ is replaced by multiple families of maps (which are not necessarily homomorphisms). This construction was already introduced in the conference paper~\cite{BGJPS2024}, which the present paper expands on. The construction of the P\l{}onka sum of metamorphisms is described in detail in Sections~\ref{plonka} and~\ref{sec:sums:of:posets}, which deal respectively with the algebraic structure and the order structure of the P\l{}onka sum.

Let us now sketch how this construction applies to the family $\m A_p$ of residuated semigroups obtained by decomposing a sufficiently well-behaved balanced residuated semigroup $\m A$. This is explained in more detail and greater generality in Section~\ref{sec:fibrant}. Given $p \le q$ in $\IDP \m A$, we define the maps $\varphi_{pq}, \psi_{pq}\colon A_p \to A_q$ as $\varphi_{pq}(a) := a q$ and $\psi_{pq}(a) := a / q$. These maps define a semilattice directed system of metamorphisms over the semilattice $\IDP \m A$. The question is now, given $a \in A_p$ and $b \in A_q$, how does the P\l{}onka sum construction use this data to compute the elements $a \cdot b, a \ld b, a \rd b$ in~$\m A$ and to determine whether $a \le b$ in $\m A$? We do so in the following way: for $r := p \cdot q$
\[
  a \cdot b = \varphi_{pr}(a) \cdot_{r} \varphi_{qr}(b), \qquad
  a \ld b = \varphi_{pr}(a) \ld_r \psi_{qr}(b), \qquad
  a \rd b = \psi_{pr}(a) \rd_{\!r} \varphi_{qr}(b),
\]
  and
\[
  a \le b \iff \varphi_{pr}(a) \le_{r} \psi_{qr}(b),
\]
  where the subscript $r$ indicates that the operations and the order are computed in $\m A_r$. The main results of this paper, namely Theorems~\ref{thm:steady:over:I:partition:meta} and~\ref{thm:construction:meta}, show that each sufficiently well-behaved (``steady'') residuated semigroup can be decomposed in this way into a family of integrally closed residuated monoids, and conversely that such P\l{}onka sums of integrally closed residuated monoids yield steady residuated semigroups.

The case of idempotent balanced residuated semigroups is discussed in more detail in Section~\ref{sec:idempotent case}. The paper concludes with some further examples of the above constructions in Section~\ref{sec:examples}. We summarize the main classes of ordered algebras considered in this paper in Table~\ref{tab:residuated_structures_summary} in Section~\ref{sec:conclusion}.

\section{Positive Idempotents in Residuated Semigroups}

An element $p$ of a residuated semigroup $\m A$ is \emph{positive} if for every $a\in A$, we have both $a\le pa$ and $a\le ap$, or equivalently, if for every $a \in A$, we have both $p \ld a \le a$ and $a \rd p \le a$. To see that these two conditions are equivalent, 
notice that 
$p\ld a \le p(p\ld a) \le a$ and $a\rd p \le (a \rd p) p \le a$ if $b \le p b$ and $b \le b p$ for every $b \in A$, 
while $a \le p \ld (pa) \le pa$ and $a \le (ap) \rd p \le a p$ if $p \ld b \le b$ and $b \rd p \le b$ for every $b \in A$. Moreover, the set $A^+$ of positive elements of $\m A$ is an upset (upward-closed subset) of $\pair{A,\le}$, because if $p\in A^+$ and $p\le q$, then for every $a\in A$, $a \le pa \le qa$ by the monotonicity of the multiplication and $a\le aq$ analogously. If $\m A$ has a global identity $1$, then $A^+$ is the principal upset generated by $1$, that is, $A^+ = \{a\in A : 1 \le a\}$.

An element $p\in A$ is \emph{idempotent} if $p^2 = p\cdot p = p$. We denote by $\Idp\m A$ the set of all positive idempotents of $\m A$. If $\m A$ has a global identity $1$, then $1$ is the smallest element of $\Idp\m A$. For every element $p\in\Idp\m A$, we define the following sets:
\begin{align*}
A\rd p &:= \{a\rd p : a\in A\}, & Ap &:= \{ap : a\in A\}, & A_p &:= \{a\in A : a\ld a = p\},\\
p\ld A &:= \{p\ld a : a\in A\}, & pA &:= \{pa : a\in A\}, & \lscr pA &:= \{a\in A : a\rd a = p\}.
\end{align*}

\begin{lemma}\label{lem:char:A/p=Ap}
The following equalities hold for every $p\in\Idp\m A$.
\begin{align*}
(1)\quad    A\rd p &= \{a\in A : a = a\rd p\} = \{a\in A : a\le a\rd p\} = \{a\in A : ap\le a\}\\
 &= \{a\in A : ap = a\} = Ap.\\
(2)\quad    p\ld A &= \{a\in A : a = p\ld a\} = \{a\in A : a\le p\ld a\} = \{a\in A : pa\le a\}\\
 &= \{a\in A : pa = a\} = pA.
\end{align*}
\end{lemma}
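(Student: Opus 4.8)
The plan is to prove the two chains of equalities in Lemma~\ref{lem:char:A/p=Ap} by showing a cycle of inclusions among the five described sets, together with the identification of the first and the last as $A\rd p$ and $Ap$ respectively (and symmetrically for part~(2)). Since part~(2) is the order-dual/mirror image of part~(1) — swapping $\cdot$ with its opposite and $\rd$ with $\ld$ — it suffices to carry out part~(1) in full and then remark that (2) follows by this symmetry.

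\medskip

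For part~(1), I would argue as follows. First I would show $A\rd p \subseteq \{a : a\le a\rd p\}$: for any $b\in A$, put $a = b\rd p$; since $p$ is a positive idempotent, $a = b\rd p = b\rd(p\cdot p) = (b\rd p)\rd p = a\rd p$, using the standard residuation identity $x\rd(yz) = (x\rd z)\rd y$ together with $p=p\cdot p$; in particular $a\le a\rd p$. This also already gives $A\rd p \subseteq \{a : a = a\rd p\}$. Next, $\{a : a\le a\rd p\}\subseteq\{a : ap\le a\}$ by the residuation law ($a\le a\rd p \iff ap\le a$); and conversely $\{a : ap\le a\}\subseteq\{a : a\le a\rd p\}$ by the same equivalence, so these two coincide. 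Then $\{a : ap\le a\}\subseteq\{a : ap = a\}$: if $ap\le a$ then since $p$ is positive we also have $a\le ap$, hence $ap=a$; the reverse inclusion $\{a:ap=a\}\subseteq\{a:ap\le a\}$ is trivial. Finally $\{a : ap = a\}\subseteq Ap$ is immediate (write $a = ap$), and $Ap\subseteq A\rd p$ because for any $b$, $bp = (bp)\rd p$: indeed $(bp)\rd p \ge bp$ since $p$ is positive, while $(bp)\rd p\le bp$ follows from $((bp)\rd p)\cdot p\le bp$, which is the residuation law applied to $(bp)\rd p\le (bp)\rd p$. Actually the cleanest route is to note we have already shown $Ap\subseteq\{a:a\le a\rd p\} = A\rd p$ would require $Ap$ to sit inside one of the middle sets; concretely $bp\in\{a:ap\le a\}$ because $(bp)p = b(pp) = bp\le bp$, closing the cycle. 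Assembling these inclusions yields all five sets equal, with the two extremes being exactly $A\rd p$ and $Ap$.

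\medskip

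The only genuinely substantive point, and the one I would be most careful about, is the step $A\rd p\subseteq\{a:a=a\rd p\}$, i.e.\ that $(b\rd p)\rd p = b\rd p$ for all $b$. This is where idempotency of $p$ is used in an essential way (not positivity), via the identity $x\rd(yz)=(x\rd z)\rd y$, which is a consequence of the residuation equivalences: $w\le x\rd(yz)\iff w(yz)\le x\iff (wy)z\le x\iff wy\le x\rd z\iff w\le(x\rd z)\rd y$. Setting $y=z=p$ and $x=b$ gives $b\rd p = b\rd(p\cdot p) = (b\rd p)\rd p$. Everything else — the passage between $ap\le a$ and $ap = a$, and between $a\le a\rd p$ and $ap\le a$ — is a direct appeal to positivity of $p$ and to the residuation law, with no hidden difficulty. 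Once part~(1) is in hand, part~(2) is obtained by the evident left-right symmetry of the axioms, with $Ap$, $A\rd p$, $ap$ replaced throughout by $pA$, $p\ld A$, $pa$.
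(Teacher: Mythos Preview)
Your approach is essentially the same as the paper's: the key nontrivial step is $b\rd p = b\rd p^2 = (b\rd p)\rd p$ via idempotence and the residuation identity $x\rd(yz)=(x\rd z)\rd y$, which is exactly what the paper does; and the middle equalities come from residuation and positivity. One small organizational gap: your cycle of inclusions does not actually close. You establish $A\rd p \subseteq \{a : a = a\rd p\} \subseteq \{a : a \le a\rd p\} = \{a : ap \le a\} = \{a : ap = a\} = Ap$, but you never return to $A\rd p$ --- the inclusion $Ap \subseteq \{a : ap \le a\}$ only lands you back in a middle set, not in $A\rd p$. What is missing is $\{a : a \le a\rd p\} \subseteq \{a : a = a\rd p\}$, which follows from positivity via $a\rd p \le a$ (you used $a \le ap$ but never this dual consequence), together with the trivial $\{a : a = a\rd p\} \subseteq A\rd p$. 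The paper handles this by simply noting upfront that positivity gives both $a\rd p \le a$ and $a \le ap$, from which all three middle equalities are immediate.
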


\begin{proof}
We will only prove (1), since the proof of (2) is completely analogous. First of all, notice that the three middle equalities are consequences of residuation and the fact that $p$ is positive and therefore $a\rd p \le a$ and $a \le ap$. Now, for every $b\in A$, we have that $b\rd p = b\rd p^2 = (b\rd p)\rd p$, which shows that $A\rd p \subseteq \{a\in A : a = a\rd p\}$. The reverse inclusion is trivial. We also have that for every $b\in A$, the equality $bpp = bp$ implies that $Ap\subset \{a\in A : ap =a\}$. The reverse inclusion is also trivial.
\end{proof}

\begin{lemma}\label{lem:Ap:closure:meets:joins}
Let $\m A$ be a residuated semigroup and $p\in\Idp\m A$.
\begin{enumerate}[(1)]
\item The maps $a \mapsto pa$ and $a \mapsto ap$ are join-preserving closure operators whose images are $pA$ and $A p$, respectively.
\item The maps $a \mapsto p\ld a$ and $a \mapsto a\rd p$ are meet-preserving interior operators whose images are $p\ld A$ and $A\rd p$, respectively.
\item The sets $pA = p\ld A$ and $Ap = A\rd p$ are closed under existing meets and joins.
\end{enumerate}
\end{lemma}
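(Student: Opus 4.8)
The plan is to prove each of the three items using the residuation adjunctions, deriving the closure/interior operator properties from standard adjunction arguments and then identifying the images.

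\medskip

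For item~(1), I would first observe that the map $a \mapsto pa$ arises from an adjunction: by residuation, $pa \le b \iff a \le p \ld b$, so $a \mapsto pa$ is the lower adjoint of $a \mapsto p \ld a$ (and symmetrically $a \mapsto ap$ is the lower adjoint of $a \mapsto a \rd p$). A lower adjoint always preserves existing joins, so $a \mapsto pa$ is join-preserving. To see it is a closure operator, I need: monotone (immediate from monotonicity of multiplication), inflationary (this is exactly $a \le pa$, i.e., positivity of $p$), and idempotent ($p(pa) = p^2 a = pa$ since $p$ is idempotent). That the image is $pA$ is immediate from the definition, and by Lemma~\ref{lem:char:A/p=Ap} the image equals the fixpoint set $\{a : pa = a\}$, consistent with the general theory of closure operators. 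The case of $a \mapsto ap$ is symmetric.

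\medskip

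For item~(2), dually, $a \mapsto p \ld a$ is the upper adjoint of $a \mapsto pa$, hence preserves existing meets; it is monotone, deflationary ($p \ld a \le a$, the equivalent form of positivity established in Section~\ref{sec:decompositions}'s preamble), and idempotent ($p \ld (p \ld a) = (p^2) \ld a = p \ld a$ using the law $(xy) \ld z \approx y \ld (x \ld z)$, which follows from residuation). So it is an interior operator with image $p \ld A$. Symmetrically for $a \mapsto a \rd p$.

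\medskip

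For item~(3), the equalities $pA = p \ld A$ and $Ap = A \rd p$ are already contained in Lemma~\ref{lem:char:A/p=Ap}. It remains to show these sets are closed under existing meets and joins (computed in $\m A$). Since $pA$ is the image of the closure operator $a \mapsto pa$, it is closed under all existing meets (the meet of a family of closed elements, if it exists in $\m A$, is again closed — a standard fact, provable directly: if $m = \bigwedge_i a_i$ with each $pa_i = a_i$, then $pm \le pa_i = a_i$ for all $i$, so $pm \le m$, whence $pm = m$ by positivity). On the other hand, since $pA = p \ld A$ is the image of the interior operator $a \mapsto p \ld a$, the dual argument shows it is closed under all existing joins. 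Combining the two descriptions of the same set yields closure under both existing meets and joins. The argument for $Ap = A \rd p$ is identical. The only mild subtlety — the ``main obstacle,'' such as it is — is being careful that ``existing'' means existing in $\m A$, and that one genuinely needs \emph{both} representations of the set (as a closure-operator image \emph{and} as an interior-operator image) to get closure under \emph{both} joins \emph{and} meets; neither representation alone suffices.
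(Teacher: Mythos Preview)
Your proposal is correct and follows essentially the same approach as the paper's proof: both verify the closure/interior operator axioms from positivity, idempotence, and monotonicity, and both obtain item~(3) by combining the two descriptions $pA = p\ld A$ from Lemma~\ref{lem:char:A/p=Ap} with the standard facts that closure-operator images are closed under existing meets and interior-operator images under existing joins. The only cosmetic difference is that you invoke the adjunction $pa \le b \iff a \le p\ld b$ to deduce join/meet preservation, whereas the paper states this preservation as something one can verify directly; these are the same argument in different dress.
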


\begin{proof}
(1) Consider the map $\gamma\: a\mapsto pa$. The positivity and idempotence of $p$ imply that $a \le \gamma(a)$ and $\gamma(\gamma(a)) = \gamma(a)$ for all $a\in A$, respectively. And if $a\le b$, then $\gamma(a) \le \gamma(b)$, by the monotonicity of the product. Hence, $\gamma$ is a closure operator. Its image is $\{\gamma(a) : a\in A\} = pA$. Since multiplication preserves all existing joins \cite[Thm 3.10]{GJKO2007} we have $\gamma(\bigvee X) = \bigvee\gamma[X]$ whenever $\bigvee X$ exists. The proof of the other claim is analogous.

(2) Consider now the map $\delta\: a\mapsto p\ld a$. The positivity and idempotence of $p$ now imply that $\delta(a)\le a$ and $\delta(\delta(a)) = \delta(a)$, and the monotonicity in the numerator of the left residual implies the monotonicity of $\delta$. Hence, $\delta$ is an interior operator. Since residuals preserve all existing meets in the numerator \cite[Thm 3.10]{GJKO2007} we have $\delta(\bigwedge X) = \bigwedge\delta[X]$ whenever $\bigwedge X$ exists. The proof of the other claim is analogous.

(3) Recall that the images of closure and interior operators are closed under existing meets and existing joins, respectively. Hence, since $pA = p\ld A$, by Lemma~\ref{lem:char:A/p=Ap}, this set is closed under existing meets and joins, by the previous parts.
\end{proof}

\begin{lemma}\label{lem:IdpA:identities}
Let $\m A$ be a residuated semigroup and $a\in A$.
\begin{enumerate}[(1)]
    \item If $a\rd a$ is positive, then it is the largest $p\in\Idp\m A$ such that $a\in p\ld A$.
    \item If $a\ld a$ is positive, then it is the largest $p\in\Idp\m A$ such that $a\in A\rd p$.
\end{enumerate}
\end{lemma}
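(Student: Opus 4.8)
The plan is to prove part~(1) in full and obtain part~(2) by the evident left--right symmetry of the axioms. So fix $a \in A$, set $p := a \rd a$, and assume $p$ is positive. The first thing to establish is that $p \in \Idp\m A$, since otherwise the set $p \ld A$ is not even defined. From $p \le a \rd a$ and residuation we get $pa \le a$; monotonicity of the product in the second argument then gives $p(pa) \le pa \le a$, and one more appeal to residuation yields $pp \le a \rd a = p$. (This inequality holds in any residuated semigroup.) The reverse inequality $p \le pp$ is precisely the instance of the positivity of $p$ obtained by taking the arbitrary element to be $p$ itself. Hence $pp = p$, so $p \in \Idp\m A$.

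Next I would show $a \in p \ld A$: by Lemma~\ref{lem:char:A/p=Ap}(2) it suffices to check $pa = a$, and the inequality $pa \le a$ was just obtained, while $a \le pa$ is again an instance of the positivity of $p$. So $a \in pA = p\ld A$. For maximality, suppose $q \in \Idp\m A$ and $a \in q \ld A$. Then Lemma~\ref{lem:char:A/p=Ap}(2) gives $qa = a$, in particular $qa \le a$, whence $q \le a \rd a = p$ by residuation. This shows $p = a\rd a$ is the largest positive idempotent $q$ with $a \in q \ld A$, proving~(1).

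Part~(2) is the mirror image: with $p := a \ld a$ positive, one derives $ap \le a$ from $a \ld a \le a \ld a$, then $a(pp) = (ap)p \le ap \le a$ so $pp \le p$, and $p \le pp$ again from positivity; that $a \in A\rd p$ and the maximality statement follow from $ap = a$ and $aq \le a$ via Lemma~\ref{lem:char:A/p=Ap}(1) in place of part~(2). I do not anticipate any genuine obstacle: the only point that needs a little care is that the idempotence of $a\rd a$ (resp.\ $a\ld a$) splits into a ``free'' inequality $pp \le p$ and the inequality $p \le pp$, the latter being exactly where the positivity hypothesis enters; everything else is routine residuation bookkeeping layered on top of Lemma~\ref{lem:char:A/p=Ap} and the basic facts about positive elements recorded at the start of the section.
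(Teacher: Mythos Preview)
Your proof is correct and follows essentially the same route as the paper: derive $pa \le a$ and $pp \le p$ by residuation, invoke positivity for $p \le pp$, and obtain maximality from $qa \le a \Rightarrow q \le a\rd a$. The only cosmetic difference is that the paper appeals directly to the characterization $p\ld A = \{a : pa \le a\}$ from Lemma~\ref{lem:char:A/p=Ap} rather than first upgrading to $pa = a$, but this is the same argument with one step elided.
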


\begin{proof}
Notice that $(a\rd a)a \le a$ holds by residuation, and since $(a\rd a)(a\rd a)a \le (a\rd a)a \le a$, we also have that $(a\rd a)(a\rd a) \le a\rd a$. If moreover $a\rd a$ is positive, then we also have the inequality $a\rd a \le (a\rd a)(a\rd a)$. Hence, $p = a\rd a \in\Idp\m A$ and $pa \le a$, that is, $a\in p\ld A$. Consider $q\in \Idp\m A$ such that $a\in q\ld A$. Then $qa\le a$ and, by residuation, $q\le a\rd a$. The proof of the second claim is analogous.
\end{proof}

The following lemma is an immediate consequence of this.

\begin{lemma}\label{lem:char:interdefinability}
For every $p\in\Idp\m A$, we have the following equalities.
\[
A\rd p = \smashoperator{\bigcup\limits_{p\le q\in\Idp\m A}} A_q,
\quad A_p = (A\rd p)\setminus \smashoperator{\bigcup\limits_{p < q\in\Idp\m A}} A_q,
\quad p\ld A = \smashoperator{\bigcup\limits_{p\le q\in\Idp\m A}} \lscr qA,
\quad \lscr pA = (p\ld A)\setminus \smashoperator{\bigcup\limits_{p < q\in\Idp\m A}} \lscr qA.
\]
\end{lemma}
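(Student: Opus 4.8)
The plan is to derive all four equalities from Lemma~\ref{lem:IdpA:identities}, treating them as two symmetric pairs: I will establish the two equalities involving $A\rd p$ and the fibres $A_q$, and the remaining two then follow by the entirely parallel left-handed argument (using part~(1) of Lemma~\ref{lem:IdpA:identities} and the left analogues of Lemma~\ref{lem:char:A/p=Ap}).

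For the first equality I prove both inclusions. For $\bigcup_{p\le q\in\Idp\m A} A_q \subseteq A\rd p$: given $a\in A_q$ with $p\le q$, we have $a\ld a = q$, which is positive, so Lemma~\ref{lem:IdpA:identities}(2) gives $a\in A\rd q$, i.e.\ $aq = a$ by Lemma~\ref{lem:char:A/p=Ap}(1); since $p\le q$ and $p$ is positive, $a = aq \ge ap \ge a$, whence $ap = a$ and $a\in A\rd p$. For the reverse inclusion: given $a\in A\rd p$, Lemma~\ref{lem:char:A/p=Ap}(1) yields $ap = a$, so $p\le a\ld a$ by residuation; as $A^+$ is an upset containing $p$, the element $a\ld a$ is positive, so by Lemma~\ref{lem:IdpA:identities}(2) the element $q := a\ld a$ lies in $\Idp\m A$, satisfies $q\ge p$, and $a\in A_q$ by definition of $A_q$.

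The second equality is then a purely set-theoretic consequence: the first equality rewrites as $A\rd p = A_p \cup \bigcup_{p< q\in\Idp\m A} A_q$, and the sets $A_q$ for distinct $q$ are pairwise disjoint, since membership of $a$ in $A_q$ forces $q = a\ld a$; removing $\bigcup_{p< q\in\Idp\m A} A_q$ from $A\rd p$ therefore leaves exactly $A_p$. The equalities for $p\ld A$ and $\lscr pA$ are obtained in the same way after replacing $\rd$ by $\ld$, $Ap$ by $pA$, and $a\ld a$ by $a\rd a$ throughout.

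I do not anticipate a genuine obstacle here, as everything is an immediate unwinding of Lemma~\ref{lem:IdpA:identities}; the only point needing a moment's care is the inclusion $A\rd q \subseteq A\rd p$ for $p\le q$ (which runs opposite to the naive expectation and relies on positivity of $p$), together with the disjointness of the fibres $A_q$, both of which are exactly what makes the displayed union and set difference behave correctly.
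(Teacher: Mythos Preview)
Your proof is correct and follows essentially the same approach as the paper's. The only minor difference is that for the inclusion $\bigcup_{p\le q} A_q \subseteq A\rd p$ the paper argues more directly---from $p\le q = a\ld a$ it gets $ap\le a$ by a single application of residuation---whereas you detour through $a\in A\rd q$ and then use monotonicity and positivity of $p$ to descend to $A\rd p$; but this is a cosmetic variation, not a different strategy.
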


\begin{proof}
If $a\in A\rd p$, then $ap\le a$, and therefore $p\le a\ld a$, whence we deduce that $q = a\ld a$ is positive. By Lemma~\ref{lem:IdpA:identities}, $q\in\Idp\m A$, and $a\in A_q$ trivially. In order to prove the reverse inclusion, suppose that $p\le q\in\Idp\m A$ and $a\in A_q$. Then, $p \le q = a\ld a$ and by residuation $ap\le a$, that is, $a\in A\rd p$.

The second equality follows from the first one and the fact that the family $\{A_q : q\in\Idp\m A\}$ consists of disjoint sets. The other two equalities are proven analogously.
\end{proof}

An element $p$ of a residuated semigroup $\m A$ is \emph{central} if $pa = ap$ for every $a\in A$. A residuated semigroup is \emph{commutative} if all its elements are central, that is, if it satisfies the equation $xy\approx yx$. The following proposition characterizes the centrality of the positive idempotents in a residuated semigroup.

\begin{proposition}\label{prop:equivalent:char:idempositive:central}
The conditions below are equivalent for every residuated semigroup~$\m A$ and every $p\in\Idp\m A$.
\begin{enumerate}[(1)]
    \item For every $a\in A$, $pa = ap$, that is, $p$ is central.
    \item For every $a\in A$, $pa = a$ if and only if $ap = a$.
    \item For every $a\in A$, $p\ld a = a$ if and only if $a\rd p = a$.
    \item For every $a\in A$, $p\ld a = a\rd p$.
    \item $p\ld A =  A\rd p$.
\end{enumerate}
\end{proposition}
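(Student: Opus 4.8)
The plan is to lean on Lemma~\ref{lem:char:A/p=Ap}, which already records that $p\ld A = pA = \{a : pa = a\} = \{a : p\ld a = a\}$ and $A\rd p = Ap = \{a : ap = a\} = \{a : a\rd p = a\}$. Granting these identifications, conditions (2), (3) and (5) are literally three rephrasings of one and the same set equality, namely $pA = Ap$ (equivalently $p\ld A = A\rd p$): condition (2) asserts $\{a : pa = a\} = \{a : ap = a\}$, condition (3) asserts $\{a : p\ld a = a\} = \{a : a\rd p = a\}$, and condition (5) is $p\ld A = A\rd p$ verbatim. So the first move is simply to quote Lemma~\ref{lem:char:A/p=Ap} to obtain $(2)\iff(3)\iff(5)$, after which it suffices to tie condition (1) to this common statement; I will do so through condition (4).

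For $(1)\Rightarrow(4)$: if $p$ is central then $px = xp$ for every $x \in A$, so $\{x : px \le a\} = \{x : xp \le a\}$; by residuation the largest element of the left-hand set is $p\ld a$ and that of the right-hand set is $a\rd p$, so $p\ld a = a\rd p$. For $(4)\Rightarrow(5)$ there is nothing to do: $p\ld A = \{p\ld a : a \in A\} = \{a\rd p : a \in A\} = A\rd p$.

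The only implication needing a genuine (if short) argument is the return $(5)\Rightarrow(1)$. Using Lemma~\ref{lem:char:A/p=Ap}, rewrite (5) as $pA = Ap$, and fix $a \in A$. Since $pa \in pA = Ap$, the characterization $Ap = \{x : xp = x\}$ gives $pap = (pa)p = pa$; symmetrically, since $ap \in Ap = pA$ and $pA = \{x : px = x\}$, we get $pap = p(ap) = ap$. Hence $pa = pap = ap$, so $p$ is central. Chaining $(1)\Rightarrow(4)\Rightarrow(5)$, the block $(2)\iff(3)\iff(5)$, and $(5)\Rightarrow(1)$ then closes the equivalence.

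I do not anticipate any real obstacle: past Lemma~\ref{lem:char:A/p=Ap} this is essentially bookkeeping, the only substantive content being the ``sandwich'' element $pap$ in $(5)\Rightarrow(1)$. The one pitfall to avoid is conflating the pointwise statement (4), $p\ld a = a\rd p$ for all $a$, with the set equality (5); we never need $(5)\Rightarrow(4)$ in isolation, since we route $(5)$ back to $(4)$ through centrality.
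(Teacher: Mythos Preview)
Your proposal is correct and follows essentially the same approach as the paper: both use Lemma~\ref{lem:char:A/p=Ap} to identify (2), (3), and (5) as rephrasings of the single set equality $pA = Ap = p\ld A = A\rd p$, and both recover centrality via the sandwich $pa = pap = ap$. The only cosmetic difference is the routing: the paper proves $(3)\Rightarrow(4)$ directly (using that $p\ld a$ is a fixed point of $p\ld(\cdot)$, hence by (3) also of $(\cdot)\rd p$, giving $p\ld a = (p\ld a)\rd p = p\ld(a\rd p) = a\rd p$), whereas you reach (4) via $(5)\Rightarrow(1)\Rightarrow(4)$ using the residuation characterization of $p\ld a$ and $a\rd p$ as maxima of coinciding sets.
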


\begin{proof}
\begin{itemize}[topsep=0pt, align=left, left=0pt .. \parindent]
\item[(1) $\Leftrightarrow$ (2)] The fact that~(1) implies~(2) is trivial. In order to prove the reverse implication, notice that $p(pa) = pa$ and therefore, by~(2) we also have that $(pa)p = pa$. Analogously, we can also show that $p(ap) = ap$. Hence, $pa = pap = ap$.

\item[(2) $\Leftrightarrow$ (3) $\Leftrightarrow$ (5)] This follows from Lemma~\ref{lem:char:A/p=Ap}.

\item[(3) $\Leftrightarrow$ (4)] Notice that $p\ld (p\ld a) = (pp)\ld a = p\ld a$ and therefore we deduce by~(3) that $(p\ld a)\rd p = p\ld a$. Analogously, we can obtain that $(p\ld a)\rd p = p\ld a$. hence, $p\ld a = p\ld a\rd p = a\rd p$. The reverse implication is trivial.\QED
\end{itemize}\let\QED\relax
\end{proof}

It is easy to see that the set $\ZIdp\m A$ of central positive idempotents of $\m A$ is closed under multiplication and therefore (if non-empty) forms a join-semilattice $\ZIDP\m A = \pair{\ZIdp\m A,\cdot}$, since the restriction of the multiplication of $\m A$ to $\ZIdp\m A$ is associative, idempotent, and commutative. Moreover, notice that for every $p,q\in\ZIdp\m A$ we have that if $p\le q$, then $q\le pq \le qq = q$, and if $pq = q$, then $p\le pq = q$. Thus the order induced by the product of $\IDP\m A$ (viewed as a join operation) is a restriction of the order of~$\m A$. The following is an immediate consequence of the previous proposition and lemma.

\begin{corollary}\label{cor:e=ze}
The following are also equivalent for every residuated semigroup $\m A$.
\begin{enumerate}[(1)]
    \item $\Idp\m A = \ZIdp\m A$.
    \item $\lscr pA = A_p$, for all $p\in\Idp\m A$.
\end{enumerate}
\end{corollary}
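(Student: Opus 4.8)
This corollary is an immediate consequence of Proposition~\ref{prop:equivalent:char:idempositive:central} and Lemma~\ref{lem:char:interdefinability}, so the plan is to chain these two results together, the only work being the translation between centrality of a positive idempotent and statements about the partition classes $A_q$ and $\lscr qA$. Recall that for a fixed $p\in\Idp\m A$, Proposition~\ref{prop:equivalent:char:idempositive:central} identifies ``$p$ is central'' with the set equality $p\ld A = A\rd p$, while Lemma~\ref{lem:char:interdefinability} expresses $A\rd p = \bigcup_{p\le q\in\Idp\m A} A_q$ and $p\ld A = \bigcup_{p\le q\in\Idp\m A}\lscr qA$ (together with the description of $A_p$ and $\lscr pA$ as the corresponding strict remainders). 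Since $\ZIdp\m A\subseteq\Idp\m A$ always holds, condition~(1) says precisely that every positive idempotent is central.

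For the implication (2) $\Rightarrow$ (1): assuming $\lscr qA = A_q$ for every $q\in\Idp\m A$, I would substitute this into the first and third equalities of Lemma~\ref{lem:char:interdefinability} to obtain, for each $p\in\Idp\m A$,
\[
  A\rd p \;=\; \bigcup_{p\le q\in\Idp\m A} A_q \;=\; \bigcup_{p\le q\in\Idp\m A} \lscr qA \;=\; p\ld A,
\]
and then conclude via Proposition~\ref{prop:equivalent:char:idempositive:central} that $p$ is central; hence $\Idp\m A = \ZIdp\m A$.

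For the implication (1) $\Rightarrow$ (2): assuming every positive idempotent is central, Proposition~\ref{prop:equivalent:char:idempositive:central} gives $A\rd q = q\ld A$ for all $q\in\Idp\m A$, so Lemma~\ref{lem:char:interdefinability} yields $\bigcup_{p\le q}A_q = \bigcup_{p\le q}\lscr qA$ for every fixed $p$. Now fix $p$ and take $a\in A_p$; membership of $a$ in the right-hand union produces some $q\ge p$ with $a\in\lscr qA$, that is, $a\rd a = q$. Since $q$ is itself a central positive idempotent, $q\ld A = A\rd q$, so $a\in q\ld A = A\rd q = \bigcup_{q\le r}A_r$ and therefore $a\in A_r$ for some $r\ge q$. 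Because the classes $\{A_r : r\in\Idp\m A\}$ are pairwise disjoint (the value $a\ld a$ is uniquely determined by $a$) and $a\in A_p$, we must have $r = p$, and then $p\le q\le r=p$ forces $q=p$, so $a\in\lscr pA$. This proves $A_p\subseteq\lscr pA$, and the reverse inclusion follows by the symmetric argument with $\ld$ and $\rd$ interchanged; hence $\lscr pA = A_p$ for all $p$.

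Since the whole argument is a dictionary translation between the proposition and the lemma, I do not expect a genuine obstacle; the one point deserving a little care is the disjointness bookkeeping in (1) $\Rightarrow$ (2), where one passes through a possibly larger positive idempotent $r\ge q\ge p$ and then has to collapse the chain back to $p$. This is essentially the reasoning that already appears in the last item of the proof of Proposition~\ref{prop:equivalent:char:idempositive:central}.
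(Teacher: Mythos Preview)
Your proposal is correct and follows essentially the same route as the paper: the paper declares the corollary an immediate consequence of Proposition~\ref{prop:equivalent:char:idempositive:central} and Lemma~\ref{lem:char:interdefinability}, and the actual argument you give (including the ``collapse the chain $p\le q\le r=p$'' step) is precisely the reasoning that appears in the $(6)\Leftrightarrow(7)$ item of the proposition's proof, which is evidently a leftover from an earlier draft where these two conditions were part of the proposition itself. Your direct version of $(1)\Rightarrow(2)$ and the paper's contradiction version are the same argument.
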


\begin{example}
The product of positive idempotents of a residuated semigroup is not commutative in general. Moreover, for an arbitrary residuated semigroup $\m A$, the set $\Idp\m A$ doesn't need to be closed under multiplication. The residuated semigroup $\m A$ of Figure~\ref{fig:example:Idp:not:commutative} with three positive idempotents $\Idp\m A = \{p,q,r\}$ is an example of the failure of these properties. The element $a$ is positive, but not idempotent. Indeed, $aa = pqpq = prq = rq = r$.
\end{example}

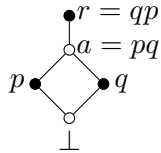
\begin{figure}[ht]\centering
\begin{tikzpicture}[baseline=0pt]
\node(4) at (0,3)[i, label=right:{$r = qp$}]{};
\node(3) at (0,2)[label=right:{$a = pq$}]{} edge (4);
\node(2) at (1,1)[i, label=right:$q$]{} edge (3);
\node(1) at (-1,1)[i, label=left:$p$]{} edge (3);
\node(0) at (0,0)[label=below:$\bot$]{} edge (1) edge (2);
\end{tikzpicture}

\caption{The product of idempotent positives is not commutative in general.}
\label{fig:example:Idp:not:commutative}
\end{figure}

\begin{remark}\label{rem:quotient:of:positives}
It is interesting to notice that for all $p,q\in\Idp\m A$, if $p\le q$, then $q\rd p = q$. Indeed, we have that $q\rd p \le q$ because $p$ is positive, and $q\le q\rd p$ because $qp = q$. Analogously, $p\ld q = q$.
\end{remark}

\section{Decompositions of Balanced Residuated Semigroups}
\label{sec:decompositions}

As we already mentioned above, given a residuated semigroup $\m A$, the families $\{A_p : p\in\Idp\m A\}$ and $\{\lscr pA : {p\in\Idp\m A}\}$ are disjoint by definition and their members are nonempty, because $p\in A_p$ and $p\in \lscr pA$. But, in general, they are not partitions of $A$, since they need not cover all of $A$. They would be so if all the \emph{self-residuals}, i.e., elements of the form $a\ld a$ or $a\rd a$, were positive. This is not an uncommon situation: for instance, if $\m A$ contains a global identity, then all its self-residuals are positive. Reciprocally, if all self-residuals of $\m A$ are positive and $\Idp\m A$ has a smallest element, then it is a global identity of $\m A$.

\begin{lemma}\label{lem:self-residuals:positive}
The conditions below are equivalent in any residuated semigroup~$\m A$.
\begin{enumerate}[(1)]
    \item $\m A$ satisfies $y \le (x\rd x) y$ and $y \le y (x\ld x)$.
    \item All self-residuals in $\m A$ are positive.
    \item $\Idp\m A = \{a\rd a : a\in A\} = \{a\ld a : a\in A\}$.
    \item The families $\{\lscr pA : p\in\Idp\m A\}$ and $\{A_p : p\in\Idp\m A\}$ are partitions of~$A$.
\end{enumerate}
\end{lemma}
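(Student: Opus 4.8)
The plan is to use condition~(2) as the hub of the equivalence. The implication $(2)\Rightarrow(1)$ is immediate: if every element $a\rd a$ and every $a\ld a$ is positive, then in particular $y\le(x\rd x)y$ and $y\le y(x\ld x)$ for all $x,y$. Likewise, once $(1)\Rightarrow(2)$ is established, the remaining equivalences $(2)\Leftrightarrow(3)$ and $(2)\Leftrightarrow(4)$ are essentially bookkeeping on top of Lemmas~\ref{lem:char:A/p=Ap} and~\ref{lem:IdpA:identities}. So the real work is $(1)\Rightarrow(2)$.

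For $(1)\Rightarrow(2)$, assume~(1), fix $c\in A$, and set $p:=c\rd c$; I claim $p$ is positive. By residuation $pc\le c$, hence $p^2c\le pc\le c$ and so $p^2\le c\rd c=p$; conversely, instantiating the first inequality of~(1) with $x:=c$ and $y:=p$ gives $p\le(c\rd c)p=p^2$, so $p$ is idempotent. The same instantiation $x:=c$ yields the left-positivity $y\le(c\rd c)y=py$ for every $y$. The crucial step is to observe that $p\ld p=p$: indeed $p\le p\ld p$ because $p^2\le p$, while $p\ld p\le p(p\ld p)\le p$, the first inequality being left-positivity of $p$ applied to $y:=p\ld p$ and the second being residuation. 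Now feeding $x:=p$ into the \emph{second} inequality of~(1) gives $y\le y(p\ld p)=yp$ for every $y$, which is right-positivity of $p$. Hence $p=c\rd c$ is positive, and the mirror-image argument (interchanging $\rd$ with $\ld$ and the two inequalities of~(1)) shows that $c\ld c$ is positive as well. I expect this to be the only genuine obstacle; the point that makes it work is that under~(1) a right self-residual is forced to coincide with a left self-residual of itself, namely $c\rd c=(c\rd c)\ld(c\rd c)$, and therefore inherits the right-positivity that~(1) guarantees for all left self-residuals.

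For $(2)\Leftrightarrow(3)$: assuming~(2), Lemma~\ref{lem:IdpA:identities} tells us that each $a\rd a$ and each $a\ld a$ is a positive idempotent, so $\{a\rd a:a\in A\}$ and $\{a\ld a:a\in A\}$ are contained in $\Idp\m A$; conversely any $p\in\Idp\m A$ satisfies $p=p\rd p=p\ld p$ (from $p^2=p$ via residuation, together with positivity of $p$, which gives $p\rd p\le p$ and $p\ld p\le p$), so both sets contain $\Idp\m A$, yielding~(3). The converse $(3)\Rightarrow(2)$ is trivial, since~(3) puts every self-residual into $\Idp\m A$ and hence makes it positive.

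For $(2)\Leftrightarrow(4)$: recall that the families $\{A_p:p\in\Idp\m A\}$ and $\{\lscr pA:p\in\Idp\m A\}$ always consist of pairwise disjoint nonempty sets (since $p\in A_p$ and $p\in\lscr pA$), so each of them is a partition of $A$ precisely when it covers $A$. If~(2) holds, then for each $a\in A$ the element $a\ld a$ is a positive idempotent, hence lies in $\Idp\m A$, so $a\in A_{a\ld a}$ with a legitimate index, and symmetrically $a$ lies in $\lscr pA$ for $p:=a\rd a\in\Idp\m A$; thus both families cover $A$, giving~(4). Conversely, if $\{A_p:p\in\Idp\m A\}$ covers $A$, then every $a\in A$ satisfies $a\ld a=p$ for some $p\in\Idp\m A$, so $a\ld a$ is positive; using the covering by the sets $\lscr pA$ the same way shows each $a\rd a$ is positive, which is~(2).
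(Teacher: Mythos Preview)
Your proof is correct and follows essentially the same route as the paper's. In particular, the key step $(1)\Rightarrow(2)$ hinges on exactly the same observation the paper makes: for $p:=c\rd c$ one uses the first inequality of~(1) to obtain $p\ld p\le p(p\ld p)\le p$, and then instantiates the second inequality of~(1) at $x:=p$ to conclude $y\le y(p\ld p)\le yp$; your version is in fact marginally tidier since you establish the equality $p\ld p=p$ and substitute directly, whereas the paper carries only the inequality $p\ld p\le p$ through one extra monotonicity step. The remaining equivalences are organized slightly differently (you use~(2) as a hub, the paper runs a cycle $(2)\Rightarrow(3)\Rightarrow(4)\Rightarrow(2)$), but the content is identical.
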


\begin{proof}
\begin{itemize}[topsep=0pt, align=left, left=0pt .. \parindent]
\item[(1) $\Leftrightarrow$ (2)] All we have to show is that the inequalities of (1) imply $y \le y(x\rd x)$ and $y \le (x\ld x)y$. Using the first inequality of (1) and the residuation properties, we have
\[
(x\rd x)\ld (x\rd x) \le (x\rd x)\big((x\rd x)\ld (x\rd x)\big) \le x\rd x \le y\ld(y(x\rd x)).
\]
By the second inequality of (1) and the residuation properties applied to the displayed inequality, we obtain $y \le y \big((x\rd x)\ld (x\rd x)\big) \le y(x\rd x)$. The proof that $y \le (x\ld x)y$ is analogous.

\item[(2) $\Rightarrow$ (3)] We have already discussed that if $p\in\Idp\m A$, then $p = p\rd p$. Additionally, if all the self-residuals are positive, then they are also idempotent, by Lemma~\ref{lem:IdpA:identities}. Thus, $\Idp\m A = \{a\rd a : a\in A\}$. The other equality is proven in the same way.

\item[(3) $\Rightarrow$ (4)] For every $a\in A$, the elements $p = a\ld a$ and $q = a\rd a$ are in $\Idp\m A$ and $a\in A_p$ and $a\in \lscr qA$, whence the result follows.

\item[(4) $\Rightarrow$ (2)] If $\{A_p : p\in\Idp\m A\}$ is a partition of $A$, then $a\in A_p$, for some $p\in\Idp\m A$, that is, $a\ld a = p \in\Idp\m A$ and in particular $a\ld a$ is positive. Analogously, if $\{\lscr pA : p\in\Idp\m A\}$ is a partition of $A$, then for every $a\in A$, $a\rd a$ is positive.\QED
\end{itemize}\let\QED\relax
\end{proof}

A residuated semigroup is \emph{balanced} if it satisfies the identity $x\ld x \approx x\rd x$ and all its self-residuals are positive. The following proposition gives a number of conditions equivalent to being balanced.

\begin{proposition}\label{prop:equivalent:char:balanced}
The following conditions are equivalent for any residuated semigroup~$\m A$.
\begin{enumerate}[(1)]
    \item $\m A$ is balanced.
    \item $\mathbf{A}\models x\ld x \approx x\rd x$ and $\mathbf{A}\models y\le (x\rd x)y$ and $\mathbf{A}\models y\le y(x\ld x)$.
    \item $\m A\models (x\ld x)y\approx y(x\ld x)$ and $\m A\models y\le y(x\ld x)$. 
    \item $\m A\models (x\rd x)y\approx y(x\rd x)$ and $\m A\models y\le (x\rd x)y$.    
\end{enumerate}
\end{proposition}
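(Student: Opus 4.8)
The plan is to split the four-way equivalence into three pieces, two of which are essentially free. The equivalence $(1)\Leftrightarrow(2)$ is immediate: condition $(1)$ asks for $x\ld x\approx x\rd x$ together with all self-residuals being positive, and by Lemma~\ref{lem:self-residuals:positive} the latter is precisely the conjunction of $y\le(x\rd x)y$ and $y\le y(x\ld x)$, which is the remaining content of $(2)$. Likewise, $(4)$ is obtained from $(3)$ by the left--right duality that sends a residuated semigroup $\m A=\pair{A,\le,\cdot,\ld,\rd}$ to its opposite $\pair{A,\le,\cdot^{\op},\rd,\ld}$, where $\cdot^{\op}$ reverses products and the roles of $\ld$ and $\rd$ are interchanged; since $(2)$ is visibly invariant under this duality (its two inequalities get swapped, and the equation is self-dual), the equivalence $(2)\Leftrightarrow(4)$ will follow once $(2)\Leftrightarrow(3)$ is established. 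So the real work is $(2)\Leftrightarrow(3)$.

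For $(2)\Rightarrow(3)$: under $(2)$ the common value $p:=x\ld x=x\rd x$ is positive (the two inequalities of $(2)$ read $y\le py$ and $y\le yp$) and idempotent ($pp\le p$ by the residuation chain $ppx\le px\le x$, and $p\le pp$ by positivity). It then suffices to show $p$ is central, for which I would invoke the criterion $(1)\Leftrightarrow(2)$ of Proposition~\ref{prop:equivalent:char:idempositive:central}: if $pa=a$ then $p\le a\rd a$ by residuation, hence $p\le a\ld a$ by the equation of $(2)$, hence $ap\le a$ by residuation and so $ap=a$ by positivity; the converse is symmetric. Thus $(x\ld x)y=py=yp=y(x\ld x)$, and $y\le y(x\ld x)$ is already part of $(2)$.

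For $(3)\Rightarrow(2)$, the heart of the argument: the two conjuncts of $(3)$ say that $p:=x\ld x$ is central and satisfies $y\le yp$, hence also $y\le py$, so $p$ is a positive (central, idempotent) element with $px=xp=x$; from $px\le x$ residuation gives $x\ld x=p\le x\rd x$. Next, since residuals are order-reversing in the denominator and $x\ld x$ is positive (so $(x\ld x)\ld a\le a$ and $a\rd(x\ld x)\le a$ for all $a$), the inequality $x\ld x\le x\rd x$ yields $(x\rd x)\ld a\le a$ and $a\rd(x\rd x)\le a$, i.e.\ $x\rd x$ is positive. A short computation then shows $(x\rd x)\ld(x\rd x)=x\rd x$: the direction $x\rd x\le(x\rd x)\ld(x\rd x)$ amounts to $(x\rd x)(x\rd x)\le x\rd x$ (the usual chain), and the reverse follows from positivity of $x\rd x$. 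The crucial move is now to feed $q:=x\rd x$ back into the centrality equation of $(3)$: instantiating $(s\ld s)t\approx t(s\ld s)$ at $s:=q$ and using $q\ld q=q$ gives $qt=tq$ for all $t$, so $x\rd x$ is central. In particular $xq=qx$; but $qx=(x\rd x)x\le x$ and $x\le qx$ by positivity of $q$, so $qx=x$, hence $xq=x$, and residuation yields $q=x\rd x\le x\ld x$. Therefore $x\ld x\approx x\rd x$, this common value is positive, and $(2)$ holds.

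I expect the main obstacle to be exactly this last step. Condition $(3)$ only directly constrains \emph{left} self-residuals, so obtaining the reverse inequality $x\rd x\le x\ld x$ requires first extracting enough structure on $x\rd x$ (positivity, and the identity $(x\rd x)\ld(x\rd x)=x\rd x$) to be entitled to substitute $x\rd x$ for the variable in the centrality equation of $(3)$; only after $x\rd x$ is thereby shown to be central does the desired inequality fall out of $xq=qx=x$ by residuation. Everything else is routine residuation bookkeeping together with citations of Lemma~\ref{lem:self-residuals:positive} and Proposition~\ref{prop:equivalent:char:idempositive:central}.
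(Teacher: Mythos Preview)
Your argument is correct. The equivalences $(1)\Leftrightarrow(2)$ and $(2)\Rightarrow(3)$ match the paper's proof, as does the duality reduction of $(4)$ to $(3)$ (the paper simply says ``analogous''). The genuine difference is in $(3)\Rightarrow(2)$.

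The paper gets $x\ld x\le x\rd x$ and positivity of all self-residuals by essentially your first few lines, but then finishes by invoking the partition machinery: since every $p\in\Idp\m A$ has the form $a\ld a$ (Lemma~\ref{lem:self-residuals:positive}) and each such $p$ is central by~(3), Proposition~\ref{prop:equivalent:char:idempositive:central} yields $\lscr pA=A_p$ for all $p$, whence the two partitions $\{A_p\}$ and $\{\lscr pA\}$ coincide and $x\ld x\approx x\rd x$. Your route is more self-contained: you establish $(x\rd x)\ld(x\rd x)=x\rd x$ and then substitute $q=x\rd x$ back into the centrality equation of~(3), obtaining centrality of $q$ directly and hence $xq=qx=x$, which residuates to $x\rd x\le x\ld x$. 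This avoids the appeal to the $A_p$/$\lscr pA$ partitions entirely and is arguably cleaner as a standalone calculation; the paper's version, on the other hand, ties the equivalence more explicitly to the structural picture of central positive idempotents that drives the rest of the paper.
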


\begin{proof}
First, notice that, by virtue of Lemma~\ref{lem:self-residuals:positive}, condition~(2) is equivalent to $\m A$ being balanced. Next, we will show that condition~(3) also implies that all the self-residuals are positive. Take arbitrary elements $a,b\in A$. Using the first equation of condition~(3), we have that $(a\ld a)a = a(a\ld a) \le a$, and therefore $a\ld a \le a\rd a$. Multiplying by $b$ and applying the equation and the inequation, we obtain
\[
b \le b(a\ld a) = (a\ld a)b \le (a\rd a)b.
\]
That is, $\m A$ also satisfies $y\le (x\rd x)y$ and therefore all the self-residuals are positive by Lemma~\ref{lem:self-residuals:positive}.

In order to show that~(2) and~(3) are equivalent, take arbitrary $a,b\in A$ and let $p = a\ld a$, which is positive in both cases, and therefore $p\in\Idp\m A$. If $\m A$ satisfies $x\ld x \approx x\rd x$, we have $\lscr pA = A_p$ and, by Corollary~\ref{cor:e=ze}, $pb = bp$. Thus, $\m A$ satisfies $(x\ld x)y \approx y(x\ld x)$. And if $\m A$ satisfies $(x\ld x)y \approx y(x\ld x)$, then $pb=bp$ and, by Corollary~\ref{cor:e=ze}, we have $\lscr pA = A_p$. Hence, $\m A$ satisfies $x\ld x \approx x\rd x$. The proof that conditions~(2) and~(4) are equivalent is analogous.
\end{proof}

In what follows, we will denote the term $x\rd x$ by $1_x$. Notice that, as we argued in the proof of Lemma~\ref{lem:IdpA:identities}, in every residuated semigroup $\m A$, $p\rd p = p$ for every $p\in\Idp\m A$, that is, $1_p = p$. Hence, if all the self-residuals in a residuated semigroup are positive, then $1_{1_x}\approx 1_x$ holds, as well as $1_x\cdot x\approx x$. The following results describe many more properties of the terms of the form $1_x$ in residuated semigroups where all these terms are central and, in particular, in all balanced residuated semigroup. The term $1'_x = x\ld x$ has similar properties. 

\begin{lemma}\label{lem:inequalities:selfresiduals:central}
The following inequations hold in all residuated semigroups that satisfy $1_x\cdot y \approx y\cdot 1_x$.
\[
1_x \le 1_{xy},\quad
1_y \le 1_{xy},\quad
1_x \le 1_{x\rd y},\quad
1_y \le 1_{x\rd y},\quad
1_x \le 1_{y\ld x},\quad
1_y \le 1_{y\ld x}.
\]
As a consequence, the following inequations also hold.
\[
1_x\cdot 1_y \le 1_{xy},\quad 1_x\cdot 1_y\le 1_{x\rd y},\quad 1_x\cdot 1_y \le 1_{y\ld x}.
\]
\end{lemma}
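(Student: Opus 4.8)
The plan is to reduce everything to one observation: since $1_v = v\rd v$, residuation gives that, for any $u$, we have $1_u \le 1_v$ if and only if $1_u\cdot v \le v$. So it suffices to show $1_x\cdot w \le w$ and $1_y\cdot w \le w$ for each of the three terms $w\in\{\, xy,\ x\rd y,\ y\ld x\,\}$. I will use repeatedly two elementary facts, valid in every residuated semigroup: that $1_z\cdot z \le z$ --- immediate from $1_z = z\rd z$ and residuation --- and hence, multiplying by $z$ and residuating once more, that $1_z\cdot 1_z \le 1_z$ (this subidempotency was already noted in the proof of Lemma~\ref{lem:IdpA:identities}).

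For the two cases $1_x\cdot(xy)\le xy$ and $1_x\cdot(x\rd y)\le x\rd y$ the hypothesis is not needed: the first is $(1_x x)y \le xy$ by associativity and monotonicity of the product, and for the second one residuates and computes $\bigl(1_x(x\rd y)\bigr)y = 1_x\bigl((x\rd y)y\bigr) \le 1_x\cdot x \le x$. The remaining four cases use the centrality hypothesis $1_z\cdot w\approx w\cdot 1_z$ to slide the relevant self-residual next to the variable it ``belongs to'' and then absorb it via $1_z z \le z$. For instance $1_y(xy) = (1_y x)y = (x\,1_y)y = x(1_y y) \le xy$; the case $1_y(x\rd y)\le x\rd y$ follows by residuation from $1_y(x\rd y) = (x\rd y)1_y$ together with $\bigl((x\rd y)1_y\bigr)y = (x\rd y)(1_y y) \le (x\rd y)y \le x$; and for $w = y\ld x$ one multiplies on the left by $y$, residuates, and computes $y\bigl(1_x(y\ld x)\bigr) = (1_x y)(y\ld x) = 1_x\bigl(y(y\ld x)\bigr) \le 1_x\cdot x \le x$ and likewise $y\bigl(1_y(y\ld x)\bigr) = (1_y y)(y\ld x) \le y(y\ld x) \le x$. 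Each step uses only associativity, monotonicity of the product, the residuation laws, and --- where indicated --- a single application of the hypothesis.

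The three ``consequence'' inequalities are then formal: from $1_x \le 1_w$ and $1_y \le 1_w$ we obtain $1_x\cdot 1_y \le 1_w\cdot 1_w \le 1_w$ by monotonicity of the product and the subidempotency $1_w\cdot 1_w\le 1_w$ noted above, taking $w$ to be $xy$, $x\rd y$, and $y\ld x$ in turn. I expect no genuine difficulty here; the only point that needs care, in the four cases that use the hypothesis, is to choose the correct side on which to multiply when residuating, so that after one use of centrality some $1_z$ ends up immediately to the left of $z$ and can be cancelled. It is worth stressing that we never use $z \le 1_z z$ (which would require $1_z$ to be positive and is not available under this weaker hypothesis): only $1_z z \le z$ and $1_z 1_z \le 1_z$ are needed.
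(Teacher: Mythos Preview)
Your proof is correct and follows essentially the same approach as the paper: both reduce each inequality $1_u \le 1_w$ to $1_u\cdot w \le w$ via residuation, then verify this by sliding $1_u$ (using associativity and, where needed, the centrality hypothesis) next to the variable it can absorb. Your write-up is in fact slightly more careful than the paper's on one point: for the consequence inequalities you correctly observe that only subidempotence $1_w\cdot 1_w \le 1_w$ is required, whereas the paper invokes ``idempotence of the identities,'' which under this weak hypothesis has not been established (though subidempotence of course suffices).
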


\begin{proof}
For all $a,b\in A$, the inequality $1_a\cdot ab \le ab$ implies that $1_a\le ab\rd ab = 1_{ab}$. Analogously, $1_b\cdot ab = a\cdot 1_b\cdot b \le ab$ implies that $1_b\le ab\rd ab = 1_{ab}$. Also, $1_a(a\rd b)b\le 1_a \cdot a \le a$ and therefore $1_a \le (a\rd b)\rd(a\rd b) = 1_{a\rd b}$. Similarly, we have that
\[
1_b (a\rd b) b = (a\rd b) 1_b\cdot b \le (a\rd b)b \le a,
\]
and therefore $1_b \le (a\rd b)\rd (a\rd b) = 1_{a\rd b}$. The proofs of $1_x \le 1_{y\ld x}$ and $1_y \le 1_{y\ld x}$ are similar. The remaining inequalities are a consequence of the previous ones, the monotonicity of the product, and the set of elements below a self-residual being closed under products.
\end{proof}

The partitions $\{\lscr pA : p\in\Idp\m A\}$ and $\{A_p : p\in\Idp\m A\}$ of a balanced residuated semigroup coincide, by Proposition~\ref{prop:equivalent:char:balanced} and Lemma~\ref{lem:self-residuals:positive}. We shall now investigate the case where all these equivalence classes are universes of subalgebras of the residuated semigroup~$\m A$.  We need to impose conditions which ensure that each one of these classes is closed under products and residuals:
\begin{enumerate}[(H1),leftmargin=*,series=H]
    \item\label{H1} $1_x \approx 1_y \implies 1_{xy} \approx 1_x$,
    \item\label{H2} $1_x \approx 1_y \implies 1_{x\rd y} \approx 1_x$,
    \item\label{H3} $1_x \approx 1_y \implies 1_{x\ld y} \approx 1_x$.
\end{enumerate}

We will refer collectively to all these quasiequations as \emph{condition}~(H){\expandafter\def\csname @currentlabel\endcsname{(H)}\label{cond:H}}. But first, we show that they are not completely independent.

\begin{proposition}\label{prop:H2:iff:H3:imply:H1}
In the po-variety of residuated semigroups satisfying $1_x\cdot y \approx y\cdot 1_x$, each one of the quasiequations~\ref{H2} and~\ref{H3} imply~$x\ld x \approx x\rd x$. Moreover, the quasiequations~\ref{H2} and~\ref{H3} are equivalent and they imply~\ref{H1}.
\end{proposition}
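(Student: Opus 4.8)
The plan is to reduce every clause to the single inequality $x\ld x\le x\rd x$, dispatch it under \ref{H3} by a direct computation, and then bootstrap the equivalence of \ref{H2} and \ref{H3} (which also settles $x\ld x\approx x\rd x$ under \ref{H2}) and finally \ref{H1}.

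I would begin with the observation that in any residuated semigroup satisfying $1_x\cdot y\approx y\cdot 1_x$ the inequality $x\rd x\le x\ld x$ holds unconditionally: by residuation it amounts to $x\cdot(x\rd x)\le x$, and $x\cdot(x\rd x)=(x\rd x)\cdot x\le x$, using centrality of $x\rd x$ for the equality and residuation for the inequality. So in each clause it suffices to prove the reverse inequality $x\ld x\le x\rd x$. Under \ref{H3} this goes through at once: applying $x\cdot(x\ld x)\le x$ twice gives $x\cdot(x\ld x)(x\ld x)\le x$, hence $(x\ld x)(x\ld x)\le x\ld x$ and then $x\ld x\le(x\ld x)\rd(x\ld x)=1_{x\ld x}$, both steps by residuation; and feeding the trivially true premise $1_x\approx 1_x$ into \ref{H3} with $y:=x$ yields $1_{x\ld x}\approx 1_x=x\rd x$. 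Combining, $x\ld x\le x\rd x$, so $x\ld x\approx x\rd x$.

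For the equivalence \ref{H2} $\Leftrightarrow$ \ref{H3} and for \ref{H1}, Lemma~\ref{lem:inequalities:selfresiduals:central} already supplies one half of every self-residual equation in play — $1_x,1_y\le 1_{x\rd y}$, $1_x,1_y\le 1_{x\ld y}$, $1_x,1_y\le 1_{xy}$ — so only the reverse inequalities remain. With $x\ld x\approx x\rd x$ available (from \ref{H3} by the previous paragraph, and from \ref{H2} once the equivalence is established), the two families of blocks coincide and both self-residual operations are central; fixing $p:=1_a=1_b$, I would extract the reverse inequalities by residuation computations anchored on the identities $(a\rd b)\cdot b\le a$, $a\cdot(a\ld b)\le b$, $a\le ab\rd b$, $b\le a\ld(ab)$, and then re-applying \ref{H2} and \ref{H3} to the subterms that arise. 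In this way \ref{H3} $\Rightarrow$ \ref{H2} and, dually, \ref{H2} $\Rightarrow$ \ref{H3} (closure of $A_p$ under each residual), after which \ref{H1} (closure of $A_p$ under the product) follows by relating $ab$ to the residuals above.

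The step I expect to be the genuine obstacle is \ref{H2} $\Rightarrow$ \ref{H3}, equivalently, deriving $x\ld x\le x\rd x$ from \ref{H2} alone. Unlike \ref{H3}, the quasiequation \ref{H2} is vacuous when $y:=x$, so one cannot copy the short argument above; the proof has to locate a less obvious pair of terms with equal $1$-values to which \ref{H2} can be applied so as to pinch $1_{x\ld x}$ (or, in the equivalence, $1_{a\ld b}$) down to the target value. Once that reverse inequality is in hand, everything else reduces to Lemma~\ref{lem:inequalities:selfresiduals:central} together with routine manipulation of the residuation equivalences.
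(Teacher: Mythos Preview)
Your treatment of \ref{H3} $\Rightarrow$ $x\ld x\approx x\rd x$ is correct and is exactly the paper's argument. The remainder of your outline is also in the right spirit, but the proposal has a genuine gap precisely where you flag it: you do not show $x\ld x\le x\rd x$ from \ref{H2}, and your plan for the equivalence is circular without this (you invoke ``$x\ld x\approx x\rd x$ \dots\ from \ref{H2} once the equivalence is established'').

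The source of the difficulty is a small mistake in your diagnosis. You say \ref{H2} is ``vacuous when $y:=x$''; it is not. Taking $y:=x$ in \ref{H2} gives $1_{x\rd x}\approx 1_x$, i.e.\ $1_{1_x}\approx 1_x$, and this is exactly the ingredient the paper uses to close the gap. The paper's computation runs as follows. From $1_{a\ld a}\cdot a\cdot(a\ld a)=a\cdot 1_{a\ld a}\cdot(a\ld a)\le a(a\ld a)\le a$ (centrality of $1_{a\ld a}$) one residuates twice to get $1_{a\ld a}\le (a\rd(a\ld a))\rd a\le (a\rd 1_a)\rd a$, the last step by antitonicity since $1_a\le a\ld a$. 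Then
\[
a\ld a\ \le\ 1_{a\ld a}\ \le\ (a\rd 1_a)\rd a\ =\ a\rd(a\cdot 1_a)\ =\ a\rd(1_a\cdot a)\ =\ (a\rd a)\rd 1_a\ =\ 1_{1_a}\ =\ 1_a,
\]
the final equality being \ref{H2} at $y:=x:=a$. So the ``less obvious pair of terms'' you were looking for is simply $x:=y:=a$, applied after rewriting $1_{a\ld a}$ as (an element below) $1_{1_a}$ via the displayed chain.

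Once $x\ld x\approx x\rd x$ is in hand under \ref{H2}, your vague plan for the equivalence can be made precise along the paper's lines: using $(a\ld b)\rd c=a\ld(b\rd c)$, one has $(a\ld b)\rd b=a\ld 1_b=a\ld 1_a=(a\ld a)\rd a=1_a\rd a$, whence by Lemma~\ref{lem:inequalities:selfresiduals:central} and \ref{H2} (with $1_{1_a}=1_a$) one gets $1_{a\ld b}\le 1_{(a\ld b)\rd b}=1_{1_a\rd a}=1_a$, and the reverse inequality is already in the lemma. The derivation of \ref{H1} from \ref{H2} similarly needs a concrete computation rather than the generic identities you list; the paper obtains it by bounding $ab\le(1_a\rd a)\ld b$ and then showing $1_{ab}\le 1_{1_a\rd a}=1_a$.
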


\begin{proof}
Let $a$ be an arbitrary element. Since $a1_a = 1_a\cdot a \le a$, we deduce that $1_a \le a\ld a$. Also, the inequality $a(a\ld a)(a\ld a) \le a$ implies $a\ld a \le (a\ld a)\rd (a\ld a) = 1_{a\ld a}$. Notice that $1_{a\ld a} = 1_a$ is a consequence of~\ref{H3}, and therefore~\ref{H3} implies $a\ld a = a\rd a$. Furthermore, $1_{a\ld a}\cdot a(a\ld a) = a1_{a\ld a}\cdot (a\ld a) \le a(a\ld a) \le a$, and therefore $1_{a\ld a} \le (a\rd (a\ld a))\rd a \le (a\rd 1_a)\rd a$, because $1_a \le a\ld a$. Hence,
\[
a\ld a \le 1_{a\ld a} \le (a\rd 1_a)\rd a = a\rd (a1_a) = a\rd (1_a\cdot a) = (a\rd a)\rd 1_a = 1_{a\rd a} = 1_a,
\]
where the last equation is a consequence of~\ref{H2}. Thence, \ref{H2} also implies $a\ld a = a\rd a$.

In order to prove that~\ref{H2} implies~\ref{H3}, suppose that $a,b\in A$ are such that $1_a = 1_b$. Then,
\[
1_{a\ld b} \le 1_{a\ld b\rd b} = 1_{a\ld a\rd a} = 1_{1_a\rd a} = 1_a = 1_b \le 1_{a\ld b},
\]
where the inequalities follow from Lemma~\ref{lem:inequalities:selfresiduals:central}, 
the first equality holds since $b\rd b=1_b=1_a=a\rd a$,
and the third equality follows from~\ref{H2}, since $1_{1_a} = 1_a$. The proof that~\ref{H3} implies~\ref{H2} is analogous.

In order to prove that~\ref{H2} implies~\ref{H1}, suppose again that $a,b\in A$ are such that $1_a = 1_b$. Since $(1_a\rd a)aa \le 1_a\cdot a \le a$, we deduce that $(1_a\rd a)a \le a\rd a = 1_a = 1_b = b\rd b$, and therefore $(1_a\rd a)ab \le b$. Hence, $ab \le (1_a\rd a)\ld b$. Thus,
\begin{align*}
1_{ab} &= ab\rd ab \le (1_a\rd a)\ld b\rd ab = (1_a\rd a)\ld (b\rd b) \rd a = (1_a\rd a)\ld 1_b \rd a \\ 
&= (1_a\rd a)\ld (1_a\rd a) = 1_{1_a\rd a} = 1_a \le 1_{ab},
\end{align*}
where the last equalities follow from~\ref{H2} and the fact that $1_{1_a}=1_a$ and the last inequality follows from Lemma~\ref{lem:inequalities:selfresiduals:central}.
\end{proof}

For every balanced residuated semigroup $\m A$ and every $p\in\Idp\m A$, we can consider the structure $\m A_p = \pair{A_p,\le_p,\cdot_p,\ld_p,\rd\!_p,1_p}$, where the relation $\le_p$ and each of its operations are the restrictions of the corresponding operations of~$\m A$. Condition~\ref{cond:H} ensures that all the operations are total in every $\m A_p$. Not only is the element $1_p$ a global identity of $\m A_p$, but we show that it is the only positive idempotent of $\m A_p$. This property may be expressed equationally: a residuated monoid is called \emph{integrally closed}~\cite{IntegrallyclosedRL} if it satisfies the equation $x \ld x \approx 1$, or equivalently the equation $x \rd x \approx 1$. Both of these equations are equivalent to $1$ being the only positive idempotent, since $a \ld a$ and $a \rd a$ are positive idempotents for each element~$a$, and conversely $a \ld a = a = a \rd a$ for each positive idempotent~$a$. The equivalence between the equations $x \ld x \approx 1$ and $x \rd x \approx 1$ also follows directly from the equations $(x \rd x) \ld (x \rd x) \approx x \rd x$ and $x \ld x \approx (x \ld x) \rd (x \ld x)$ which hold in all residuated monoids.

\begin{proposition}\label{prop:semilattice:sum:decomp:H123}
Every balanced residuated semigroup $\m A$ satisfying condition~\ref{cond:H} decomposes as a family $\{\m A_p : p\in\Idp\m A\}$ of disjoint integrally closed residuated monoids such that the constant-free reduct of $\m A_p$ is a subalgebra of $\m A$ and the monoidal unit of $\m A_p$ is $p$. Finally, $a \ld a = p = a \rd a$ holds for all $a \in A_p$ by the definition of $A_p$.
\end{proposition}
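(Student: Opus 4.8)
The plan is to verify, in this order: (i) the sets $A_p$ partition $A$; (ii) each $A_p$ is closed under $\cdot$, $\ld$, $\rd$; (iii) the structure $\m A_p = \pair{A_p,\le_p,\cdot_p,\ld_p,\rd_p,p}$ is a residuated monoid with unit $p$; and (iv) each $\m A_p$ is integrally closed. Since $\m A$ is balanced, Proposition~\ref{prop:equivalent:char:balanced} together with Lemma~\ref{lem:self-residuals:positive} gives that every self-residual is positive, so $1_x \approx x\rd x \approx x\ld x$ is a positive idempotent for each $x$, the equation $1_x \cdot x \approx x$ holds, and the family $\{A_p : p\in\Idp\m A\}$ (which coincides with $\{\lscr pA : p\in\Idp\m A\}$) is a partition of $A$ into nonempty, pairwise disjoint sets, with $p\in A_p$ because $1_p = p$. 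Fix $p\in\Idp\m A$ and note that $a\in A_p$ is equivalent to $1_a = p$.

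The closure step is the one that actually uses the hypotheses. Given $a,b\in A_p$, so $1_a = p = 1_b$, conditions~\ref{H1},~\ref{H2} and~\ref{H3} yield $1_{ab} = 1_a = p$, $1_{a\rd b} = p$ and $1_{a\ld b} = p$, hence $ab,\ a\rd b,\ a\ld b\in A_p$. Therefore the constant-free reduct $\pair{A_p,\le_p,\cdot_p,\ld_p,\rd_p}$, with all operations and the order inherited from $\m A$, is a (po-)subalgebra of $\m A$: associativity of $\cdot_p$ is inherited, and $\le_p$ is a partial order as the restriction of one.

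For (iii), $p\in A_p$ was already noted, and $p$ is a global identity of $\m A_p$: for $a\in A_p$ we have $p = 1_a$, so $p\cdot a = 1_a\cdot a = a$, while $a\cdot p = a\cdot(a\ld a)$, which is $\le a$ by residuation and $\ge a$ by positivity of $p$, so $a\cdot p = a$ (equivalently, invoke that positive idempotents are central in balanced residuated semigroups). The residuation equivalences for $\m A_p$ are then immediate from those of $\m A$: for $a,b,c\in A_p$ all three of $a\cdot_p b$, $c\rd_p b$, $a\ld_p c$ lie in $A_p\subseteq A$ and $\le_p$ is the restriction of $\le$, so $a\cdot_p b\le_p c \iff a\le_p c\rd_p b \iff b\le_p a\ld_p c$ is nothing but the residuation law of $\m A$ restricted to $A_p$. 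Thus $\m A_p$ is a residuated monoid with monoidal unit $p$.

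Finally, for (iv): since residuals in $\m A_p$ agree with those in $\m A$, for every $a\in A_p$ we get $a\ld_p a = a\ld a = p = 1_p$, so $\m A_p\models x\ld x\approx 1$, which is exactly integral closedness (equivalently, $p$ is the unique positive idempotent of $\m A_p$); and $a\ld a = p = a\rd a$ for $a\in A_p$ is immediate from the definition of $A_p$ and balancedness. The only genuine content beyond ``restrict the structure of $\m A$ and check nothing escapes $A_p$'' is the closure step of the second paragraph, which is precisely where condition~\ref{cond:H} is indispensable; everything else is routine verification that restrictions of partial orders, associative operations, and residuation equivalences behave as expected, together with the minor point that $1_a$ must be shown to act as a two-sided — not merely left — identity on its class.
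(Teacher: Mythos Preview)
Your proof is correct and follows the same approach as the paper's own argument: use balancedness and Lemma~\ref{lem:self-residuals:positive} to obtain the partition, apply the quasiequations~\ref{H1}--\ref{H3} to get closure of each $A_p$ under the operations, and then check that $p$ is a two-sided identity on $A_p$. The paper's proof is terser---it omits your explicit verification of the residuation law in $\m A_p$ and of integral closedness---but the underlying logic is identical.
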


\begin{proof}
The partition $\{A_p : p\in\Idp\m A\}$ is determined by the equivalence relation given by $a\equiv b$ if and only if $1_a = 1_b$. Therefore, the quasiequations~\ref{cond:H} express the conditions that every equivalence class $A_p$ is closed under products and residuals. Accordingly, it is an ordered subalgebra of $\m A$. Moreover, for every $a\in A_p$ we have that $p a = (a \ld a) a = a$, so $p$ is a global identity on $\m A_p$.
\end{proof}

In the context of residuated semigroups rather than residuated monoids, the condition that there is only one positive idempotent and it is the global identity can be expressed by the equations $y (x \ld x) \approx y \approx (x \ld x) y$, or equivalently by the equations $y (x \rd x) \approx y \approx (x \rd x) y$. In that case, the above decomposition is trivial in the sense that it only consists of one algebra $\m A_1$ and $\m A_1 = \widehat{\m A}$, that is, $\m A_1$ is the expansion of $\m A$ by adding its global identity as a constant.

\begin{example}\label{ex:RP:H123}
The left poset $\pair{A,\le}$ of Figure~\ref{fig:H123:but:not:H456} can be equipped with a commutative idempotent multi\-plication, namely, the meet operation of the poset $\pair{A,\sqsubseteq}$ on the right. One can check that this multiplication preserves all joins of $\pair{A,\le}$ and therefore it is residuated. In this way, we obtain a commutative, and therefore balanced, residuated poset $\m A$, where $\Idp\m A = \{1,p,q\}$ and $A_1 = \{1,a\}$, $A_p = \{p\}$, and $A_q = \{q,b,\bot\}$. These sets are indeed closed under residuals, and therefore~$\m A$ satisfies~\ref{cond:H}, by Proposition~\ref{prop:H2:iff:H3:imply:H1}.
\end{example}

\begin{figure}[ht]\centering
\begin{tikzpicture}[baseline=0pt]
\node at (0,-2)[n]{$\le$};
\node(4) at (0,3)[i, label=left:$q$]{};
\node(3) at (0,2)[i, label=left:$p$]{} edge (4);
\node(2) at (1,1)[label=right:$b$]{} edge (3);
\node(1) at (-1,1)[i, label=left:$1$]{} edge (3);
\node(0) at (0,0)[label=right:$a$]{} edge (1) edge (2);
\node(-1) at (0,-1)[label=right:$\bot$]{} edge (0);
\end{tikzpicture}
\qquad\qquad
\begin{tikzpicture}[baseline=0pt]
\node at (0,-2)[n]{$\sle$};
\node(4) at (0,3)[label=right:$1$]{};
\node(3) at (1,2)[label=right:$p$]{} edge (4);
\node(2) at (-1,1.5)[label=left:$a$]{} edge (4);
\node(1) at (1,1)[label=right:$q$]{} edge (3);
\node(0) at (0,0)[label=right:$b$]{} edge (1) edge (2);
\node(-1) at (0,-1)[label=right:$\bot$]{} edge (0);
\end{tikzpicture}

\caption{Residuated poset satisfying condition~\ref{cond:H}.}
\label{fig:H123:but:not:H456}
\end{figure}
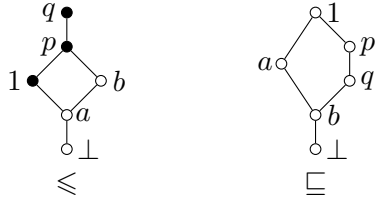

\section{P\l{}onka Sums of Directed Systems of Metamorphisms}\label{plonka}

In the last section we obtained a decomposition result establishing that, under certain minimal conditions, balanced residuated semigroups decompose as families of disjoint subalgebras with exactly one positive idempotent, namely, a global identity. If we want to obtain a ``composition result'', everything points towards the use of P\l{}onka-style constructions, which we explain in the next paragraphs.

Given a join-semilattice of \emph{indices} $\m I = (I, \lor)$, a family of homomorphisms $\Phi = \{\varphi_{pq}\: \m A_p\to \m A_q : p \jle q \text{ in } \m I\}$ between algebras of the same type $\tau$ is said to be a \emph{semilattice directed system} if $\varphi_{pp}$ is the identity on~$\m A_p$ for every index $p$ and $\varphi_{qr}\circ\varphi_{pq} = \varphi_{pr}$ for all indices $p \jle q \jle r$. In other words, a semilattice directed system is a functor $\Phi\: \m I\to \Alg^\tau$, where $\m I$ is understood as a skeletal and thin category with binary coproducts and $\Alg^\tau$ is the category of algebras of type~$\tau$ and homomorphisms.

The \emph{P\l{}onka sum} of a semilattice directed system $\Phi$ is an algebra $\m A$ of the same type, provided that the join-semilattice $\m I$ has a least element $\bot$ if the type contains some constant symbol, defined on the disjoint union of the universes $A = \biguplus_{i\in I} A_i$. For each $n$-ary operation symbol $\sigma$ with $n > 1$ and elements $a_1\in A_{p_1}$, \dots, $a_n\in A_{p_n}$,
\[
\sigma^{\m A}(a_1,\dots,a_n) := \sigma^{\m A_q}(\varphi_{p_1 q}(a_1),\dots,\varphi_{p_n q}(a_n)), \qquad \text{ where } q := p_1\lor \dots\lor p_n.
\]
Moreover, $\omega^{\m A} := \omega^{\m A_\bot}$ for each constant symbol $\omega$. We may more generally admit P\l{}onka sums of algebras of a type which contains a constant symbol even when $\m I$ does not have a least element, with the understanding that the P\l{}onka sum is only an algebra of the constant-free subtype of $\tau$. This construction was first introduced and studied by J. P\l{}onka in~\cite{Plonka67,Plonka68a,Plonka68b} (for more recent expositions see~\cite{PR92} and~\cite{BPP22}). One can readily prove that the P\l{}onka sum of a semilattice directed system is a well-defined algebra of the same type and it satisfies all regular equations---namely, equations $t_1\approx t_2$ such that the variables of $t_1$ and $t_2$ are the same---that hold in all the algebras of the family. 

However, it turns out that this construction is too constrained for our purposes. Indeed, our main result relies on a generalization of the concept of a P\l{}onka sum. In essence, the issue arises because the elements of the system $\Phi$ are homomorphisms, i.e., maps that respect all the operations of the algebra, while we require a more granular approach. We need to treat each operation individually, with its own system of homomorphisms. In addition, we also need to relax the very notion of a map ``respecting'' the algebraic operations. This leads us to the idea of a metamorphism.

Consider two algebras $\m A$ and $\m B$ of the same type $\tau$ and let $(B^A)^*$ denote the set of finite tuples of maps from $A$ to $B$. A \emph{metamorphism} $f\:\m A\meta \m B$ is a map $f\:\tau\to \big(B^A\big)^*$ such that $f\colon \sigma\mapsto f^\sigma = \pair{f^{\sigma 0},\dots,f^{\sigma n}}$ for each $n$-ary symbol $\sigma$ and
\[
f^{\sigma 0}(\sigma^\m A(a_1,\dots,a_n)) = \sigma^\m B(f^{\sigma 1}(a_1),\dots, f^{\sigma n}(a_n)).
\]
In particular, for every constant $\omega$, $f^\omega = \pair{f^{\omega 0}}$ and $f^{\omega 0}(\omega^\m A) = \omega^\m B$. The \emph{composition} of two metamorphisms $f\:\m A\meta\m B$ and $g\:\m B\meta\m C$ is the metamorphism $g\circ f\:\m A\meta\m C$ defined by $(g\circ f)^\sigma := \pair{g^{\sigma 0}\circ f^{\sigma 0},\dots,g^{\sigma n}\circ f^{\sigma n}}$. The \emph{identity} metamorphism on $\m A$ is $\mathrm{id}_\m A\:\m A\meta \m A$ defined by $\mathrm{id}_\m A^\sigma := \pair{\mathrm{id}_A,\dots,\mathrm{id}_A}$. The algebras of type $\tau$ and metamorphisms form a category $\Meta^\tau$, while the algebras of type $\tau$ and homomorphisms form a category $\Alg^\tau$. Each homomorphism $f\:\m A\to\m B$ determines a metamorphism $\widetilde f\:\m A\meta\m B$ such that $\widetilde f^{\sigma 0} = \dots = \widetilde f^{\sigma n} = f$ for each $n$-ary symbol $\sigma$. This assignment defines a faithful functor $\Alg^\tau\to \Meta^\tau$ which is the identity on objects.

A \emph{semilattice directed system of metamorphisms} is a functor $\Xi\: \m I\to\Meta^\tau$ where $\m I = \pair{I,\lor}$ is a join-semilattice. The system $\Xi$ can also be seen as a family $\Xi = \{{\xi_{pq}\:\m A_p\meta\m A_q} : {p\jle q \text{ in } \m I}\}$ such that $\xi_{pp} = \mathrm{id}_{\m A_p}$ and $\xi_{qr}\circ\xi_{pq} = \xi_{pr}$. The \emph{P\l{}onka sum} of a directed system of metamorphisms~$\Xi$ is the algebra $\m A$ of the same type, provided again that $\m I$ has a least element $\bot$ if the type contains constants, whose universe is $A = \biguplus A_p$ and where for every $n$-ary operation $\sigma$ and all $a_1\in A_{p_1}, \dots, a_n\in A_{p_n}$,
\[
\sigma^\m A(a_1,\dots,a_n) := \sigma^{\m A_q}(\xi^{\sigma 1}_{p_1q}(a_1),\dots,\xi^{\sigma n}_{p_nq}(a_n)), \qquad \text{ where } q = p_1\lor\dots\lor p_n.
\]
Moreover, $\omega^\m A := \omega^{\m A_\bot}$ for every constant symbol $\omega$. We may more generally admit a situation where $\m I$ does not have a least element and the type $\tau$ of the algebras $\m A_p$ contains a constant symbol, again with the understanding that the resulting P\l{}onka sum is only an algebra of the constant-free subtype of $\tau$. The algebra $\m A_p$ is called the \emph{fiber} of $\m A$ at $p$ and for that reason we call \emph{fibrant over} $\m I$ any algebra that is the P\l{}onka sum of a directed system of metamorphisms $\m I\to\Meta^\tau$. It follows immediately from its definition that the constant-free reduct of any fiber $\m A_p$ is a subalgebra of the constant-free reduct of $\m A$, and if $\m A$ has constants then $\m A_\bot$ is a subalgebra of~$\m A$. Note also that every directed system of homomorphisms can be viewed as a directed system of metamorphisms, via the composition with the inclusion $\Alg^\tau\to\Meta^\tau$, and the two P\l{}onka sums (as a system of homomorphisms and as a system of metamorphisms) coincide.

P\l{}onka sums can equivalently be understood in terms of so-called partition functions. A \emph{partition function} on an algebra ${\m A}$ is a binary operation $\odot\: A^2\to A$ satisfying the following conditions,\footnote{In the literature, one can find different definitions of a partition function. We are here opting for the definition that appears in~\cite{BPP22} and~\cite{PR92}, which makes use of the minimal number of conditions.} for every $n$-ary operation symbol~$\sigma$, every constant symbol~$\omega$, and all $a, b, c, a_1,\dots, a_n\in A$,
\begin{enumerate}[label=(PF\arabic*), leftmargin=*]
\item\label{PF1} $a\odot a = a$,
\item\label{PF2} $a\odot (b\odot c) = (a\odot b) \odot c $,
\item\label{PF3} $a\odot (b\odot c) = a\odot (c\odot b)$,
\item\label{PF4} $\sigma^{\mathbf{A}}(a_1,\dots,a_n)\odot b = \sigma^{\mathbf{A}}(a_1\odot b,\dots, a_n\odot b)$,
\item\label{PF5} $b\odot \sigma^{\mathbf{A}}(a_1,\dots,a_n) = b\odot a_{1}\odot \dots\odot a_n $,
\item\label{PF6} $b\odot \omega^{\mathbf{A}} = b$. 
\end{enumerate}

P\l{}onka's characterization theorem asserts that an algebra admits a partition function if and only if it is the P\l{}onka sum of a directed system. Not surprisingly, each one of the equalities \ref{PF1}--\ref{PF6} plays a different role in the theorem and, specifically, the equalities that do not involve the operations of the algebra are the ones responsible for the partition of the universe of the algebra and the existence of a directed system connecting these parts.

An algebra $\pair{A,\odot}$ satisfying~\ref{PF1}--\ref{PF3} is known as a \emph{left normal band}. Since these algebras are completely determined by their only operation, we will also call $\odot$ a left normal band on $A$. For every left normal band $\pair{A,\odot}$, we define the binary relations $\jle_\odot$ and $\equiv_\odot$ as follows:
\[
a\jle_\odot b \iff\ b\odot a = b \qquad\text{and}\qquad a\equiv_\odot b \iff a\jle_\odot b \text{ and }\space b\jle_\odot a.
\]

\begin{lemma}\label{lem:PF123:join:semilattice}
For every left normal band $\pair{A,\odot}$ the relation $\jle_\odot$ is a preorder on~$A$ compatible with~$\odot$. Therefore, $\equiv_\odot$ is a congruence of $\pair{A,\odot}$ and the induced relation $\jle'_\odot$ is a partial order on $A/{\equiv}_\odot$. Moreover, $\m I^\odot = \pair{A,\odot}/{\equiv}_\odot$ is a join-semilattice.
\end{lemma}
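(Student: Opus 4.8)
The plan is to check the four assertions one at a time, using only the left‑normal‑band axioms \ref{PF1}--\ref{PF3}. First I would establish that $\jle_\odot$ is a preorder: reflexivity $a \jle_\odot a$ is literally \ref{PF1}, and for transitivity, from $b \odot a = b$ and $c \odot b = c$ one obtains $c \odot a = (c \odot b) \odot a = c \odot (b \odot a) = c \odot b = c$ by \ref{PF2}. Next I would verify compatibility, i.e.\ that $a \jle_\odot a'$ and $b \jle_\odot b'$ imply $a \odot b \jle_\odot a' \odot b'$. The computation is to expand $(a' \odot b') \odot (a \odot b)$: by \ref{PF2} it equals $a' \odot \big(b' \odot (a \odot b)\big)$; then \ref{PF3} and \ref{PF2} reduce the inner term, $b' \odot (a \odot b) = b' \odot (b \odot a) = (b' \odot b) \odot a = b' \odot a$; and one further application of \ref{PF3} and \ref{PF2} gives $a' \odot (b' \odot a) = a' \odot (a \odot b') = (a' \odot a) \odot b' = a' \odot b'$. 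Hence $a \odot b \jle_\odot a' \odot b'$.

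With compatibility in hand, $\equiv_\odot$, being the intersection of $\jle_\odot$ with its converse, is simultaneously an equivalence relation (reflexive and transitive since $\jle_\odot$ is, symmetric by definition) and compatible with $\odot$, hence a congruence of $\pair{A,\odot}$. The induced relation $\jle'_\odot$ on $A/{\equiv}_\odot$ is well-defined because $\equiv_\odot \subseteq {\jle_\odot}$ and $\jle_\odot$ is transitive, and it inherits reflexivity and transitivity; antisymmetry holds because $[a] \jle'_\odot [b]$ together with $[b] \jle'_\odot [a]$ says exactly $a \equiv_\odot b$. So $\jle'_\odot$ is a partial order.

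It remains to see that $\m I^\odot$ is a join-semilattice, for which I would show that $a \odot b$ is a least upper bound of $a$ and $b$ with respect to $\jle_\odot$, whence $[a] \lor [b] = [a \odot b]$. That $a \jle_\odot a \odot b$ follows from $(a \odot b) \odot a = a \odot (b \odot a) = a \odot (a \odot b) = (a \odot a) \odot b = a \odot b$ (using \ref{PF2}, \ref{PF3}, \ref{PF1}), and $b \jle_\odot a \odot b$ from $(a \odot b) \odot b = a \odot (b \odot b) = a \odot b$. For minimality, if $a \jle_\odot c$ and $b \jle_\odot c$, then $c \odot (a \odot b) = (c \odot a) \odot b = c \odot b = c$, so $a \odot b \jle_\odot c$. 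I expect no genuine difficulty here; the one place to be careful is tracking the bracketing through the alternating uses of \ref{PF2} and \ref{PF3} in the compatibility step, which is the only computation with several moving parts.
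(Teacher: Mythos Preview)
Your proposal is correct and follows essentially the same approach as the paper: reflexivity from \ref{PF1}, transitivity via \ref{PF2}, compatibility by rearranging $(a'\odot b')\odot(a\odot b)$ with \ref{PF2}--\ref{PF3}, and the join description of $[a\odot b]$ via the same upper-bound and minimality checks. The paper's write-up is terser (it suppresses associativity steps and derives minimality from compatibility via $a\odot b\jle_\odot c\odot c=c$), but the substance is identical.
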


\begin{proof}
The reflexivity of $\jle_\odot$ follows from~\ref{PF1}. As for the transitivity of $\jle_\odot$, suppose that we have elements $a,b,c\in A$ so that $a\jle_\odot b\jle_\odot c$. Thus,
\[
c\odot a = (c\odot b)\odot a = c\odot (b\odot a) = c\odot b = c,
\]
where the second equality follows from~\ref{PF2}. In order to prove the compatibility of $\jle_\odot$ with $\odot$, suppose that $a\jle_\odot a'$ and $b\jle_\odot b'$.  We will omit in what follows the steps that depend on~\ref{PF2}, that is, the associativity of $\odot$. Therefore,
\[a'\odot b' \odot a\odot b = a'\odot a \odot b'\odot b = a'\odot b',\]
by~\ref{PF3} and the hypotheses. Hence, $a\odot b \jle_\odot a'\odot b'$. Now, to prove that $\pair{A,\odot}/{\equiv}_\odot$ is a join-semilattice, let $a,b\in A$ be arbitrary elements. Then,
\[
a\odot b\odot a = a\odot a\odot b = a\odot b,
\]
by~\ref{PF3} and~\ref{PF1}. We also have that $a\odot b\odot b = a\odot b$, and therefore, we have that $a\jle_\odot a\odot b$ and $b\jle_\odot a\odot b$. Now if $c$ is so that $a\jle_\odot c$ and $b\jle_\odot c$, then $a\odot b\jle_\odot c\odot c = c$, showing that the equivalence class $[a\odot b]$ is the join of the equivalence classes $[a]$ and $[b]$ in $\pair{A/{\equiv}_\odot , \jle'_\odot}$.
\end{proof}

Under the conditions of the previous lemma, let us choose a set $I\subset A$ of representatives of the equivalence classes of $\equiv_\odot$ and let $\m I =\pair{I, \lor}$ be the transport of the structure of $\m I^\odot$ onto $I$. That is, for every $p,q\in I$, $p\lor q$ is the representative of the equivalence class $[p\odot q]$. Thus, $\m I = \pair{I,\lor} \cong \m I^\odot$ and the induced order in $I$ is the restriction of $\jle_\odot$, which we denote in the same way. Let $A_p$ be the equivalence class $[p]$, for every $p\in I$. Given a left normal band $\pair{A,\odot}$, the semilattice $\m I \cong \m I^\odot$ is usually referred to as the \emph{semilattice replica} of $\pair{A,\odot}$ (see \cite{PR92} for details).

\begin{lemma}\label{lem:LNB:induces:DS}
Let $\pair{A,\odot}$ be a left normal band and $\m I = \pair{I,\lor}$ the associated join-semilattice on a set of representatives of $A/{\equiv}_\odot$.
\begin{enumerate}[(1)]
\item For every $p\jle_\odot q$ in $\m I$, the map $\varphi_{pq}\: A_p\to A_q$ given by $\varphi_{pq}(a) := a\odot q$ is well defined.

\item The family $\Phi := \{\varphi_{pq}\: A_p\to A_q : p\jle_\odot q \text{ in }\m I\}$ is a semilattice directed system of maps. 
\end{enumerate}
\end{lemma}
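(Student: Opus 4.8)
The plan is to verify everything directly from the partition-function axioms \ref{PF1}--\ref{PF3} together with the order-theoretic facts about $\jle_\odot$ recorded in Lemma~\ref{lem:PF123:join:semilattice}; no idea beyond careful bookkeeping with a non-commutative operation is needed, the work is purely computational.

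For part~(1), ``well defined'' means exactly that $\varphi_{pq}$ sends $A_p$ into $A_q$. So I would fix $p\jle_\odot q$ in $\m I$ and $a\in A_p$, i.e.\ $a\equiv_\odot p$, and show $a\odot q\equiv_\odot q$, which amounts to the two relations $q\jle_\odot a\odot q$ and $a\odot q\jle_\odot q$. The first is immediate: $(a\odot q)\odot q = a\odot(q\odot q) = a\odot q$ by \ref{PF2} and \ref{PF1}. For the second I compute $q\odot(a\odot q) = q\odot(q\odot a) = (q\odot q)\odot a = q\odot a$ using \ref{PF3}, \ref{PF2}, \ref{PF1}, and then $q\odot a = q$ because $a\jle_\odot p\jle_\odot q$ (using that the order on $\m I$ is the restriction of $\jle_\odot$ together with transitivity of $\jle_\odot$ from Lemma~\ref{lem:PF123:join:semilattice}), so that $a\jle_\odot q$.

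For part~(2) I check the two conditions defining a semilattice directed system of maps. First, $\varphi_{pp} = \mathrm{id}_{A_p}$: if $a\in A_p$ then $p\jle_\odot a$, i.e.\ $a\odot p = a$, so $\varphi_{pp}(a) = a$. Second, by part~(1) the composite $\varphi_{qr}\circ\varphi_{pq}\colon A_p\to A_r$ is well defined whenever $p\jle_\odot q\jle_\odot r$ in $\m I$, and for $a\in A_p$
\[
  (\varphi_{qr}\circ\varphi_{pq})(a) = (a\odot q)\odot r = a\odot(q\odot r) = a\odot(r\odot q) = a\odot r = \varphi_{pr}(a),
\]
using \ref{PF2}, then \ref{PF3}, and finally $r\odot q = r$, which holds because $q\jle_\odot r$.

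The one subtlety worth flagging is in this last display: a left normal band need not be commutative, so one cannot simply rewrite $q\odot r$ as $r\odot q$; instead \ref{PF3} is invoked to transpose the last two arguments of $a\odot(q\odot r)$, after which the product $r\odot q$---which is the one actually controlled by the relation $q\jle_\odot r$---collapses to $r$. Apart from that, the argument is a routine use of associativity \ref{PF2}, idempotence \ref{PF1}, and transitivity of $\jle_\odot$.
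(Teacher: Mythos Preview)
Your proof is correct and follows essentially the same route as the paper: both verify $a\odot q\in A_q$ directly from \ref{PF1}--\ref{PF3}, and the composition argument in part~(2) is identical (including the use of \ref{PF3} to rewrite $a\odot(q\odot r)$ as $a\odot(r\odot q)$ before collapsing). The only minor difference is that for part~(1) the paper invokes the compatibility of $\equiv_\odot$ with $\odot$ from Lemma~\ref{lem:PF123:join:semilattice} to get $a\odot q\equiv_\odot p\odot q\equiv_\odot q$ in one line, whereas you verify the two halves of $a\odot q\equiv_\odot q$ by hand; and the paper takes the opportunity to record the auxiliary fact~\ref{*} (if $b\equiv_\odot c$ then $a\odot b=a\odot c$), which is not needed for the lemma itself but is used later in the paper.
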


\begin{proof}
For the first claim, notice that $a\odot q \equiv_\odot p\odot q \equiv_\odot q$ for every $a\in A_p$ and $p\jle_\odot q$, and therefore $a\odot q \in A_q$. The map is well defined because left normal bands satisfy the following condition:
\begin{enumerate}[($\dagger$)]
\item\label{*} if $b\equiv_\odot c$, then $a\odot b = a\odot c$.
\end{enumerate}
Indeed, if $b\equiv_\odot c$, then $c\odot b = c$ and $b\odot c = b$, and therefore
\[
a\odot b = a\odot b\odot c = a\odot c\odot b = a\odot c.
\]

Finally, if $a\in A_p$, then in particular $a\jle_\odot p$ and therefore $a\odot p = a$, whence we deduce that $\varphi_{pp}$ is the identity map of $A_p$; and if $p\jle_\odot q\jle_\odot r$, then $r\odot q = r$ and $\varphi_{qr}(\varphi_{pq}(a)) = a\odot q\odot r = a\odot r\odot q = a\odot r = \varphi_{pr}(a)$, for every $a\in A_p$.
\end{proof}

\begin{lemma}\label{lem:DS:induces:LNB}
Let $\Phi = \{\varphi_{pq}\: A_p\to A_q : p\jle q \text{ in }\m I\}$ be a directed system of maps, where $\m I$ is a join-semilattice. The binary operation $\odot$ on $A = \biguplus_{p\in I} A_p$ defined by
\[
a\odot b := \varphi_{ps}(a),\qquad\text{where } a\in A_p,\ b\in A_q,\ s := p\lor q,
\]
is a left normal band, $a\jle_\odot b$ if and only if $p\jle q$, and $\m I^\odot \cong \m I$.
\end{lemma}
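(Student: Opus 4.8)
The plan is to verify directly the three left normal band identities \ref{PF1}--\ref{PF3} for $\odot$, then read off the characterization of $\jle_\odot$ from the disjointness of the fibers, and finally deduce $\m I^\odot\cong\m I$. The observation that makes everything routine is that $a\odot b$ depends on $b$ only through the index $q$ with $b\in A_q$: writing $a\in A_p$, we always have $a\odot b=\varphi_{p,\,p\lor q}(a)\in A_{p\lor q}$. From this, \ref{PF1} is immediate, since for $a\in A_p$ we get $a\odot a=\varphi_{p,\,p\lor p}(a)=\varphi_{pp}(a)=a$. For \ref{PF3}, if $a\in A_p$, $b\in A_q$, and $c\in A_r$, then both $a\odot(b\odot c)$ and $a\odot(c\odot b)$ equal $\varphi_{p,\,p\lor q\lor r}(a)$, because $b\odot c\in A_{q\lor r}$, $c\odot b\in A_{r\lor q}$, and $q\lor r=r\lor q$. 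For \ref{PF2} one computes, with the same $a,b,c$,
\[
a\odot(b\odot c)=\varphi_{p,\,p\lor(q\lor r)}(a),\qquad
(a\odot b)\odot c=\varphi_{p\lor q,\,(p\lor q)\lor r}\bigl(\varphi_{p,\,p\lor q}(a)\bigr)=\varphi_{p,\,(p\lor q)\lor r}(a),
\]
where the last equality uses the composition law $\varphi_{qr}\circ\varphi_{pq}=\varphi_{pr}$ of the directed system; the associativity of $\lor$ then gives \ref{PF2}. Hence $\pair{A,\odot}$ is a left normal band.

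Next I would describe $\jle_\odot$. For $a\in A_p$ and $b\in A_q$ we have $b\odot a=\varphi_{q,\,q\lor p}(b)\in A_{q\lor p}$. If $p\jle q$, then $q\lor p=q$ and $b\odot a=\varphi_{qq}(b)=b$, so $a\jle_\odot b$. Conversely, if $b\odot a=b$, then $b$ lies both in $A_{q\lor p}$ and in $A_q$, and since the fibers are pairwise disjoint this forces $q\lor p=q$, i.e.\ $p\jle q$. Thus $a\jle_\odot b\iff p\jle q$, and consequently $a\equiv_\odot b\iff p=q$, so the $\equiv_\odot$-classes are exactly the fibers $A_p$.

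Finally, the assignment sending the class of $a\in A_p$ to $p$ is a bijection between $A/{\equiv}_\odot$ and $I$. By Lemma~\ref{lem:PF123:join:semilattice}, the join in $\m I^\odot$ of the classes of $a\in A_p$ and $b\in A_q$ is the class of $a\odot b$, which lies in $A_{p\lor q}$ and hence corresponds to $p\lor q\in I$; so this bijection is a semilattice isomorphism. I do not expect a serious obstacle: the only step requiring attention is the converse direction of the $\jle_\odot$ characterization, which rests on disjointness of the fibers rather than on any property of the maps, and one must keep the join-indices straight throughout the computation for \ref{PF2}.
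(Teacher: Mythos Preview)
Your proof is correct and follows essentially the same approach as the paper's: both verify \ref{PF1}--\ref{PF3} by direct computation using the functoriality of $\Phi$ and the fact that $a\odot b$ depends on $b$ only through its index, and both obtain the characterization of $\jle_\odot$ from the disjointness of the fibers together with $\varphi_{qq}=\mathrm{id}_{A_q}$. Your write-up is in fact slightly more explicit than the paper's in spelling out the converse direction of the $\jle_\odot$ equivalence and the isomorphism $\m I^\odot\cong\m I$.
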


\begin{proof}
Consider arbitrary $p,q,r\in A$ and $a\in A_p$, $b\in A_q$, and $c\in A_r$. Let $s := p\lor q$, $t := q\lor r$, and $v := p\lor t = s\lor r$.
\begin{enumerate}[(PF1), leftmargin=*]
\item $a\odot a = \varphi_{pp}(a) = a$.
\item $\begin{aligned}[t]
    a\odot (b\odot c) &= a \odot \varphi_{qt}(b) = \varphi_{pv}(a) = \varphi_{sv}(\varphi_{ps}(a)) = \varphi_{ps}(a)\odot c = (a\odot b)\odot c.
\end{aligned}$
\item $a\odot (b\odot c) = a\odot\varphi_{qt}(b) = \varphi_{pv}(a) = a\odot\varphi_{rt}(c) = a\odot (c\odot b)$.
\end{enumerate}

Finally, consider the projection map $\pi\: A\to I$ that assigns to every element of the disjoint union $A = \biguplus_{p\in I} A_p$ its index. Under the same hypothesis as above, we have
\[
a \jle_\odot b \iff b\odot a = b \iff \varphi_{qs}(b) = b \iff q=s \iff p\jle q \iff \pi(a)\jle\pi(b).
\]
Therefore, since $\pi$ is surjective, $\m I^\odot\cong \m I$.
\end{proof}

In what follows, we will consider several bands with the same underlying  universe. We say that two left normal bands $\odot$ and $\otimes$ on a set $A$ are \emph{homotactic} if ${\jle}_\odot = {\jle}_\otimes$ and therefore ${\equiv}_\odot = {\equiv}_\otimes$. In particular, if $\odot$ and $\otimes$ are homotactic, then $\m I^\odot = \m I^\otimes$. In other words, the two left normal bands are homotactic when they have the same semilattice replica. The corresponding generalization of a partition function  for an algebra $\m A$ is that of a \emph{partition system}, which is an assignment $\tau\to O^*$, $\sigma\mapsto\pair{\odot^{\sigma}_0,\dots,\odot^{\sigma}_n}$, where $O$ is a set of homotactic left normal bands on $A$, such that for all $a_1,\dots,a_n,b\in A$,
\begin{enumerate}[align=left, left=0pt .. \parindent]
\myitem[(PF4$^\sigma$)]\label{PF4s} $\sigma^\m A(a_1,\dots,a_n)\odot^\sigma_0 b = \sigma^\m A(a_1\odot^{\sigma}_1 b,\dots,a_n\odot^{\sigma}_n b)$,\quad if $n\geqslant 1$,

\myitem[(PF5$^\sigma$)]\label{PF5s} $b\odot^\sigma_0 \sigma^{\mathbf{A}}(a_1,\dots,a_n) = b\odot^\sigma_0 a_{1}\odot^\sigma_0 \dots\odot^\sigma_0 a_n $,
\end{enumerate}

In particular, for a constant symbol $\omega$, the property~\ref{PF5s} can be written as follows:

\begin{enumerate}[align=left, left=0pt .. \parindent]
\myitem[(PF5$^\omega$)]\label{PF5w} $b\odot^\omega_0 \omega^\m A = b$.
\end{enumerate}

Notice that the conditions~\ref{PF5s} and~\ref{PF5w} are exactly~\ref{PF5} and~\ref{PF6} for the left normal bands $\odot^\sigma_0$ and $\odot^\omega_0$, respectively. A partition function can be equated to a partition system such that $O$ has exactly one left normal band.

\begin{lemma}\label{lem:A:PF5:closure:under:sigma}
Let $\m A$ be an algebra with a left normal band $\odot$ on $A$.
\begin{enumerate}[(1)]
\item If $\sigma^\m A$ is an $n$-ary operation which satisfies~\ref{PF5}, then the following hold.

\begin{enumerate}
\item For all $a_1,\dots, a_n\in A$, $\sigma^\m A(a_1,\dots,a_n) \equiv_\odot a_1\odot\dots\odot a_n$.

\item Every equivalence class $A_p$ is closed under $\sigma^\m A$.
\end{enumerate}    

\item If $\omega^\m A$ is a constant which satisfies~\ref{PF6}, then the class $[\omega^\m A]$ is the least element of $\m I^\odot$.
\end{enumerate}
\end{lemma}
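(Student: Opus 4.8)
The plan is to dispose of part~(2) immediately, then prove part~(1)(a), from which part~(1)(b) is an easy corollary; the only point requiring real care is an auxiliary identity about left normal bands that enters the proof of~(1)(a).

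For part~(2): by the definition of the induced order on $\m I^\odot = \pair{A,\odot}/{\equiv}_\odot$, we have $[a]\jle'_\odot[b]$ if and only if $a\jle_\odot b$, i.e.\ if and only if $b\odot a = b$. Since every class of $\m I^\odot$ has the form $[b]$, the class $[\omega^\m A]$ is the least element of $\m I^\odot$ precisely when $b\odot\omega^\m A = b$ holds for all $b\in A$, which is exactly~\ref{PF6}.

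For part~(1)(a): fix $a_1,\dots,a_n\in A$ and set $s := \sigma^\m A(a_1,\dots,a_n)$ and $t := a_1\odot\dots\odot a_n$ (well defined by associativity~\ref{PF2}). I will establish $s\equiv_\odot t$ by checking $t\jle_\odot s$ and $s\jle_\odot t$ in turn. For $t\jle_\odot s$ one computes, using~\ref{PF2} to regroup, then~\ref{PF5} with $b := s$, then idempotence~\ref{PF1}, that $s\odot t = s\odot a_1\odot\dots\odot a_n = s\odot\sigma^\m A(a_1,\dots,a_n) = s\odot s = s$. For $s\jle_\odot t$ I first note the auxiliary fact $t\odot a_i = t$ for each $i$, i.e.\ $a_i\jle_\odot t$: this holds because $[t] = [a_1]\lor\dots\lor[a_n]$ is the join of the classes $[a_i]$ in the join-semilattice $\m I^\odot$ of Lemma~\ref{lem:PF123:join:semilattice} (the binary case is proved there; the $n$-ary case follows by induction via~\ref{PF2}), so $[a_i]\jle'_\odot[t]$. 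Then~\ref{PF5} with $b := t$ gives $t\odot s = t\odot a_1\odot\dots\odot a_n$, and applying $t\odot a_i = t$ successively from the left (again regrouping with~\ref{PF2}) collapses the right-hand side to $t$. Hence $s\equiv_\odot t$.

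Part~(1)(b) follows at once: if $a_1,\dots,a_n$ all lie in the $\equiv_\odot$-class $A_p = [p]$, then $a_i\equiv_\odot p$ for every $i$, so $t = a_1\odot\dots\odot a_n\equiv_\odot p\odot\dots\odot p = p$, using that $\equiv_\odot$ is a congruence of $\pair{A,\odot}$ (Lemma~\ref{lem:PF123:join:semilattice}) and that $\odot$ is idempotent; combined with~(1)(a) this yields $\sigma^\m A(a_1,\dots,a_n)\equiv_\odot t\equiv_\odot p$, i.e.\ $\sigma^\m A(a_1,\dots,a_n)\in A_p$. The main obstacle, modest as it is, is the auxiliary identity $t\odot a_i = t$; everything else reduces to one-line manipulations of the band axioms~\ref{PF1}--\ref{PF3} together with~\ref{PF5} and~\ref{PF6}. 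If one prefers a self-contained argument in place of invoking the $n$-ary form of Lemma~\ref{lem:PF123:join:semilattice}, one can instead prove by induction on word length the familiar fact that in any left normal band the value of $x\odot y_1\odot\dots\odot y_k$ depends only on $x$ and the \emph{set} $\{y_1,\dots,y_k\}$, which in particular gives $(a_1\odot\dots\odot a_n)\odot a_i = a_1\odot\dots\odot a_n$.
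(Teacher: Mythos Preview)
Your proof is correct and follows essentially the same approach as the paper: both directions of $s\equiv_\odot t$ are obtained by applying~\ref{PF5} with $b=s$ and $b=t$ respectively, together with the band axioms, and parts~(1)(b) and~(2) are then immediate. Your argument for $t\jle_\odot s$ is in fact slightly more direct than the paper's (the paper first establishes $a_i\jle_\odot s$ for each $i$ via a longer chain of manipulations before concluding $t\jle_\odot s$), and for $s\jle_\odot t$ the paper passes through a representative $q\equiv_\odot t$ where you work with $t$ itself, but these are cosmetic differences.
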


\begin{proof}
For the first claim, notice that
\begin{align*}
\sigma^\m A(a_1,\dots,a_n) \odot a_i &= \sigma^\m A(a_1,\dots,a_n) \odot \sigma^\m A(a_1,\dots,a_n) \odot a_i \\
    &= \sigma^\m A(a_1,\dots,a_n) \odot a_1 \odot \dots \odot a_n\odot a_i\\
    &= \sigma^\m A(a_1,\dots,a_n) \odot a_1 \odot \dots \odot a_n\\
    &= \sigma^\m A(a_1,\dots,a_n) \odot \sigma^\m A(a_1,\dots,a_n)\\
    &= \sigma^\m A(a_1,\dots,a_n).
\end{align*}
The first and the last equalities follow from~\ref{PF1} and the second and fourth from~\ref{PF5}. The third equality is a consequence of a number of applications of~\ref{PF2} and~\ref{PF3} and one application of~\ref{PF1}. Therefore, $a_i\jle_\odot  \sigma^\m A(a_1,\dots,a_n)$ and thus $a_1\odot\dots\odot a_n \jle_\odot \sigma^\m A(a_1,\dots,a_n)$. For the reverse inequality, take $q\in I$ so that $q \equiv_\odot a_1\odot\dots\odot a_n$. Then,
\[
q\odot\sigma^\m A(a_1,\dots,a_n) = q\odot a_1\odot \dots\odot a_n \equiv_\odot q\odot q = q,
\]
and hence $\sigma^\m A(a_1,\dots,a_n) \jle_\odot q \equiv_\odot a_1\odot\dots\odot a_n$. The second part follows immediately, because if $a_1,\dots,a_n\in A_p$, then 
\[
\sigma^\m A(a_1,\dots,a_n) \equiv_\odot a_1\odot \dots\odot a_n \equiv_\odot p\odot \dots\odot p = p,
\]
that is, $\sigma^\m A(a_1,\dots,a_n) \in A_p$. Finally, if $a\odot \omega^\m A = a$ holds for all $a\in A$, then $\omega^\m A \jle_\odot a$, whence the last claim follows.   
\end{proof}

The following two theorems express the relation between P\l{}onka sums of systems of metamorphisms and partitions system, which is in essence that an algebra is fibrant over $\m I$ if and only if it admits a partition system with induced join-semilattice isomorphic to $\m I$.

\begin{theorem}\label{thm:Plonka:decomposition:meta}
Let $\m A$ be an algebra with a partition system, $\equiv$ and $\m I = \pair{I,\lor}$ their induced equivalence relation and corresponding join-semilattice on a set of representatives, and $\jle$ the order of $\m I$.
\begin{enumerate}[(1), leftmargin=*]
\item Every equivalence class $A_p$ of $\equiv$ is the universe of an algebra $\m A_p$ of the same type whose constant-free reduct is a subalgebra of the constant-free reduct of~$\m A$, and $\omega^{\m A_p} = \omega^{\m A}\odot^\omega_0 p$ for every constant symbol $\omega$.

\item For all $p\jle q$ in $\m I$, there is a metamorphism $\xi_{pq}\:\m A_p\meta\m A_q$ defined by
\[
\xi^{\sigma i}_{pq}(a) := a\odot^\sigma_i q, \quad \text{for every $n$-ary symbol $\sigma$ and $i\le n$}.
\]

\item The family $\Xi := \{\xi_{pq}\:\m A_p\meta\m A_q : p\jle q \text{ in }\m I\}$ is a semilattice directed system of metamorphisms and $\m A$ is its P\l{}onka sum.
\end{enumerate}
\end{theorem}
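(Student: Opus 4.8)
The plan is to verify the three claims in turn, relying throughout on the fact that the left normal bands occurring in the partition system are homotactic: this guarantees that the preorders $\jle_{\odot^\sigma_i}$, the congruences $\equiv_{\odot^\sigma_i}$, and the semilattice replicas all coincide, and in particular agree with $\jle$, $\equiv$, and $\m I^\odot$. I will freely use the basic facts about left normal bands established above: associativity together with ``commutativity up to the last factor'' (\ref{PF1}--\ref{PF3}), the identity $[a\odot b]=[a]\lor[b]$ from the proof of Lemma~\ref{lem:PF123:join:semilattice}, property~\ref{*} (if $b\equiv_\odot c$ then $a\odot b=a\odot c$), and the remark in the proof of Lemma~\ref{lem:LNB:induces:DS} that $a\in A_p$ forces $a\odot^\sigma_i p=a$.

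For claim~(1), closure of each class $A_p$ under an $n$-ary operation $\sigma^{\m A}$ with $n\geqslant 1$ is exactly Lemma~\ref{lem:A:PF5:closure:under:sigma}(1)(b) applied to the band $\odot^\sigma_0$, which satisfies~\ref{PF5} by~\ref{PF5s}; since $\equiv_{\odot^\sigma_0}=\equiv$, this makes $\m A_p$ a subalgebra of the constant-free reduct of $\m A$. For a constant symbol $\omega$, Lemma~\ref{lem:A:PF5:closure:under:sigma}(2) applied to $\odot^\omega_0$ (which satisfies~\ref{PF6} by~\ref{PF5w}) shows that $[\omega^{\m A}]$ is the least element $\bot$ of $\m I^\odot$, so $\m I$ does have a least element; since $[\omega^{\m A}\odot^\omega_0 p]=[\omega^{\m A}]\lor[p]=[p]$, the element $\omega^{\m A_p}:=\omega^{\m A}\odot^\omega_0 p$ lies in $A_p$, and so $\m A_p$ is an algebra of type $\tau$. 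One also records that $\omega^{\m A_\bot}=\omega^{\m A}\odot^\omega_0\bot=\omega^{\m A}$, since $\omega^{\m A}\equiv\bot$; this will be used in~(3).

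For claim~(2), fix $p\jle q$, an $n$-ary symbol $\sigma$, and $0\le i\le n$. For $a\in A_p$ one has $[a\odot^\sigma_i q]=[p]\lor[q]=[q]$, so $\xi^{\sigma i}_{pq}$ indeed maps $A_p$ into $A_q$; this step, where homotacticity is genuinely needed, is the one that requires the most care. The metamorphism identity then reduces to a one-line computation: for $a_1,\dots,a_n\in A_p$,
\[
\xi^{\sigma 0}_{pq}\bigl(\sigma^{\m A_p}(a_1,\dots,a_n)\bigr)=\sigma^{\m A}(a_1,\dots,a_n)\odot^\sigma_0 q=\sigma^{\m A}(a_1\odot^\sigma_1 q,\dots,a_n\odot^\sigma_n q)=\sigma^{\m A_q}\bigl(\xi^{\sigma 1}_{pq}(a_1),\dots,\xi^{\sigma n}_{pq}(a_n)\bigr),
\]
where the middle equality is~\ref{PF4s} and the outer ones use that $\sigma^{\m A_p}$ and $\sigma^{\m A_q}$ are restrictions of $\sigma^{\m A}$. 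For a constant $\omega$ one computes $\xi^{\omega 0}_{pq}(\omega^{\m A_p})=(\omega^{\m A}\odot^\omega_0 p)\odot^\omega_0 q=\omega^{\m A}\odot^\omega_0 q=\omega^{\m A_q}$, using~\ref{PF2} and~\ref{*} (since $p\odot^\omega_0 q\equiv q$).

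For claim~(3), functoriality is routine: $\xi^{\sigma i}_{pp}(a)=a\odot^\sigma_i p=a$ gives $\xi_{pp}=\mathrm{id}_{\m A_p}$, and for $p\jle q\jle r$ and $a\in A_p$, $\xi^{\sigma i}_{qr}(\xi^{\sigma i}_{pq}(a))=(a\odot^\sigma_i q)\odot^\sigma_i r=a\odot^\sigma_i r=\xi^{\sigma i}_{pr}(a)$ by~\ref{PF2} and~\ref{*}. Finally $A=\biguplus_{p\in I}A_p$ since the $A_p$ are the $\equiv$-classes, and for $n$-ary $\sigma$ with $n\geqslant 1$, $a_j\in A_{p_j}$, and $q=p_1\lor\dots\lor p_n$,
\[
\sigma^{\m A_q}\bigl(\xi^{\sigma 1}_{p_1 q}(a_1),\dots,\xi^{\sigma n}_{p_n q}(a_n)\bigr)=\sigma^{\m A}(a_1\odot^\sigma_1 q,\dots,a_n\odot^\sigma_n q)=\sigma^{\m A}(a_1,\dots,a_n)\odot^\sigma_0 q=\sigma^{\m A}(a_1,\dots,a_n),
\]
where the last equality holds because Lemma~\ref{lem:A:PF5:closure:under:sigma}(1)(a) gives $\sigma^{\m A}(a_1,\dots,a_n)\equiv_{\odot^\sigma_0}a_1\odot^\sigma_0\dots\odot^\sigma_0 a_n\equiv q$, so that $\sigma^{\m A}(a_1,\dots,a_n)\odot^\sigma_0 q=\sigma^{\m A}(a_1,\dots,a_n)$; for constants, $\omega^{\m A}=\omega^{\m A_\bot}$ was noted in~(1). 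This exhibits $\m A$ as the P\l{}onka sum of $\Xi$. The only point that I would be careful not to gloss over is the repeated appeal to homotacticity --- both to make sense of every $\equiv_{\odot^\sigma_i}$-statement and to see that each $\xi^{\sigma i}_{pq}$ lands in $A_q$; everything else is bookkeeping with \ref{PF1}--\ref{PF3}, \ref{PF4s}, \ref{PF5s}, and the treatment of constants.
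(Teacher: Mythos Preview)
Your proof is correct and follows essentially the same approach as the paper's: both arguments verify closure of the classes via Lemma~\ref{lem:A:PF5:closure:under:sigma}, establish the metamorphism identity via a direct application of~\ref{PF4s}, handle functoriality through the left-normal-band calculus of Lemma~\ref{lem:LNB:induces:DS}, and recover the P\l{}onka sum by combining Lemma~\ref{lem:A:PF5:closure:under:sigma}(1)(a) with~\ref{PF4s}. Your explicit emphasis on where homotacticity is invoked (in particular to ensure each $\xi^{\sigma i}_{pq}$ lands in $A_q$) is a slight expository improvement, but the mathematical content is the same.
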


\begin{proof}
\begin{enumerate}[(1),leftmargin=*]
\item Since every $n$-ary operation symbol $\sigma$ satisfies~\ref{PF5s}, i.e., \ref{PF5} for $\sigma$ and $\odot^\sigma_0$, every equivalence class $A_p$ is closed under $\sigma^\m A$, by Lemma~\ref{lem:A:PF5:closure:under:sigma}.

\item First, notice that every map $\xi^{\sigma i}_{pq}$ for $p\jle q$ is well defined by Lemma~\ref{lem:LNB:induces:DS}. By definition of $\xi^\sigma_{pq}$, part~(1), and~\ref{PF4s} we have
\begin{align*}
\xi^{\sigma 0}_{pq}(\sigma^{\m A_p}(a_1,\dots, a_n))  &= \sigma^\m A(a_1,\dots, a_n)\odot^\sigma_0 q 
= \sigma^\m A(a_1\odot^{\sigma}_1 q,\dots, a_n\odot^{\sigma}_n q) \\
&= \sigma^{\m A_q}(\xi^{\sigma 1}_{pq}(a_1),\dots, \xi^{\sigma n}_{pq}(a_n)).
\end{align*}
And if $\omega$ is a constant symbol, then
\[
\xi^{\omega 0}_{pq}(\omega^{\m A_p}) = \xi^{\omega 0}_{pq}(\omega^{\m A}\odot^\omega_0 p) 
= \omega^{\m A}\odot^\omega_0 p\odot^\omega_0 q
= \omega^{\m A}\odot^\omega_0 q\odot^\omega_0 p
= \omega^{\m A}\odot^\omega_0 q = \omega^{\m A_q},
\]
since $p\jle q$ in $\m I$. Thus, $\xi_{pq}\:\m A_p\meta\m A_q$ is a metamorphism.

\item Again by virtue of Lemma~\ref{lem:LNB:induces:DS}, if $p\jle q\jle r$ in $\m I$, then for every $n$-ary operation symbol $\sigma$ and all $i\le n$,
\[
(\xi_{qr}\circ\xi_{pq})^{\sigma i} = \xi_{qr}^{\sigma i}\circ\xi_{pq}^{\sigma i} = \xi^{\sigma i}_{pr}\quad\text{and}\quad \xi_{pp}^{\sigma i} = \mathrm{id}_{A_p},
\]
and hence $\xi_{qr}\circ\xi_{pq} = \xi_{pr}$ and $\xi_{pp} = \widetilde{\mathrm{id}}_{A_p}$. Finally, by Lemma~\ref{lem:A:PF5:closure:under:sigma},
\[
\sigma^\m A(a_1,\dots,a_n)\equiv a_1\odot^\sigma_0 \dots \odot^\sigma_0 a_n \equiv p_1\lor\dots\lor p_n = q,
\]
and by~\ref{PF4s} and part~(1),
\begin{align*}
\sigma^\m A(a_1,\dots,a_n) &= \sigma^\m A(a_1,\dots,a_n)\odot^\sigma_0 q 
= \sigma^\m A(a_1\odot^{\sigma}_1 q,\dots,a_n\odot^{\sigma}_n q)\\ 
&= \sigma^{\m A_q}(\xi_{p_1q}^{\sigma 1}(a_1), \dots, \xi_{p_nq}^{\sigma n}(a_n)).
\end{align*}
And if $\omega$ is a constant symbol, then $\m I$ contains a least element $\bot$ and $\omega^\m A\in A_\bot$, by~\ref{PF5w}. Hence, $\omega^{\m A_\bot} = \omega^\m A\odot^\omega_0\bot = \omega^\m A$.\QED
\end{enumerate}\let\QED\relax
\end{proof}

\begin{theorem}\label{thm:Plonka:composition:meta}
Every semilattice directed system of metamorphisms is induced by a partition system for its P\l{}onka sum.
\end{theorem}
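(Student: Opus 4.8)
The plan is to take a semilattice directed system of metamorphisms $\Xi = \{\xi_{pq}\colon \m A_p \meta \m A_q : p \jle q \text{ in } \m I\}$ with P\l{}onka sum $\m A$ on $A = \biguplus_{p \in I} A_p$, and to manufacture a partition system out of it. For each $n$-ary operation symbol $\sigma$ and each $i \le n$, define a binary operation $\odot^\sigma_i$ on $A$ by $a \odot^\sigma_i b := \xi^{\sigma i}_{ps}(a)$, where $a \in A_p$, $b \in A_q$, and $s := p \lor q$. The first task is to check that each $\odot^\sigma_i$ is a left normal band. This is exactly the content of Lemma~\ref{lem:DS:induces:LNB} applied to the directed system of \emph{maps} $\{\xi^{\sigma i}_{pq} : p \jle q\}$ (which is a directed system of maps precisely because $\Xi$ is a directed system of metamorphisms, so that $\xi^{\sigma i}_{qr} \circ \xi^{\sigma i}_{pq} = \xi^{\sigma i}_{pr}$ and $\xi^{\sigma i}_{pp} = \mathrm{id}_{A_p}$, by the definition of composition and identity of metamorphisms). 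Lemma~\ref{lem:DS:induces:LNB} also tells us that $a \jle_{\odot^\sigma_i} b$ iff $p \jle q$ iff $\pi(a) \jle \pi(b)$, which is independent of $\sigma$ and $i$; hence all the bands $\odot^\sigma_i$ are homotactic, with common semilattice replica isomorphic to $\m I$. Set $\sigma \mapsto \pair{\odot^\sigma_0, \dots, \odot^\sigma_n}$; it remains to verify \ref{PF4s}, \ref{PF5s}, and \ref{PF5w}.

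For \ref{PF4s}, fix $a_1 \in A_{p_1}, \dots, a_n \in A_{p_n}$ and $b \in A_q$, and let $r := p_1 \lor \dots \lor p_n$ and $s := r \lor q = p_1 \lor \dots \lor p_n \lor q$. By definition of the P\l{}onka sum, $\sigma^\m A(a_1,\dots,a_n) = \sigma^{\m A_r}(\xi^{\sigma 1}_{p_1 r}(a_1), \dots, \xi^{\sigma n}_{p_n r}(a_n)) \in A_r$, so $\sigma^\m A(a_1,\dots,a_n) \odot^\sigma_0 b = \xi^{\sigma 0}_{rs}(\sigma^{\m A_r}(\xi^{\sigma 1}_{p_1 r}(a_1), \dots, \xi^{\sigma n}_{p_n r}(a_n)))$. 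Applying the metamorphism condition for $\xi_{rs}$ and $\sigma$ turns this into $\sigma^{\m A_s}(\xi^{\sigma 1}_{rs}(\xi^{\sigma 1}_{p_1 r}(a_1)), \dots, \xi^{\sigma n}_{rs}(\xi^{\sigma n}_{p_n r}(a_n))) = \sigma^{\m A_s}(\xi^{\sigma 1}_{p_1 s}(a_1), \dots, \xi^{\sigma n}_{p_n s}(a_n))$, using functoriality of $\Xi$ componentwise. On the other hand, $a_j \odot^\sigma_j b = \xi^{\sigma j}_{p_j s}(a_j)$ for each $j$, and these all lie in $A_s$, so the right-hand side of \ref{PF4s}, namely $\sigma^\m A(a_1 \odot^\sigma_1 b, \dots, a_n \odot^\sigma_n b)$, is computed in the P\l{}onka sum as $\sigma^{\m A_s}(\xi^{\sigma 1}_{ss}(a_1 \odot^\sigma_1 b), \dots) = \sigma^{\m A_s}(\xi^{\sigma 1}_{p_1 s}(a_1), \dots, \xi^{\sigma n}_{p_n s}(a_n))$ since $\xi_{ss}$ is the identity. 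The two sides agree. For \ref{PF5s}, the band $\odot^\sigma_0$ is a left normal band whose semilattice replica is $\m I$ and for which $\sigma^\m A(a_1,\dots,a_n) \equiv_{\odot^\sigma_0} a_1 \odot^\sigma_0 \dots \odot^\sigma_0 a_n$ (both sit in the fiber $A_r$ with $r = p_1 \lor \dots \lor p_n$, which is visible directly from the P\l{}onka sum definition and the computation of joins in $\m I$); combined with condition~(\ref{*}) from the proof of Lemma~\ref{lem:LNB:induces:DS} (if $c \equiv_\odot d$ then $b \odot c = b \odot d$), this gives $b \odot^\sigma_0 \sigma^\m A(a_1,\dots,a_n) = b \odot^\sigma_0 (a_1 \odot^\sigma_0 \dots \odot^\sigma_0 a_n) = b \odot^\sigma_0 a_1 \odot^\sigma_0 \dots \odot^\sigma_0 a_n$ using associativity \ref{PF2}. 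For a constant symbol $\omega$, the P\l{}onka sum requires $\m I$ to have a least element $\bot$ and $\omega^\m A = \omega^{\m A_\bot} \in A_\bot$; hence $\omega^\m A \jle_{\odot^\omega_0} b$ for all $b$, so $b \odot^\omega_0 \omega^\m A = b$, which is \ref{PF5w}.

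Finally I would note that this partition system does induce $\m A$ as its P\l{}onka sum: the induced equivalence is $\equiv_{\odot^\sigma_0}$, which is the fiber partition $\{A_p\}$, the induced semilattice of representatives is (isomorphic to) $\m I$, and the metamorphisms reconstructed by Theorem~\ref{thm:Plonka:decomposition:meta}(2) via $a \odot^\sigma_i q = \xi^{\sigma i}_{pq}(a)$ are exactly the $\xi_{pq}$ we started with, so the reconstructed P\l{}onka sum coincides with $\m A$ on the nose. I expect the only mildly delicate point to be bookkeeping the indices in the verification of \ref{PF4s} — making sure the join $s$ absorbs $r$ correctly and that functoriality is invoked in each coordinate $i$ separately rather than for the metamorphism as a whole — but no genuinely new idea beyond Lemmas~\ref{lem:DS:induces:LNB} and~\ref{lem:A:PF5:closure:under:sigma} and the definition of a metamorphism is needed.
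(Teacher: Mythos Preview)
Your proposal is correct and follows essentially the same route as the paper: define $\odot^\sigma_i$ from the component maps $\xi^{\sigma i}_{pq}$, invoke Lemma~\ref{lem:DS:induces:LNB} for the left-normal-band and homotacticity claims, verify \ref{PF4s}--\ref{PF5w}, and then recover $\Xi$ from the resulting partition system. The only slip is the one you anticipated in your final paragraph: $a_j \odot^\sigma_j b$ lies in $A_{p_j\lor q}$, not in $A_s$, so the right-hand side of \ref{PF4s} is computed in the P\l{}onka sum via $\xi^{\sigma j}_{(p_j\lor q),\,s}$ rather than $\xi^{\sigma j}_{ss}$, after which functoriality gives $\xi^{\sigma j}_{p_j s}(a_j)$ and the two sides match (this is exactly how the paper handles it, introducing the intermediate indices $s_j = p_j\lor q$).
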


\begin{proof}
Let $\Xi = \{\xi_{pq}\:\m A_p\meta\m A_q : p\jle q \text{ in }\m I\}$ be a semilattice directed system of metamorphisms and $\m A$ its P\l{}onka sum. For every $n$-ary operation or constant symbol $\sigma$ and every natural $i\le n$, we define the operation $\odot^\sigma_i\: A^2\to A$ as follows: for every $a\in A_p$ and $b\in A_q$
\[
a\odot^\sigma_i b = \xi^{\sigma i}_{ps}(a), \qquad \text{where } s:=p\lor q.
\]
We will show that the assignment $\sigma\mapsto\pair{\odot^\sigma_0,\dots,\odot^\sigma_n}$ is a partition system for $\m A$ and that its induced directed system of metamorphisms is $\Xi$.

By Lemma~\ref{lem:DS:induces:LNB}, every $\odot^\sigma_i$ is a left normal band on $A$, all of them are homotactic and the induced common partition is $\{A_p : p\in I\}$. As for the rest of the properties, suppose that $b\in A_q$, $a_i\in A_{p_i}$ and $s_i = p_i\lor q$, for $i=1,\dots, n$,  $w = p_1\lor\dots\lor p_n$, $x = w\lor q$, and $t_i = p_1\lor\dots\lor p_i\lor q$, for all $i=1,\dots,n$, and therefore $t_n = x$.
\begin{enumerate}[align=left, left=0pt .. \parindent]
\item[(PF4$^\sigma$)] $\begin{aligned}[t]
    \sigma^\m A(a_1,\dots,a_n)\odot^\sigma_0 b &= \sigma^{\m A_w}(\xi^{\sigma 1}_{{p_1}w}(a_1),\dots, \xi^{\sigma n}_{{p_n}w}(a_n))\odot^\sigma_0 b \\
    &= \xi^{\sigma 0}_{wx}(\sigma^{\m A_w}(\xi^{\sigma 1}_{{p_1}w}(a_1),\dots, \xi^{\sigma n}_{{p_n}w}(a_n))) \\
    &= \sigma^{\m A_w}(\xi^{\sigma 1}_{wx}\xi^{\sigma 1}_{{p_1}w}(a_1),\dots, \xi^{\sigma n}_{wx}\xi^{\sigma n}_{{p_n}w}(a_n)) \\
    &= \sigma^{\m A_w}(\xi^{\sigma 1}_{{s_1}x}\xi^{\sigma 1}_{{p_1}{s_1}}(a_1),\dots, \xi^{\sigma n}_{{s_n}x}\xi^{\sigma n}_{{p_n}{s_n}}(a_n)) \\
    &= \sigma^{\m A_w}(\xi^{\sigma 1}_{{s_1}x}(a_1\odot^\sigma_1 b),\dots, \xi^{\sigma n}_{{s_n}x}(a_n\odot^\sigma_n b)) \\
    &= \sigma^\m A(a_1\odot^\sigma_1 b,\dots, a_n\odot^\sigma_n b).
\end{aligned}$

\item[(PF5$^\sigma$)] $\begin{aligned}[t]
    b\odot^\sigma_0 \sigma^\m A(a_1,\dots, a_n) &= b\odot^\sigma_0 \sigma^{\m A_w}(\xi^{\sigma 1}_{{p_1}w}(a_1),\dots, \xi^{\sigma 1}_{{p_1}w}(a_n)) = \xi^{\sigma 0}_{qx}(b) \\
    &=\xi^{\sigma 0}_{t_{n-1}t_n}\cdots\, \xi^{\sigma 0}_{t_1t_2}\xi^{\sigma 0}_{qt_1}(b) \\
    &=\xi^{\sigma 0}_{t_{n-1}t_n}\cdots\, \xi^{\sigma 0}_{t_1t_2}(b\odot^\sigma_0  a_1) \\
    &=\xi^{\sigma 0}_{t_{n-1}t_n}(b\odot^\sigma_0  a_1\odot^\sigma_0 \dots \odot^\sigma_0 a_{n-1}) \\
    &= b\odot^\sigma_0  a_1\odot^\sigma_0 \dots \odot^\sigma_0 a_n.
\end{aligned}$

\item[(PF6$^\omega$)] $b\odot^\omega_0 \omega^\m A = \xi^{\omega 0}_{qq}(b) = b$, for every constant $\omega$, since $\omega^\m A \in A_\bot$ and $\bot \jle q$.
\end{enumerate}

Thus we have seen that the assignment $\sigma\mapsto\pair{\odot^\sigma_0,\dots,\odot^\sigma_n}$ is a partition system for $\m A$. Let $Z=\{\zeta_{pq}\:\m A_p\to\m A_q : p\jle q \text{ in }\m I\}$ be its induced directed system of metamorphisms. Thus, for all $p\jle q$ in $\m I$ and $a\in A_p$, and every $n$-ary symbol~$\sigma$ and $i=1,\dots,n$, we have that $\zeta^{\sigma i}_{pq}(a) = a\odot^\sigma_i q = \xi^{\sigma i}_{pq}(a)$. That is, $Z = \Xi$.
\end{proof}

The main reason for introducing this notion of P\l{}onka sum, as we have already mentioned, is to establish an appropriate framework to study the compositions of residuated semigroups. This construction in fact goes beyond the theory of residuated semigroups and it would be interesting to explore it further in subsequent work. Nonetheless, to conclude this section, we present a result in this direction that shows how an algebraic property can be transferred from a fibrant algebra to its fibers and vice versa. Recall that an algebra $\m A$ is \emph{locally finite} if the subalgebra generated by any finite set of elements of $\m A$ is finite.

\begin{theorem}
A fibrant algebra of finite type is locally finite if and only if all its fibers are locally finite.
\end{theorem}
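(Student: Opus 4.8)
The plan is to prove both directions by exploiting the structure of the P\l{}onka sum together with the fact that, for a fibrant algebra $\m A$ of finite type, each fiber inclusion $A_p \hookrightarrow A$ is (after forgetting constants) a subalgebra embedding, and that every element of $\m A$ can be ``pushed up'' into any fiber above it via the components of the metamorphisms.

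\begin{proof}
Write $\m A$ as the P\l{}onka sum of a semilattice directed system of metamorphisms $\Xi = \{\xi_{pq}\colon \m A_p \meta \m A_q : p \jle q \text{ in } \m I\}$, so that $A = \biguplus_{p \in I} A_p$. Recall that, by the remark following the definition of a fibrant algebra, the constant-free reduct of each fiber $\m A_p$ is a subalgebra of the constant-free reduct of $\m A$, and if the type has constants then $\m A_\bot$ is a subalgebra of $\m A$.

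Suppose first that $\m A$ is locally finite, and let $X \subseteq A_p$ be finite. The subalgebra of $\m A$ generated by $X$ together with $\omega^{\m A}$ for each (of the finitely many) constant symbols $\omega$ is finite; call it $\m B$. Since $\m A_p$ (with constants, when present) is a subalgebra of $\m A$ and $X \cup \{\omega^{\m A_p}\}_{\omega} \subseteq A_p$, and since for a constant $\omega$ one has $\omega^{\m A_p} = \omega^{\m A} \odot^{\omega}_0 p$ lies in the subalgebra generated by $\omega^\m A$ inside $\m A_p$ — more simply, the subalgebra of $\m A_p$ generated by $X$ is contained in $B \cap A_p$, which is finite. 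Hence $\m A_p$ is locally finite.

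For the converse, assume every fiber is locally finite, and let $X \subseteq A$ be finite. For each $x \in X$ let $p_x$ be its index, and let $q := \bigvee_{x \in X} p_x$, the (finite) join in $\m I$. The key observation is that any term evaluated at arguments drawn from $X$ lands in a fiber $\m A_r$ with $r \jle q$, and more precisely that, by the P\l{}onka sum definition of the operations and by Lemma~\ref{lem:A:PF5:closure:under:sigma} applied to the band $\odot^{\sigma}_0$, the value of any composite term is computed inside a single fiber after applying the appropriate metamorphism components to push the generators up. Concretely, for each $x \in X$ and each $n$-ary symbol $\sigma$ and $i \le n$, consider the finitely many elements $\xi^{\sigma i}_{p_x r}(x)$ for $r \jle q$; since $\m I$ restricted to $\dn q = \{r \in I : r \jle q\}$ is a finite semilattice (being generated as a join-semilattice by the finitely many $p_x$), this is a finite set $Y \subseteq A$, and $Y \cap A_r$ is finite for each $r \jle q$. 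Let $\m C_r$ be the (finite) subalgebra of $\m A_r$ generated by $Y \cap A_r$, and put $C := \bigcup_{r \jle q} C_r$, a finite subset of $A$. One checks that $C$ is closed under the operations of $\m A$: given $c_1 \in C_{r_1}, \dots, c_n \in C_{r_n}$ and $s = r_1 \lor \dots \lor r_n \jle q$, the value $\sigma^{\m A}(c_1,\dots,c_n) = \sigma^{\m A_s}(\xi^{\sigma 1}_{r_1 s}(c_1),\dots,\xi^{\sigma n}_{r_n s}(c_n))$ lies in $C_s$ provided each $\xi^{\sigma i}_{r_i s}(c_i) \in C_s$; and this last containment is arranged by enlarging each generating set $Y \cap A_r$ to be closed under all the (finitely many) maps $\xi^{\sigma i}_{r s}$ for $r \jle s \jle q$ before taking subalgebras, iterating this closure finitely many times until it stabilizes — which it does, since at each stage we stay within $\bigcup_{r \jle q} A_r$ and each $\m C_r$ is finite. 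The resulting finite $C$ is a subalgebra of $\m A$ containing $X$, so $\m A$ is locally finite.

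The main obstacle is the converse direction: one must verify that the closure process for building $C$ actually terminates and yields a subalgebra of the whole P\l{}onka sum, not just fiberwise. This hinges on two finiteness facts — that $\dn q$ is a finite semilattice, and that the type is finite so there are only finitely many metamorphism components $\xi^{\sigma i}_{rs}$ to close under — together with the observation that applying these components never leaves $\bigcup_{r \jle q} A_r$. Once the generating sets are closed under these finitely many maps and we pass to fiberwise-generated subalgebras (each finite by hypothesis), a further round of applying the maps keeps us inside the already-finite union of fibers over $\dn q$, and one argues by a standard stabilization argument that finitely many rounds suffice.
\end{proof}
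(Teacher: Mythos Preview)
Your overall strategy for the converse direction is sound, but there is a genuine error in a key step: you assert that $\dn q = \{r \in I : r \jle q\}$ is a finite semilattice ``being generated as a join-semilattice by the finitely many $p_x$.'' This is false. The downset of $q$ in $\m I$ can be infinite; what is finite is the \emph{join-subsemilattice} generated by $\{p_x : x \in X\}$, namely
\[
J_X := \{p_{x_1}\lor\dots\lor p_{x_n} : x_1,\dots,x_n\in X\},
\]
which is a much smaller set. Since your construction takes a union $\bigcup_{r\jle q}C_r$ over all of $\dn q$, an infinite $\dn q$ makes $C$ potentially infinite and the argument collapses. The correct observation (which you almost state, via Lemma~\ref{lem:A:PF5:closure:under:sigma}) is that the index of any term evaluated at elements of $X$ lies in $J_X$, not merely in $\dn q$; hence only the finitely many fibers $\m A_r$ with $r\in J_X$ are relevant. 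If the type has constants you must also adjoin $\bot$ to $J_X$, as the paper does.

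Even after replacing $\dn q$ by $J_X$, your ``iterate and stabilize'' argument is not yet a proof. You close each $Y\cap A_r$ under the finitely many maps $\xi^{\sigma i}_{rs}$, pass to fiberwise-generated subalgebras, and repeat; but you do not justify why this terminates. Termination is not automatic from ``everything stays finite at each stage'': one needs to exploit that the maps $\xi^{\sigma i}_{rs}$ only go \emph{upward} in $J_X$, so that once the fibers at minimal indices have been generated, their images into higher fibers are fixed, and one can propagate upward by induction on the height of $r$ in $J_X$. The paper carries this out as an explicit induction on $|J_X|$: it peels off a minimal $p\in J_X\setminus\{\bot\}$, generates first in $\m A_\bot$, then in $\m A_p$ (using the images from $\m A_\bot$), and applies the inductive hypothesis to the remaining indices $J_X\setminus\{p\}$. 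Your approach can be made to work along the same lines, but the ``standard stabilization argument'' you invoke needs to be replaced by such an induction.
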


\begin{proof}
Let $\Xi = \{\xi_{pq}\:\m A_p\meta\m A_q : p\jle q \text{ in }\m I\}$ be a semilattice directed system of metamorphisms between algebras of finite type and $\m A$ its P\l{}onka sum. If $\m A$ is locally finite, then every fiber is also locally finite, because fibers are subalgebras of $\m A$. Suppose now that all the fibers of $\m A$ are locally finite. We will just consider the case in which $\m I$ contains a least element $\bot$. If it does not (in which case the type does not contain any constant symbols), the proof would be analogous but simpler. For every finite set $X\subset A$, consider the set
\[
J_X = \{p_1\lor\dots\lor p_n : \text{ there are } a_1,\dots, a_n \in X,\ a_i\in A_{p_i}\},
\]
which is also finite, closed under joins, and contains $\bot$ (for $n=0$). We will prove that there is a finite subalgebra $\m B\le \m A$ such that $X\subset B$, $J_B = J_X$, and $\xi^{\sigma i}_{rs}(a)\in B$ for every $n$-ary symbol $\sigma$, all $1\le i\le n$, all $r,s\in J_X$ such that $r\jle s$, and every $a\in B\cap A_r$, by induction on the size of $J_X$.

If $J_X = \{\bot\}$, then $X\subset A_\bot$. Consider the subalgebra $\m B\le\m A_\bot$ generated by~$X$. Since $\m A_\bot$ is a locally finite subalgebra of $\m A$, we deduce that $\m B\le\m A$, $\m B$ is finite, $X\subset B$, $J_B = \{\bot\} = J_X$. The last part of the statement is trivially satisfied.

In the remainder of the proof, we deal with the case where $J_X\neq\{\bot\}$. Let $p$ be a minimal element of $J_X\setminus\{\bot\}$. Consider the subalgebra $\m C\le \m A_\bot$ generated by $X\cap A_\bot$, which is finite, because $X\cap A_\bot$ is finite and $\m A_\bot$ is locally finite. Define the set
\[
Y = (X\cap A_p)\cup\{\xi^{\sigma i}_{\bot p}(a) : a\in C, \text{ $n$-ary $\sigma$},\ 1\le i\le n\} \subset A_p,
\]
and consider the subalgebra $\m D \le \m A_p$ generated by $Y$, which is finite because $Y$ is finite and $\m A_p$ is locally finite. Define the set
\begin{align*}
Z = \big(X\setminus A_p\big) 
&\cup \big\{\xi^{\sigma i}_{\bot q}(a) : a\in C, \text{ $n$-ary $\sigma$},\ 1\le i\le n,\ q\in J_X\setminus\{p\}\big\} \\
&\cup \big\{\xi^{\sigma i}_{pq}(a) : a\in D, \text{ $n$-ary $\sigma$},\ 1\le i\le n,\ p\jle q\in J_X\setminus\{p\}\big\}.    
\end{align*}
Notice that $Z$ is finite and $J_Z = J_X\setminus\{p\}$ because of our choice of $p$. Hence, by the inductive hypothesis, there is an algebra $\m B\le\m A$ such that $Z\subset B$, $J_B = J_Z$, and $\xi^{\sigma i}_{rs}(a)\in B$ for every $n$-ary symbol $\sigma$, all $1\le i\le n$, all $r,s\in J_Z$ such that $r\jle s$, and $a\in B\cap A_r$.

Consider the finite set $E = (B\setminus A_\bot)\cup C\cup D$, which is actually a disjoint union. Obviously, $X\subset E$ and $J_E = J_X$. Suppose that $\sigma$ is an $n$-ary symbol and $1\le i\le n$; consider $r,s\in J_X$ such that $r\jle s$ and $a\in E\cap A_r$. One of the following cases would hold:
\begin{enumerate}[(a)]
    \item If $r=s=\bot$, then $\xi^{\sigma i}_{rs}(a) = \xi^{\sigma i}_{\bot\bot}(a) = a\in C$.

    \item If $r=s=p$, then $\xi^{\sigma i}_{rs}(a) = \xi^{\sigma i}_{pp}(a) = a\in D$.
    
    \item If $r=\bot$ and $s=p$, then $a\in C$ and $\xi^{\sigma i}_{rs}(a) = \xi^{\sigma i}_{\bot p}(a) \in Y\subset D$.
    
    \item If $r=\bot$ and $s\neq p$, then $a\in C$ and $\xi^{\sigma i}_{rs}(a) = \xi^{\sigma i}_{\bot s}(a)\in Z\subset B$. 

    \item If $r=p$ and $s\neq p$, then $a\in D$ and $\xi^{\sigma i}_{rs}(a) = \xi^{\sigma i}_{ps}(a) \in Z\subset B$. 
    
    \item Otherwise, $r,s\in J_Z\setminus\{\bot\}$ and then $a\in B\cap A_r$ and $\xi^{\sigma i}_{rs}(a)\in B$.
\end{enumerate}
In any event, $\xi^{\sigma i}_{rs}(a)\in E$.

In order to check that $E$ is the universe of a subalgebra of $\m A$, notice first that it contains all the constants of $\m A$ because $\m C$ is a subalgebra of~$\m A$. Let's check that $E$ is also closed under every other $n$-ary operation $\sigma$. Suppose that $a_1,\dots,a_n\in E$, where $a_i\in A_{p_i}$ for all $1\le i\le n$. In particular, $p_1,\dots,p_n\in J_X$ and hence $q = p_1\lor\dots\lor p_n\in J_X$.

\begin{itemize}[$\circ$]
    \item If $q=\bot$, then $p_1 = \dots = p_n = \bot$ and hence $a_1,\dots,a_n\in C$ and $\sigma^\m A(a_1,\dots,a_n)\in C$, because $\m C$ is a subalgebra of $\m A$.

    \item If $q=p$, then $a_1,\dots,a_n\in C\cup D$. If $a_i\in C$, then $p_i = \bot$ and $\xi^{\sigma i}_{p_ip}(a_i) = \xi^{\sigma i}_{\bot p}(a_i)\in Y\subset D$; and if $a_i\in D$, then $p_i = p$ and $\xi^{\sigma i}_{p_ip}(a_i) = \xi^{\sigma i}_{pp}(a_i) = a_i\in D$. Hence,
    \[
    \sigma^\m A(a_1,\dots,a_n) = \sigma^{\m A_p}(\xi^{\sigma 1}_{p_1p}(a_1),\dots, \xi^{\sigma n}_{p_np}(a_n))\in D,
    \]
    because $\m D$ is a subalgebra of $\m A_p$.

    \item Otherwise, $q\in J_Z$. By parts~(d), (e), and~(f), we have that $\xi^{\sigma i}_{p_iq}(a_i)\in B$, and therefore
    \begin{align*}
    \sigma^\m A(a_1,\dots,a_n) &= \sigma^{\m A_q}(\xi^{\sigma 1}_{p_1q}(a_1),\dots, \xi^{\sigma n}_{p_nq}(a_n)) \\
    &= \sigma^{\m A}(\xi^{\sigma 1}_{p_1q}(a_1),\dots, \xi^{\sigma n}_{p_nq}(a_n))\in B,
    \end{align*}
    because $\m B$ is a subalgebra of $\m A$.\QED
\end{itemize}\let\QED\relax
\end{proof}

\section{Sums of Posets and Partially Ordered Algebras}\label{sec:sums:of:posets}

The theory of P\l{}onka sums has also been generalized to encompass the case of (sums of) arbitrary relational systems~\cite{Plonka73}. However, such an approach does not seem to be very fruitful to treat partially ordered sets. In this section, we introduce and investigate the correct notions of decomposition and reconstruction of partially ordered set that will serve our purposes. More concretely, given a family $\{\m A_p : p\in I\}$ of disjoint posets, we would like to define an order $\le$ on the disjoint union $A = \biguplus A_p$ of their universes \emph{extending} each one of the partial orders, meaning that ${\le}\cap A_p^2 = {\le}_p$, for every $p\in I$. Obviously, we could just take ${\le} = \bigcup_{p \in I}{\le}_p$, but this would be insufficient for our needs, since the families of posets in which we are interested are actually connected via some maps, and the order $\le$ should be compatible, in some sense, with these maps. 

Consider the category $\Pos^\parallel$ of posets and pairs of parallel monotone maps $\pair{\psi,\varphi}\:\m A\to \m A'$. A~\emph{directed system of parallel monotone maps} is a functor $\Xi\: \m I\to\Pos^\parallel$ where $\m I = \pair{I,\lor}$ is a join-semilattice. As in the previous section, we can also equate a directed system to a coherent family $\{\pair{\psi_{pq},\varphi_{pq}}\: {\m A_p\to \m A_q} : p\jle q \text{ in }\m I\}$. Given such a directed system, we define the relation $\le$ on $A := \biguplus A_p$ as follows: for all $p,q\in I$, $a\in A_p$, and $b\in A_q$,
\begin{equation}\label{eq:def:sum:order}
a\le b \quad\iff\quad \varphi_{ps}(a) \le_s \psi_{qs}(b),\qquad \text{where } s:=p\lor q.    
\end{equation}
We call $\m A = \pair{A,\le}$ the \emph{sum} of the directed system and $\m A_p$ the \emph{fiber} of $\m A$ at $p\in I$. Now, this relation does not have to be an order in general, but it will be one if the following two conditions are satisfied. 

\begin{center}
\begin{tikzpicture}[scale=3]
\draw (0,1.5)..controls(-.5,1.75)and(-.5,2.25)..(0,2.5)
..controls(.5,2.25)and(.5,1.75)..(0,1.5);
\node at (0,1.2)[n]{$\mathbf A_p$};
\draw (1,2.1)..controls(.5,2.35)and(.5,2.85)..(1,3.1)
..controls(1.5,2.85)and(1.5,2.35)..(1,2.1);
\node at (1.55,3)[n]{$\mathbf A_r$};
\node at (0.45,2.5)[n]{\scriptsize $\varphi$};
\node at (0.45,1.5)[n]{\scriptsize $\psi$};
\node at (1.55,2.6)[n]{\scriptsize $\psi$};
\node at (1.55,1.4)[n]{\scriptsize $\varphi$};
\node at (0.4,0.7)[n]{(S1)};
\node(2) at (2,2.3){};
\node(3) at (2,1.7){};
\node(0) at (1,2.6){} edge[dashed,->](2);
\node(1) at (1,1.4){} edge[->](3);
\node at (0,2){} edge[->](0) edge[dashed,->](1);
\draw (1,0.9)..controls(.5,1.15)and(.5,1.65)..(1,1.9)
..controls(1.5,1.65)and(1.5,1.15)..(1,0.9);
\node at (1.5,.9)[n]{$\mathbf A_q$};
\draw (2,1.3)..controls(1.5,1.75)and(1.5,2.25)..(2,2.7)
..controls(2.5,2.25)and(2.5,1.75)..(2,1.3);
\node at (2.3,1)[n]{$\mathbf A_{q{\vee}r}$};
\node at (2,2)[n]{\rotatebox[origin=c]{90}{$\le$}};
\end{tikzpicture}
\qquad\qquad 
\begin{tikzpicture}[scale=3]
\draw (0,1)..controls(-.5,1.66)and(-.5,2.33)..(0,3)
..controls(.5,2.33)and(.5,1.66)..(0,1);
\node at (0,0.7)[n]{$\mathbf A_p$};
\node at (0.5,2.6)[n]{\scriptsize $\psi$};
\node at (0.5,1.45)[n]{\scriptsize $\varphi$};
\node at (0.75,0.7)[n]{(S2)};
\node(0) at (1.3,1.7){};
\node(1) at (1.7,2.3){};
\node at (-.13,1.5){} edge[->](0);
\node at (.13,2.5){} edge[dashed,->](1);
\draw (1.5,1)..controls(1,1.66)and(1,2.33)..(1.5,3)
..controls(2,2.33)and(2,1.66)..(1.5,1);
\node at (0.75,2)[n]{$\Leftarrow$};
\node at (1.5,0.7)[n]{$\mathbf A_q$};
\node at (1.5,2)[n]{\rotatebox[origin=c]{45}{$\le$}};
\node at (0,2)[n]{\rotatebox[origin=c]{75}{$<$}};
\end{tikzpicture}
\end{center}

\begin{enumerate}[(S1),leftmargin=*,series=S]
\item\label{S1} If $p \jl q,r$ and $t = q\lor r$, then $\varphi_{qt}\psi_{pq} \le_t \psi_{rt}\varphi_{pr}$ pointwise.

\item\label{S2} If $p \jl q$ and $a,b\in A_p$ and $\varphi_{pq} (a) \le_q \psi_{pq} (b)$, then $a <_p b$.
\end{enumerate}

\begin{theorem}\label{thm:char:sums:of:posets:over:directed:systems}
Given a directed systems of monotone maps $\Xi$, the relation $\le$ defined by~\eqref{eq:def:sum:order} is a partial order extending the order of each poset if and only if $\Xi$ satisfies~\ref{S1} and~\ref{S2}.
\end{theorem}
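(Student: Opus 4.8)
The plan is to prove both directions. For the forward direction I would assume $\le$ is a partial order extending each $\le_p$ and derive \ref{S1} and \ref{S2}. Condition \ref{S2} is essentially immediate from the definition: if $p \jl q$ and $a,b \in A_p$ with $\varphi_{pq}(a) \le_q \psi_{pq}(b)$, then this is precisely the condition $a \le b$ when the join of the two indices is $q$; but here both indices equal $p$, so I need to be a little careful. Actually, when $a,b$ both lie in $A_p$ the definition \eqref{eq:def:sum:order} uses $s = p \lor p = p$, giving $a \le b \iff \varphi_{pp}(a) \le_p \psi_{pp}(b) \iff a \le_p b$. So the hypothesis of \ref{S2} is a statement at index $q$, not the defining condition. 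The right way to see \ref{S2}: from $\varphi_{pq}(a) \le_q \psi_{pq}(b)$ one should be able to deduce $a \le b$ in the sum using functoriality (composing the maps $p \to q$ with identities), hence $a \le_p b$ since $a,b \in A_p$ and $\le$ extends $\le_p$; strictness then follows because if $a = b$ we would get a chain forcing $a \le_q$-related to itself in a way that pushes $a$ into $A_q$, contradicting $a \in A_p$ — this needs the antisymmetry of $\le$ together with the fact that $\varphi_{pq}(a) \in A_q$, $\psi_{pq}(b) \in A_q$ lie in a different fiber. For \ref{S1}, I would take $p \jl q, r$, $t = q \lor r$, and an element $a \in A_p$; apply reflexivity $a \le a$ but with the two "copies" of $a$ viewed through the two legs — more precisely, use that $\varphi_{pq}\psi_{pq}(a)$-type expressions and the transitivity of $\le$ along the zig-zag $A_q \to A_t \leftarrow A_r$ to extract the pointwise inequality $\varphi_{qt}\psi_{pq}(a) \le_t \psi_{rt}\varphi_{pr}(a)$; the key is to realize $\varphi_{pq}(a) \le \varphi_{pr}(a)$ or a similar relation in the sum and unwind the definition at index $t$, using functoriality $\varphi_{pt} = \varphi_{qt}\varphi_{pq}$, $\psi_{pt} = \psi_{rt}\psi_{pr}$, etc.

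For the converse — assuming \ref{S1} and \ref{S2}, show $\le$ is a partial order extending each $\le_p$ — I would check the four properties in turn. \emph{Extension}: immediate, since for $a,b \in A_p$ we have $s = p$ and the maps are identities, so $a \le b \iff a \le_p b$. \emph{Reflexivity}: $a \le a$ reduces to $\varphi_{pp}(a) \le_p \psi_{pp}(a)$, i.e. $a \le_p a$. \emph{Antisymmetry}: suppose $a \le b$ and $b \le a$ with $a \in A_p$, $b \in A_q$, $s = p \lor q$; then $\varphi_{ps}(a) \le_s \psi_{qs}(b)$ and $\varphi_{qs}(b) \le_s \psi_{ps}(a)$. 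I want to conclude $p = q$ (so that $s = p = q$) and then $a = b$. If $p \neq q$, then WLOG $p \jl s$ strictly; I would try to combine the two inequalities via transitivity in $\m A_s$ together with monotonicity of the maps $\varphi, \psi$ to produce, for elements of $A_p$ pushed into $A_s$, an inequality of the form $\varphi_{ps}(x) \le_s \psi_{ps}(y)$ which by \ref{S2} forces $x <_p y$ for the relevant $x, y \in A_p$; applying this in a loop (or with $x = y = a$) yields $a <_p a$, a contradiction. Hence $p = q$, $s = p$, the maps are identities, and $a \le_p b$, $b \le_p a$ give $a = b$ by antisymmetry of $\le_p$.

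\emph{Transitivity} is the main obstacle and the heart of the theorem. Suppose $a \le b \le c$ with $a \in A_p$, $b \in A_q$, $c \in A_r$. Set $s = p \lor q$, $t = q \lor r$, and $v = p \lor q \lor r = s \lor r = p \lor t$. I want $\varphi_{pv}(a) \le_v \psi_{rv}(c)$. From $a \le b$: $\varphi_{ps}(a) \le_s \psi_{qs}(b)$; pushing through $\varphi_{sv}$ (monotone) and using functoriality, $\varphi_{pv}(a) = \varphi_{sv}\varphi_{ps}(a) \le_v \varphi_{sv}\psi_{qs}(b)$. From $b \le c$: $\varphi_{qt}(b) \le_t \psi_{rt}(c)$; pushing through $\psi_{tv}$ (monotone), $\psi_{tv}\varphi_{qt}(b) \le_v \psi_{tv}\psi_{rt}(c) = \psi_{rv}(c)$ by functoriality. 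So it suffices to link $\varphi_{sv}\psi_{qs}(b)$ with $\psi_{tv}\varphi_{qt}(b)$, i.e. to show $\varphi_{sv}\psi_{qs}(b) \le_v \psi_{tv}\varphi_{qt}(b)$. This is exactly where \ref{S1} enters: applying \ref{S1} with the triple $(q; s, t)$ in place of $(p; q, r)$ — noting $q \jle s$ and $q \jle t$ and $s \lor t = v$ — gives $\varphi_{sv}\psi_{qs} \le_v \psi_{tv}\varphi_{qs}$... wait, I must match indices carefully: \ref{S1} with apex $q$ and the two elements $s = q \lor (\text{something involving } p)$ and $t = q \lor (\text{something involving } r)$ yields $\varphi_{sv}\psi_{qs} \le_v \psi_{tv}\varphi_{qt}$ pointwise, applied to $b \in A_q$. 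Chaining the three inequalities by transitivity of $\le_v$ in $\m A_v$ gives $\varphi_{pv}(a) \le_v \psi_{rv}(c)$, which unwinds to $a \le c$. The delicate points are the bookkeeping of which maps are $\varphi$ versus $\psi$ and verifying that the instance of \ref{S1} one needs is legitimate when some of the inequalities among $p, q, r$ are non-strict (in the degenerate cases where indices coincide, the maps become identities and the argument simplifies); these degenerate cases should be dispatched separately or folded in by allowing $\jle$ in \ref{S1} rather than strict $\jl$ when the conclusion is vacuous or trivial.
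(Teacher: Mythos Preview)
Your overall architecture matches the paper's, but there is a real gap in the transitivity argument. You aim for $\varphi_{pv}(a)\le_v\psi_{rv}(c)$ with $v=p\lor q\lor r$, and then claim this ``unwinds to $a\le c$''. It does not: the defining condition for $a\le c$ lives at level $u:=p\lor r$, not $v$, and in general $u\jl v$ strictly. What you have proved is $\varphi_{uv}(\varphi_{pu}(a))\le_v\psi_{uv}(\psi_{ru}(c))$, and one still needs to descend from $v$ to $u$. The paper does this by a further application of~\ref{S2} (with the pair $\varphi_{pu}(a),\psi_{ru}(c)\in A_u$) to obtain $\varphi_{pu}(a)<_u\psi_{ru}(c)$ when $u\jl v$, and treats $u=v$ separately. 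Without this step the argument is incomplete.

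A second gap appears in antisymmetry. To chain $\varphi_{ps}(a)\le_s\psi_{qs}(b)$ and $\varphi_{qs}(b)\le_s\psi_{ps}(a)$ into a contradiction you need the pointwise inequality $\psi_{qs}(b)\le_s\varphi_{qs}(b)$, and you attribute this to ``monotonicity of the maps''. It is not a consequence of monotonicity; it is a consequence of~\ref{S1} (take $q=r$ there, so $t=q$, and the conclusion reads $\psi_{pq}\le_q\varphi_{pq}$). The paper isolates this as an intermediate lemma (called~\ref{S4}, with a strict version~\ref{S5}) before attacking antisymmetry. Your forward-direction sketch for~\ref{S1} is also too vague: the clean route, as in the paper, is to observe from the definition that $\psi_{pq}(a)\le a$ and $a\le\varphi_{pr}(a)$ hold in the sum, and then apply transitivity followed by the definition at level $t=q\lor r$.
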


\begin{proof} 
Let's first prove that conditions~\ref{S1} and~\ref{S2} are necessary so that definition~\eqref{eq:def:sum:order} determines a partial order on $A = \biguplus A_p$ extending the order of every~$\m A_p$. If $p \jl q$ and $a\in A_p$, then $q = p\lor q$ and $\varphi_{qq}(\psi_{pq}(a) )\le_q \psi_{pq}(a)$, and therefore $\psi_{pq}(a)\le a$. Analogously, if $p \jl r$, then $a\le\varphi_{pr}(a)$, and therefore $\psi_{pq}(a)\le\varphi_{pr}(a)$, by the transitivity of $\le$. By virtue of definition~\eqref{eq:def:sum:order}, $\varphi_{qt}(\psi_{pq}(a)) \le_t \psi_{rt}(\varphi_{pr}(a))$, where $t = q\lor r$. Hence, \ref{S1} holds. Moreover, if $p \jl q$ and $a,b\in A_p$ are such that $\varphi_{pq}(a)\le_q\psi_{pq}(b)$, then
\[
a \le \varphi_{pq}(a) \le \psi_{pq} (b) \le b,
\]
as we have shown above, and by transitivity we obtain that $a\le b$, and in particular $a\le_p b$. But if $a = b$, then it follows that $\varphi_{pq}(a) = a$, which is impossible because $p \jl q$. Therefore, $a <_p b$, which proves~\ref{S2}.

Suppose now that $\Xi$ satisfies~\ref{S1} and~\ref{S2}. We first prove that the following three properties are also satisfied.

\begin{enumerate}[resume*=S]
\item\label{S3} If $p \jle q,r$ and $t = q\lor r$, then $\varphi_{qt}\psi_{pq} \le_t \psi_{rt}\varphi_{pr}$ pointwise.
\item\label{S4} If $p\jle q$, then $\psi_{pq} \le_q \varphi_{pq}$ pointwise. 
\item\label{S5} If $p \jl q$, then $\psi_{pq} <_q \varphi_{pq}$ pointwise.
\end{enumerate}

For the first one, notice that if $p = q$, then $t = r$ and the inequation becomes $\varphi_{pr}\psi_{pp} \le_r \psi_{rr}\varphi_{pr}$, which is true because $\psi_{pp} = \mathrm{id}_{A_p}$ and $\psi_{rr} = \mathrm{id}_{A_r}$. The proof that the inequation is true when $p = r$ is analogous. Concerning the second property, notice that taking $q=r$ in~\ref{S3} it follows that $q=t=r$ and $\varphi_{qq}\psi_{pq}\le_q \psi_{qq}\varphi_{pq}$, that is, $\psi_{pq} \le_q \varphi_{pq}$. As for the third property, notice that if $p \jl q$ and $\psi_{pq}(a) = \varphi_{pq}(a)$ for some $a\in A_p$, then it follows from~\ref{S2} that $a <_p a$, which is impossible. It follows from~\ref{S4} that $\psi_{pq} <_q \varphi_{pq}$.

The fact that the restriction of $\le$ to $A_p$ is $\le_p$ for every $p\in I$ follows immediately from definition~\eqref{eq:def:sum:order} and the fact that that $\varphi_{pp} = \psi_{pp} = \mathrm{id}_{A_p}$. Indeed, if $a,b\in A_p$ are such that $a\le b$, then $p = p\lor p$ and $a = \varphi_{pp}(a) \le_p \psi_{pp}(b) = b$. And reciprocally, if $a\le_p b$, then $\varphi_{pp}(a) = a \le_p b = \psi_{pp}(b)$, and therefore $a\le b$. In particular, $\le$~is reflexive.

In order to prove antisymmetry, suppose that $a\in A_p$ and $b\in A_q$ are such that $a\le b$ and $b\le a$ and consider $s = p\lor q$. If $p \jl s$, then
\[
\varphi_{ps}(a) \le_s \psi_{qs} (b) \le_s \varphi_{qs} (b) \le_s \psi_{ps} (a) <_s \varphi_{ps} (a),
\]
by the definition of $\le$ and conditions~\ref{S4} and~\ref{S5}, which is impossible, and therefore $p = s$. For the same reason $q = s$ and therefore $p = q$, $a\le_p b$, and $b\le_p a$, whence $a=b$.

As for transitivity, suppose that $a\in A_p$, $b\in A_q$, and $c\in A_r$ are such that $a\le b\le c$. Consider $s = p\lor q$, $t = q\lor r$, $u = p\lor r$, $v = s\lor t$, and notice that $u = p\lor r \jle p\lor q\lor r = v$. By definition~\eqref{eq:def:sum:order}, we have that (*)~$\varphi_{ps} (a) \le_s \psi_{qs} (b)$ and (**)~$\varphi_{qt} (b) \le_t \psi_{rt} (c)$. Now,
\begin{align*}
\varphi_{uv} (\varphi_{pu} (a)) &= \varphi_{pv} (a)
	= \varphi_{sv} (\varphi_{ps} (a))      &&\text{by the compatibility of $\Phi$}\\
	&\le_v \varphi_{sv} (\psi_{qs} (b))	   &&\text{by the monotonicity of $\varphi_{sv}$ and (*)}\\
	&\le_v \psi_{tv} (\varphi_{qt} (b))    &&\text{by~\ref{S3}}\\
	&\le_v \psi_{tv} (\psi_{rt} (c))       &&\text{by the monotonicity of $\psi_{tv}$ and (**)}\\
	&= \psi_{rv} (c)
	= \psi_{uv} (\psi_{ru} (c)),           &&\text{by the compatibility of $\Psi$.}
\end{align*}
If $u \jl v$, we deduce by~\ref{S2} that $\varphi_{pu} (a) <_u \psi_{ru} (c)$ and if $u = v$, then $\varphi_{uv}= \mathrm{id}_{A_u}=\psi_{uv}$ and therefore $\varphi_{pu} (a) \le_u \psi_{ru} (c)$. In any event, $a \le c$ by definition~\eqref{eq:def:sum:order}. 
\end{proof}

A \emph{partition pair} for a partial ordered set $\m A = \pair{A,\le}$ is a pair $\pair{\otimes, \odot}$ of homotactic left normal bands on $A$ satisfying the following conditions.

\begin{enumerate}[(PS1), leftmargin=*, series=PS]
    \item\label{PS1} If $a\le b$, then $a\otimes c \le b\otimes c$ and $a\odot c \le b\odot c$.
    \item\label{PS2} If $a\odot b \le b\otimes a$, then $a\le b$.
\end{enumerate}

If $\pair{\otimes,\odot}$ is a partition pair for a poset $\m A = \pair{A,\le}$ and $\m I = \pair{I,\lor}$ is the induced join-semilattice on a set of representatives, we denote by $A_p$ the equivalence class of $p$ and $\m A_p = \pair{A_p,\le_p}$ the corresponding subposet of $\m A$. By virtue of Lemma~\ref{lem:LNB:induces:DS}, the family $\Xi = \{\pair{\psi_{pq},\varphi_{pq}}\: A_p\to A_q : {p\jle_\odot q} \text{ in } \m I\}$ given by $\psi_{pq}(a) = a\otimes q$\ and $\varphi_{pq}(a) = a\odot q$ is a well-defined coherent family of parallel maps. And by~\ref{PS1}, all $\psi_{pq}$ and $\varphi_{pq}$ are monotone. That is, $\Xi\:\m I\to \Pos^\parallel$ is a directed system of parallel monotone maps.

\begin{theorem}\label{thm:partition:pair:poset:sum}
Let $\m A = \pair{A,\le}$ be a poset with a partition pair $\pair{\otimes,\odot}$. Then, the directed system of parallel monotone maps $\Xi$ satisfies~\ref{S1} and~\ref{S2} and $\m A$ is its sum. Reciprocally, every directed system of parallel monotone maps satisfying~\ref{S1} and~\ref{S2} is induced by a partition pair on its sum.
\end{theorem}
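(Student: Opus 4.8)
The statement splits into two halves mirroring Theorems~\ref{thm:Plonka:decomposition:meta} and~\ref{thm:Plonka:composition:meta}. For the forward implication the first move is to show that $\m A$ \emph{is} the sum of $\Xi$: for $a\in A_p$, $b\in A_q$ and $s:=p\lor q$ one has $a\le b$ if and only if $\varphi_{ps}(a)\le_s\psi_{qs}(b)$. Once this is known, the relation that~\eqref{eq:def:sum:order} builds from $\Xi$ is exactly $\le$, which is a partial order whose restriction to each fibre $A_p$ is $\le_p$; conditions~\ref{S1} and~\ref{S2} then follow from the ``only if'' half of Theorem~\ref{thm:char:sums:of:posets:over:directed:systems}. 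So the real content of the forward implication is that one equivalence.

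I would prove it after recording two auxiliary facts. From the left normal band structure (axioms~\ref{PF1},~\ref{PF2} and property $(\dagger)$ in the proof of Lemma~\ref{lem:LNB:induces:DS}, together with Lemma~\ref{lem:PF123:join:semilattice}) one gets $x\odot y=x$ exactly when the $\equiv$-class of $y$ lies below that of $x$, and hence $a\odot b=\varphi_{ps}(a)$ and $b\otimes a=\psi_{qs}(b)$ for all $a\in A_p$, $b\in A_q$. Applying~\ref{PS2} to the pairs $\pair{a,\,a\odot c}$ and $\pair{a\otimes c,\,a}$---in both cases, by the first fact, the band inequality demanded by~\ref{PS2} is a reflexive instance---gives $a\otimes c\le a\le a\odot c$ for all $a,c\in A$. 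The ``if'' direction is then the chain $a\le a\odot s\le b\otimes s\le b$ read off from $\varphi_{ps}(a)\le_s\psi_{qs}(b)$. The ``only if'' direction is the delicate step, because~\ref{PS2} can never produce a comparison between two elements of the single fibre $A_s$; instead I would apply~\ref{PS1} twice to the hypothesis $a\le b$, first with $c=a$ and the band $\otimes$ to get $a=a\otimes a\le b\otimes a$, then with $c=b$ and the band $\odot$ on that inequality to get $a\odot b\le(b\otimes a)\odot b=b\otimes a$ (the last equality again from the first fact, the class of $b$ being below that of $b\otimes a$), and finally rewrite $a\odot b=\varphi_{ps}(a)$ and $b\otimes a=\psi_{qs}(b)$.

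For the converse, given $\Xi$ satisfying~\ref{S1} and~\ref{S2} with sum $\m A$, I would define $a\otimes b:=\psi_{p,\,p\lor q}(a)$ and $a\odot b:=\varphi_{p,\,p\lor q}(a)$ for $a\in A_p$, $b\in A_q$. By Lemma~\ref{lem:DS:induces:LNB}, applied to the coherent subfamilies $\Psi=\{\psi_{pq}\}$ and $\Phi=\{\varphi_{pq}\}$ of $\Xi$ separately, these are left normal bands, and they are homotactic since each induces the preorder $a\jle b\iff p\jle q$. Condition~\ref{PS2} is then immediate: its hypothesis $a\odot b\le b\otimes a$ is literally $\varphi_{p,\,p\lor q}(a)\le_{p\lor q}\psi_{q,\,p\lor q}(b)$ (both sides lie in $A_{p\lor q}$, on which $\le$ coincides with $\le_{p\lor q}$ by Theorem~\ref{thm:char:sums:of:posets:over:directed:systems}), which is $a\le b$ by~\eqref{eq:def:sum:order}. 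Condition~\ref{PS1} needs more: for $a\in A_p$, $b\in A_q$, $c\in A_r$, with $u:=p\lor q$ and $v:=p\lor q\lor r$, I would unfold $a\odot c\le b\odot c$ and $a\otimes c\le b\otimes c$ via~\eqref{eq:def:sum:order}, collapse the iterated composites of connecting maps by the compatibility of $\Phi$ and $\Psi$, bound the one surviving mixed composite by~\ref{S3} (the non-strict form of~\ref{S1} derived in the proof of Theorem~\ref{thm:char:sums:of:posets:over:directed:systems}) together with the monotonicity of $\varphi_{uv}$ respectively $\psi_{uv}$, and note that in each case what remains is exactly the hypothesis $\varphi_{pu}(a)\le_u\psi_{qu}(b)$. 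Finally the directed system induced by the partition pair $\pair{\otimes,\odot}$ sends $a\mapsto a\otimes q=\psi_{pq}(a)$ and $a\mapsto a\odot q=\varphi_{pq}(a)$ for $p\jle q$, hence equals $\Xi$, whose sum is $\m A$.

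The main obstacle in both halves is one and the same phenomenon: comparisons between two elements of a common fibre are invisible to~\ref{PS2} (and dually, the clauses of~\eqref{eq:def:sum:order} impose no constraint on $\le$ inside a fibre). This is why the ``only if'' part of ``$\m A$ is the sum of $\Xi$'' cannot be obtained in one structural step and requires the two successive applications of~\ref{PS1}, and why, symmetrically, verifying~\ref{PS1} for the converse hinges on choosing the instance of~\ref{S3} that squeezes the mixed composite between two expressions the hypothesis controls; the remaining steps are index bookkeeping and routine monotonicity.
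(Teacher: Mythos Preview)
Your proposal is correct and follows the paper's route: both halves hinge on the identification $a\odot b=\varphi_{ps}(a)$, $b\otimes a=\psi_{qs}(b)$ together with Theorem~\ref{thm:char:sums:of:posets:over:directed:systems}, and your ``only if'' argument for the forward half (two successive applications of~\ref{PS1}, first with $\otimes$ and $c=a$, then with $\odot$ and $c=b$) is the paper's argument verbatim. Where you diverge is in the converse's verification of~\ref{PS1}: the paper dismisses this with the one-liner ``the monotonicity of the maps $\psi_{pq}$ and $\varphi_{pq}$ ensures~\ref{PS1}'', which is not obviously sufficient when $a$ and $b$ lie in different fibres, and your explicit appeal to~\ref{S3} to bound the surviving mixed composite is exactly what is needed to fill that gap.
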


\begin{proof}
If $\pair{\otimes,\odot}$ is a partition pair for a poset $\m A$ and $\Xi$ is the directed system of parallel monotone maps defined above, then for all $p,q\in I$, $a\in A_p$, $b\in A_q$, and $s = p\lor q$, we have
\[
\varphi_{ps}(a) = a\odot s = a\odot p\odot q = a\odot q = a\odot b
\]
and analogously $\psi_{qs}(b) = b\otimes a$. Hence, in order to show that $\m A$ is the sum of $\Xi$, it would suffice to show that $a\le b$ if and only if $a\odot b\le b\otimes a$. By~\ref{PS2}, we have that if $a\odot b\le b\otimes a$, then $a\le b$. For the reverse implication, notice that, since $\otimes$ and $\odot$ are homotactic and $b \jle_\otimes b\otimes a$, then $b \jle_\odot b\otimes a$, that is $(b\otimes a)\odot b = b\otimes a$. Applying~\ref{PS1} to $a\le b$ we obtain $a = a\otimes a \le b\otimes a$, and again by~\ref{PS1},
\[
a\odot b \le (b\otimes a)\odot b = b\otimes a.
\]
Therefore, by Theorem~\ref{thm:char:sums:of:posets:over:directed:systems}, $\Xi$ satisfies~\ref{S1} and~\ref{S2} and $\m A$ is the sum of $\Xi$.

Suppose now that $\Xi\:\m I\to\Pos^\parallel$ is a directed system satisfying~\ref{S1} and~\ref{S2} and let $\m A$ be its sum. Let us define $a\otimes b = \psi_{ps}(a)$ and $a\odot b = \varphi_{ps}(a)$, where $p,q\in I$ are such that $a\in A_p$ and $b\in A_q$ and $s = p\lor q$. By Lemma~\ref{lem:DS:induces:LNB}, $\otimes$ and $\odot$ are homotactic left normal bands on $A$. The monotonicity of the maps $\psi_{pq}$ and $\varphi_{pq}$ ensures the that $\otimes$ and $\odot$ satisfy~\ref{PS1}. And since the order of $\m A$ is defined by~\eqref{eq:def:sum:order}, then
\[
a \le b \iff \varphi_{ps}(a)\le_s\psi_{qs}(b) \iff a\odot b \le b\otimes a,
\]
that is, $\pair{\otimes,\odot}$ satisfies~\ref{PS2}. In summary, $\pair{\otimes,\odot}$ is a partition pair for $\m A$. The fact that $\Xi$ is its induced directed system can be readily checked.
\end{proof}

Finally, we can combine the above notions and results with the ones of the previous section. Thus, a \emph{metamorphism} $f\: \m A\to \m B$ between two partially ordered algebras of the same type~$\tau$ is an assignment $f\:{\tau\cup\{\le\}}\to \big(B^A\big)^*$ such that the restriction of $f$ to $\tau$ is a metamorphism between the corresponding algebraic reducts and $f^\le = \pair{f^{\le 1},f^{\le 2}}$ is a pair of parallel monotone maps between the corresponding poset reducts. A \emph{directed system of metamorphisms} is a functor $\Xi\:\m I \to \Meta^{\tau,\le}$ where $\m I$ is a join-semilattice, as usual, and $\Meta^{\tau,\le}$ is the category of partially ordered algebras of type~$\tau$ and metamorphisms between them. There are two forgetful functors $\Meta^{\tau,\le}\to\Meta^\tau$ and $\Meta^{\tau,\le}\to\Pos^\parallel$, and the compositions of a directed system of metamorphisms $\Xi$ with them result in a directed system of metamorphisms $\Xi^\alg$ between the corresponding algebraic reducts and a directed system $\Xi^\po$ of parallel monotone maps. The \emph{sum} of $\Xi$ is then the structure $\m A$ whose algebraic and relation reducts are the P\l{}onka sum and relation sum of the corresponding directed systems. We say that $\Xi$ satisfies~\ref{S1} and~\ref{S2} if $\Xi^\po$ does.

A \emph{partition system} of a partially ordered algebra $\m A$ is an assignment $\tau\cup\{\le\}\to O^*$, where $O$ is a set of homotactic left normal bands on $A$, such that the restriction to $\tau$ is a partition system of the algebraic reduct of $\m A$ and ${\le}\mapsto\pair{\otimes,\odot}$, where $\pair{\otimes,\odot}$ is a partition pair for the poset reduct of $\m A$.

The following theorem is an immediate consequence of Theorems~\ref{thm:Plonka:decomposition:meta}, \ref{thm:Plonka:composition:meta}, and~\ref{thm:partition:pair:poset:sum}.

\begin{theorem}\
Every partition system on a partially ordered algebra induces a directed system of metamorphisms of partially ordered algebras satisfying~\ref{S1} and~\ref{S2} such that $\m A$ is its sum. Conversely, every directed system of metamorphisms between partially ordered algebras satisfying~\ref{S1} and~\ref{S2} is induced by a partition system for its sum.
\end{theorem}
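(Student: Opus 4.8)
The plan is to obtain the statement by splitting both a partition system and a directed system of metamorphisms of partially ordered algebras into an algebraic part and an order part and invoking Theorems~\ref{thm:Plonka:decomposition:meta}, \ref{thm:Plonka:composition:meta}, and~\ref{thm:partition:pair:poset:sum} on each part separately; the only genuine work is to check that the two parts fit together. For the forward implication I would start from a partition system $\tau\cup\{\le\}\to O^*$ on a partially ordered algebra $\m A$. By definition its restriction to $\tau$ is a partition system on the algebraic reduct of $\m A$, so Theorem~\ref{thm:Plonka:decomposition:meta} yields a semilattice directed system of metamorphisms $\Xi^\alg$, over the induced join-semilattice $\m I$, with maps $\xi^{\sigma i}_{pq}(a)=a\odot^\sigma_i q$, whose P\l{}onka sum is the algebraic reduct of $\m A$. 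Simultaneously the image of $\le$ is a partition pair $\pair{\otimes,\odot}$ for the poset reduct, so the first half of Theorem~\ref{thm:partition:pair:poset:sum} yields a directed system $\Xi^\po$ of parallel monotone maps, with $\psi_{pq}(a)=a\otimes q$ and $\varphi_{pq}(a)=a\odot q$, satisfying~\ref{S1} and~\ref{S2} and having the poset reduct of $\m A$ as its sum. Since every band in $O$ is homotactic, they all share one semilattice replica $\m I^\odot\cong\m I$ and one partition into fibers $\m A_p$, so $\Xi^\alg$ and $\Xi^\po$ assemble into a single functor $\Xi\colon\m I\to\Meta^{\tau,\le}$ with forgetful images $\Xi^\alg$ and $\Xi^\po$. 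Then $\Xi$ satisfies~\ref{S1} and~\ref{S2} because $\Xi^\po$ does, and the sum of $\Xi$ is exactly $\m A$, since its algebraic reduct is the P\l{}onka sum of $\Xi^\alg$ and its relation reduct is the sum of $\Xi^\po$.

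For the converse I would run the argument backwards. Given a directed system of metamorphisms $\Xi\colon\m I\to\Meta^{\tau,\le}$ satisfying~\ref{S1} and~\ref{S2}, with sum $\m A$, its algebraic image $\Xi^\alg$ is a semilattice directed system of metamorphisms whose P\l{}onka sum is the algebraic reduct of $\m A$, so Theorem~\ref{thm:Plonka:composition:meta} shows it is induced by a partition system on that reduct; and its order image $\Xi^\po$ is a directed system of parallel monotone maps satisfying~\ref{S1} and~\ref{S2} whose sum is the poset reduct of $\m A$, so the second half of Theorem~\ref{thm:partition:pair:poset:sum} shows it is induced by a partition pair $\pair{\otimes,\odot}$ on that reduct. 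By Lemma~\ref{lem:DS:induces:LNB} all the left normal bands produced by both constructions induce the semilattice replica $\m I^\odot\cong\m I$, hence they are pairwise homotactic and together form a single assignment $\tau\cup\{\le\}\to O^*$, namely a partition system for $\m A$ whose induced directed system of metamorphisms of partially ordered algebras is $\Xi$.

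I expect the only point needing care---and it is a mild one---to be this homotacticity bookkeeping: one must check that the band attached to $\le$ by the partition pair lies in the same homotactic class as the bands attached to the operation symbols by the algebraic partition system, and dually that the index semilattice extracted from the order part of $\Xi$ coincides with the one extracted from its algebraic part. Both are immediate because in each case the relevant semilattice is just the common semilattice replica, but they should be stated explicitly so that the assembled object really is a partition system, respectively a directed system of metamorphisms between partially ordered algebras, in the precise sense of the definitions above rather than a mere juxtaposition of two a priori unrelated structures.
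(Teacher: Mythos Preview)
Your proposal is correct and matches the paper's approach exactly: the paper states the theorem as ``an immediate consequence of Theorems~\ref{thm:Plonka:decomposition:meta}, \ref{thm:Plonka:composition:meta}, and~\ref{thm:partition:pair:poset:sum}'' and gives no further argument. Your write-up is in fact more careful than the paper's, since you explicitly flag the homotacticity bookkeeping needed to splice the algebraic and order parts together, which the paper leaves implicit.
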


\section{Residuated Semigroups Steady over a Semilattice}
\label{sec:fibrant}

In Section~\ref{sec:decompositions} we showed that a balanced residuated semigroup satisfying condition~\ref{cond:H} decomposes into a family $\m A_p$ of (integrally closed) residuated monoids. In Sections~\ref{plonka} and~\ref{sec:sums:of:posets} we developed some general universal algebraic machinery for gluing such families of ordered algebras together. We now apply this machinery to the original motivating case of residuated semigroups, obtaining a general decomposition (Theorem~\ref{thm:steady:over:I:partition:meta}) and composition (Theorem~\ref{thm:construction:meta}) result.

We shall now work at a greater level of generality than in Section~\ref{sec:decompositions}, where we decomposed a balanced residuated semigroup $\m A$ satisfying~\ref{cond:H} into a family of residuated monoids $\m A_p$ indexed by the join semilattice $\ZIDP \m A = \langle \ZIdp \m A, \cdot \rangle$ of central positive idempotents, whose induced order coincides with the restriction of the order of $\m A$ to $\ZIdp \m A$.

Namely, throughout this section we consider a residuated semigroup $\m A$ together with a (non-empty) \emph{sub}semilattice $\m I$ of $\ZIDP \m A$ such that for each $a \in A$ the following element exists:
\[
  u_a := \max\{p\in I : pa = a\} = \max\{p\in I : ap = a\}.
\]
We may equivalently define $u_a$ as $u_a = \max \{p\in I : p\le a\rd a\}$ or $u_a = \max \{p\in I : p\le a\ld a\}$. We call such a residuated semigroup \emph{balanced over $\m I$}. We shall use the notation
\[
  A_p := \{ a \in A : u_a = p \}.
\]
The following proposition shows that this is consistent with our previous usage of this notation.

\begin{proposition}
A residuated semigroup $\m A$ is balanced over $\IDP \m A$ if and only if it is balanced. In that case, $u_p = 1_p$ and $A_p = \{ a \in A : a \ld a = p \} = \{ a \in A : a \rd a = p \}$.
\end{proposition}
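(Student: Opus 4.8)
The plan is to prove the biconditional by unwinding the definitions on both sides, using the characterizations already established in Sections~2 and~3. First I would observe that $\IDP\m A$ is a sub\-semilattice of $\ZIDP\m A$ if and only if $\Idp\m A = \ZIdp\m A$, i.e.\ every positive idempotent is central, which by the Corollary following Proposition~\ref{prop:equivalent:char:idempositive:central} is equivalent to $\lscr pA = A_p$ for all $p\in\Idp\m A$, hence (together with disjointness of the families) equivalent to $\m A$ satisfying $x\ld x\approx x\rd x$ on the positive self-residuals. So the real content is matching ``$\m A$ is balanced over $\IDP\m A$'' with ``$\m A$ is balanced'', where being balanced over $\m I$ requires two things: that $\m I$ is a subsemilattice of $\ZIDP\m A$, and that $u_a = \max\{p\in I : p\le a\rd a\}$ exists for every $a\in A$.

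For the forward direction, suppose $\m A$ is balanced over $\IDP\m A$. Then in particular $\IDP\m A$ is a subsemilattice of $\ZIDP\m A$, so $\Idp\m A = \ZIdp\m A$ and $\m A\models x\ld x\approx x\rd x$. It remains to show all self-residuals are positive; I would use the existence of $u_a$. Fix $a\in A$; by hypothesis $u_a = \max\{p\in\Idp\m A : p\le a\rd a\}$ exists. I claim $u_a = a\rd a$, which forces $a\rd a\in\Idp\m A$ and in particular positive. Indeed, $a\rd a$ is an idempotent (the computation $(a\rd a)(a\rd a)a\le(a\rd a)a\le a$ from the proof of Lemma~\ref{lem:IdpA:identities} gives $(a\rd a)(a\rd a)\le a\rd a$), so if $a\rd a$ were positive it would be the largest $p\in\Idp\m A$ with $p\le a\rd a$, hence equal to $u_a$; the subtlety is that we do not yet know $a\rd a$ is positive. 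Instead I would argue directly: $u_a\le a\rd a$ by definition, and $u_a(a\rd a) = (a\rd a)$ follows from $u_a\le a\rd a$ together with $u_a$ being a positive idempotent below $a\rd a$ in the central idempotent semilattice (using Remark~\ref{rem:quotient:of:positives} or the semilattice order description). From $u_a a = a$ we also get, by residuation, that $a\rd a\le a\rd a$ is witnessed, and the key step is to show $a\rd a\le u_a$: since $u_a\le a\ld a$ as well (the two descriptions of $u_a$ coincide) and $u_a$ is central, $(a\rd a)u_a a = (a\rd a)a\le a$ gives nothing new, so instead I would show $a\rd a$ itself satisfies $(a\rd a)a = a$ — but that is exactly $(a\rd a)$ being a \emph{left} identity for $a$, which needs positivity.

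Here is where the argument has to be handled with care, and this is the main obstacle: closing the loop that $a\rd a$ is positive without circularity. The cleanest route, which I would adopt, is to use that $\m A\models x\ld x\approx x\rd x$ (already secured) and then show that $u_a$, $a\ld a$, $a\rd a$ are forced to coincide. From $u_a a = a$ and residuation, $u_a\le a\rd a$; symmetrically $u_a\le a\ld a$. Conversely, set $p:=a\ld a = a\rd a$; then $ap\le a$ and $pa\le a$, and $p$ is idempotent, so $p\in I$ would follow \emph{if} $p$ is central and positive — centrality we don't have yet since $p$ need not be in $\Idp\m A$ a priori, but positivity is the issue. The resolution: because $u_a$ is a positive central idempotent with $u_a\le p$, we have $u_a p = p$ (semilattice order), and then $a\le u_a a \le$ (using $u_a p = p$, $pa\le a$, and monotonicity repeatedly) one derives $p a = a$, giving $p\le a\rd a = p$ consistent, and crucially $a\le u_a a = u_a(pa)\le\cdots$; iterating and using $a\rd a = a\ld a$ one shows $a\le pa$, hence $p$ is positive, hence $p\in\Idp\m A$, hence $p\le u_a$, hence $p = u_a$. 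This then yields $A_p = \{a : u_a = p\} = \{a : a\rd a = p\} = \{a : a\ld a = p\}$ and $u_p = 1_p = p$. For the converse direction, if $\m A$ is balanced then by Lemma~\ref{lem:self-residuals:positive} all self-residuals are positive and $\Idp\m A = \ZIdp\m A$ (since balanced implies $x\ld x\approx x\rd x$ and positivity, giving centrality via the Corollary), so $\IDP\m A = \ZIDP\m A$ is trivially a subsemilattice of itself, and for each $a$ the element $a\rd a$ is a positive idempotent, hence lies in $\Idp\m A$ and by Lemma~\ref{lem:IdpA:identities} is the largest $p\in\Idp\m A$ with $pa = a$ — so $u_a$ exists and equals $a\rd a = a\ld a$, completing the equivalence and the final identifications.
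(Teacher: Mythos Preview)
There is a genuine gap in the forward direction. You correctly identify the crux --- showing that $a\rd a$ is positive --- but then try to close it by a vague iteration argument (``$a\le u_a a = u_a(pa)\le\cdots$; iterating\ldots one shows $a\le pa$'') that never actually produces the inequality $b\le (a\rd a)b$ for arbitrary $b$. The fix is a one-line observation already recorded at the start of Section~2: the set $A^+$ of positive elements is an upset of $\pair{A,\le}$. Since $u_a\in\Idp\m A$ is positive and $u_a\le a\rd a$, the element $a\rd a$ is positive immediately. Then $(a\rd a)(a\rd a)\le a\rd a$ together with positivity gives idempotence, so $a\rd a\in\Idp\m A$; since $a\rd a\le a\rd a$, maximality of $u_a$ forces $a\rd a\le u_a$, hence $a\rd a = u_a$, and likewise $a\ld a = u_a$. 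This is exactly the route the paper takes (tersely).

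There is also an earlier misstep that your later argument leans on. From $\Idp\m A=\ZIdp\m A$ you conclude $\m A\models x\ld x\approx x\rd x$, citing the Corollary after Proposition~\ref{prop:equivalent:char:idempositive:central}. But that Corollary only gives $\lscr pA=A_p$ for all $p\in\Idp\m A$, i.e., $a\ld a = a\rd a$ \emph{whenever one of them lies in $\Idp\m A$}. Without already knowing that all self-residuals are positive idempotents, this does not yield the identity $x\ld x\approx x\rd x$ globally, so you cannot set $p:=a\ld a=a\rd a$ at that stage. Once you use upward closure of positivity as above, both the identity and positivity of self-residuals drop out simultaneously from $a\ld a=u_a=a\rd a$, and the final descriptions of $A_p$ follow. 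Your converse direction is fine.
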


\begin{proof}
  If $\m A$ is balanced, then $\ZIDP \m A = \IDP \m A$ and for $\m I := \IDP \m A$ we have $u_p = 1_p = a \ld a = a \rd a$. Conversely, if $\m A$ is balanced over $\m I := \IDP \m A$, then $\IDP \m A = \ZIDP \m A$, since being balanced over $\m I$ implies that $\m I \le \ZIDP \m A$. Moreover, $u_a$ is the largest positive idempotent such that $p \le a \ld a$ and also the largest positive idempotent such that $p \le a \rd a$, so $a \ld a = u_a = a \rd a$.
\end{proof}

\begin{remark}\label{rem:on:us}
If $\m A$ is balanced over $\m I$, then all self-residuals of $\m A$ are positive: $u_a\cdot a = a = a\cdot u_a$ implies that $u_a \le a\rd a$ and $u_a \le a\ld a$, so $a \ld a$ and $a \rd a$ are positive because $u_a$ is. Nonetheless, $\m A$~need not be balanced: as an extreme case, take $I = \{ 1 \}$ in any residuated monoid with a global identity $1$.
\end{remark}

  In order to decompose a residuated semigroup $\m A$ balanced over $\m I$, we wish to define a left normal band $\odot\: A^2\to A$ on $A$ such that its associated partition is $\{A_p : p \in I\}$. But recall property~\ref{*}, which states that $a \odot b = a \odot c$ if $b$ and $c$ belong to the same class in the partition associated with $\odot$. If this partition is $\{A_p : p \in I\}$, then $a \odot b = a \odot u_b$. There are two natural choices for such a function $\odot$:
\[
a\odot b := a\cdot u_b \qquad\text{and}\qquad a\otimes b := a\rd u_b = u_b\ld a.
\]
In general, these operations are not left normal bands because they may fail to be associative. However, they are left normal bands if the following equations are satisfied:
\[
u_{x\cdot u_y} \approx u_x\cdot u_y \qquad\text{and}\qquad u_{x\rd u_y} \approx u_x\cdot u_y.
\]
We may also rewrite these conditions more suggestively as follows. Consider the map $\pi\: {A\to I}$ defined by $\pi(a) := u_a$. Consider also, for $p \in I$, the closure operator $\gamma_p$ and the interior operator $\delta_p$ on $\pair{A,\le}$ given by $\gamma_p(a) := ap$ and $\delta_p(a) := a\rd p$ (see Lemma~\ref{lem:Ap:closure:meets:joins}). The above equations are equivalent to the following equalities being satisfied for all $a\in A$ and $p\in I$:
\[
\pi(\gamma_p(a)) = \gamma_p(\pi(a))\qquad\text{and}\qquad \pi(\delta_p(a)) = \gamma_p(\pi(a)).
\]

\begin{proposition}\label{prop:on:us}
  For all $a, b \in A$ we have $u_a \ld a = a = a \rd u_a$ and
\[
u_a \le u_{ab},\quad u_b \le u_{ab},\quad u_a\le u_{a\rd b},\quad
u_b\le u_{a\rd b},\quad u_a\le u_{b\ld a},\quad u_b\le u_{b\ld a}.
\]    
  Moreover, $u_p = p$ for each $p \in I$. If $\m A$ contains a global identity $1$, then $1$ is the least element of $\m I$.
\end{proposition}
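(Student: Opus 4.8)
The plan is to prove each claim directly from the definition of $u_a$ and the residuation laws, leaning on the fact that the elements $u_a$ live in the semilattice $\m I$ of central positive idempotents of $\m A$. First I would establish $u_a \ld a = a = a \rd u_a$: since $u_a \in I$ satisfies $u_a a = a = a u_a$ by definition, residuation gives $u_a \le a \ld a$ and $u_a \le a \rd u_a \ld a$; but more directly, from $u_a a = a$ and $u_a$ positive we get $a \le u_a \ld a$, while $u_a \ld a \le u_a(u_a \ld a) \le a$ by positivity of $u_a$ and residuation, hence $u_a \ld a = a$. The identity $a \rd u_a = a$ is symmetric.

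For the six inequalities, the strategy mirrors the proof of Lemma~\ref{lem:inequalities:selfresiduals:central}. To show $u_a \le u_{ab}$, I would verify that the idempotent $u_a \in I$ satisfies $u_a \cdot (ab) = ab$: indeed $u_a(ab) = (u_a a)b = ab$. Since $u_{ab}$ is by definition the \emph{largest} element $p \in I$ with $p(ab) = ab$, we conclude $u_a \le u_{ab}$. For $u_b \le u_{ab}$, use that $u_b$ is central: $u_b(ab) = a(u_b b) = ab$ — here centrality is essential to move $u_b$ past $a$. For $u_a \le u_{a\rd b}$, I would check $u_a(a\rd b) = a \rd b$ by showing $u_a (a \rd b) \le a \rd b$ (from positivity) and $a \rd b \le u_a(a\rd b)$ (from $u_a (a\rd b) b \le u_a a = a$, wait — rather: $a\rd b \le u_a \ld ((a\rd b))$ needs $u_a(a\rd b)\le a\rd b$, then combine with $a\rd b \le u_a(a\rd b)$ from positivity of $u_a$, so in fact equality is immediate from $u_a$ being positive \emph{and} $a\rd b = (a\rd b)\rd u_a$ by Lemma~\ref{lem:char:A/p=Ap} applied in reverse... ). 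Cleaner: $u_a (a\rd b) b = (a \rd b)(u_a b)$? No — instead follow Lemma~\ref{lem:inequalities:selfresiduals:central} verbatim: $u_a(a\rd b)b \le u_a a = a$ gives $u_a (a\rd b) \le a \rd b$, and positivity of $u_a$ gives the reverse, so $u_a(a\rd b) = a\rd b$, whence $u_a \le u_{a\rd b}$. For $u_b \le u_{a\rd b}$: $u_b(a\rd b) b = (a\rd b)(u_b b) = (a\rd b)b \le a$ using centrality of $u_b$, so $u_b(a\rd b)\le a\rd b$ and again equality follows by positivity, giving $u_b \le u_{a\rd b}$. The two inequalities involving $b \ld a$ are entirely symmetric.

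For $u_p = p$ when $p \in I$: since $p$ is idempotent, $pp = p$, so $p$ is one of the elements witnessing membership in $\{q \in I : qp = p\}$; and $p$ is the largest such, because any $q \in I$ with $qp = p$ satisfies $q \le qp = p$ (using that $q$ is positive, $q \le qp$). Hence $u_p = p$. Finally, if $\m A$ has a global identity $1$, then $1 \in \ZIdp \m A$; whether $1 \in I$ must be argued — but $u_1 = 1$ would require $1 \in I$, so I would instead note that $1 a = a$ for all $a$ means $1 \le p$ for any $p \in I$ (since $1$ is the global identity, $1 \le p$ holds automatically as $p$ is positive: $1 \le p \cdot 1 = p$), and moreover $1 \cdot 1 = 1$ with $u_1 \in I$ forcing... — the cleanest route is: $1$ is positive, and for any $p \in I$, $1 \le p$; if additionally $1 \in I$ (which holds precisely when the hypothesis places $1$ in $\m I$, and the statement presupposes this since otherwise ``least element of $\m I$'' is vacuous unless $1\in I$), then $1$ is the least element of $\m I$.

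\textbf{Main obstacle.} The inequalities themselves are routine once the pattern from Lemma~\ref{lem:inequalities:selfresiduals:central} is in hand; the only genuinely delicate point is making sure centrality of the $u_b$ is invoked exactly where needed (in $u_b \le u_{ab}$ and $u_b \le u_{a\rd b}$ and $u_b \le u_{b\ld a}$), since $u_b$ must commute past another element, and that the ``largest'' clause in the definition of $u_{ab}$ (etc.) is what converts an equation $p \cdot c = c$ into the inequality $p \le u_c$. The last sentence about the global identity requires a small check that $1 \le p$ for every $p \in I$, which is immediate from positivity of $p$ together with $1$ being a global identity, so that $1$ — being in $\m I$ by assumption — is indeed least.
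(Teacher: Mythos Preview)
Your approach matches the paper's almost exactly: the paper proves the same representative inequalities by checking that the relevant $u$ fixes the target element (e.g.\ $u_b\cdot ab = a(u_b b) = ab$, and $u_a(a\rd b)b \le u_a a = a$ giving $u_a(a\rd b)\le a\rd b$) and then invoking the maximality clause in the definition of $u$; it handles $u_a\ld a = a = a\rd u_a$ and $u_p = p$ in the same way you do.

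The one genuine gap is in the final claim. You end up treating $1\in I$ as an assumption, writing that ``the statement presupposes this.'' It does not, and the paper proves it in one line: since $\m A$ is balanced over $\m I$, the element $u_1\in I$ exists; by the defining property of $u_1$ we have $u_1\cdot 1 = 1$, while $1$ being a global identity gives $u_1\cdot 1 = u_1$; hence $1 = u_1 \in I$. You actually brush against this idea (``$u_1\in I$ forcing\dots'') but abandon it and reverse the logic, writing ``$u_1 = 1$ would require $1\in I$'' when in fact computing $u_1$ is precisely what \emph{yields} $1\in I$. Once $1\in I$ is established, $1\le p\cdot 1 = p$ for every positive $p$ makes $1$ least in $\m I$, as you note.

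A minor point: in your argument for $u_p = p$, the step $q\le qp$ uses the positivity of $p$ (take $a=q$ in $a\le ap$), not of $q$ as you parenthetically say. The paper instead residuates $qp\le p$ to $q\le p\rd p = p$.
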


\begin{proof}
The inequality $a\rd u_a \le a$ holds because $u_a$ is positive, and $a\cdot u_a = a$ implies that $a\le a\rd u_a$. Therefore, $a\rd u_a = a$. Since $u_a$ is central, $u_a\ld a = a$ too. We will prove only two of the displayed inequalities, as the other proofs are similar. Notice that $u_b\cdot ab = a(u_b\cdot b) = ab$, and therefore $u_b \le u_{ab}$, by the maximality of $u_{ab}$. Analogously, $u_a\cdot (a\rd b)b\le u_a \cdot a = a$ and therefore $u_a\cdot (a\rd b) \le a\rd b$, whence we obtain that $u_a\le u_{a\rd b}$ by the maximality of $u_{a\rd b}$.

If $p\in I$, then $u_p = p$, since $p\cdot p = p$ and for all $q\in I$, if $qp = p$, then $q \le p\rd p = p$. As a consequence, for every $a\in A$ we have that $u_{u_a} = u_a$. If $\m A$ contains a global identity $1$, then $u_1 = u_1\cdot 1 = 1$, and $1 \in I$. Since $1$ is the least element of $\ZIdp \m A$, it is also least element of $\m I$.
\end{proof}

\begin{proposition}\label{prop:pi:gamma:delta:over:I:left:normal:bands}
Let $\m A$ be a residuated semigroup balanced over $\m I$. Then the equations
\begin{equation}\label{eq:pi:gamma:pi:delta}
u_{a\cdot u_b} = u_a\cdot u_b \qquad\text{and}\qquad u_{a\rd u_b} = u_a\cdot u_b    
\end{equation}
hold for all $a,b\in A$ if and only if the maps $\odot,\otimes\: {A^2\to A}$ defined by $a\odot b := a\cdot u_b$ and $a\otimes b := a\rd u_b$ are homotactic left normal bands on $A$. In that case, $a\jle_\odot b$ if and only if $u_a\le u_b$ for $a, b \in A$, so $\jle_\odot$ and $\le$ restricted to $I$ coincide. Furthermore, $\pair{\otimes,\odot}$ is a partition pair of the poset reduct of $\m A$.
\end{proposition}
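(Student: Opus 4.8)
The plan is to establish the equivalence in both directions and then verify the additional claims. For the ``only if'' direction I would assume the equations~\eqref{eq:pi:gamma:pi:delta} and check the three left normal band axioms \ref{PF1}--\ref{PF3} for $\odot$ and separately for $\otimes$, together with homotacticity. Idempotence \ref{PF1} for $\odot$ is $a \odot a = a \cdot u_a = a$, which is exactly the identity $u_a \cdot a = a$ from Proposition~\ref{prop:on:us} (using centrality of $u_a$); for $\otimes$ it is $a \otimes a = a \rd u_a = a$, also from Proposition~\ref{prop:on:us}. For associativity \ref{PF2} of $\odot$ one computes $a \odot (b \odot c) = a \cdot u_{b \cdot u_c}$ and $(a \odot b) \odot c = (a \cdot u_b) \cdot u_c$; the first equation in~\eqref{eq:pi:gamma:pi:delta} turns $u_{b \cdot u_c}$ into $u_b \cdot u_c$, and then associativity of the semigroup product closes the gap. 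The commutativity-like axiom \ref{PF3}, $a \odot (b \odot c) = a \odot (c \odot b)$, reduces to $a \cdot (u_b \cdot u_c) = a \cdot (u_c \cdot u_b)$, which holds because $u_b, u_c \in I \subseteq \ZIdp \m A$ and central positive idempotents commute. The argument for $\otimes$ is parallel, using the second equation in~\eqref{eq:pi:gamma:pi:delta} for associativity and the identities $a \rd (pq) = (a \rd q) \rd p$ (residuation) together with centrality of $p, q \in I$; here one should also use $p \ld a = a \rd p$ for $p \in I$ (Proposition~\ref{prop:equivalent:char:idempositive:central}) to freely switch between the two descriptions $a \rd u_b = u_b \ld a$.

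For the ``if'' direction I would assume $\odot$ and $\otimes$ are left normal bands and recover~\eqref{eq:pi:gamma:pi:delta}. Associativity of $\odot$ gives $a \cdot u_{b \cdot u_c} = (a \cdot u_b) \cdot u_c = a \cdot (u_b \cdot u_c)$ for all $a$; specializing $a$ to a suitable element (e.g. $a := u_{b \cdot u_c}$, which satisfies $u_a = u_{b \cdot u_c}$ by Proposition~\ref{prop:on:us} since $u_{u_x} = u_x$) and using the fact that $\odot$ induces the partition $\{A_p : p \in I\}$ yields $u_{b \cdot u_c} = u_b \cdot u_c$; the analogous specialization for $\otimes$ yields $u_{b \rd u_c} = u_b \cdot u_c$. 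Alternatively, and perhaps more cleanly, one notes that the index of $x \odot y$ under the induced semilattice replica must be $u_x \vee u_y = u_x \cdot u_y$ whenever $\odot$ is a left normal band whose replica is (a subsemilattice isomorphic to) $\m I$, and the index of $x \odot y = x \cdot u_y$ is by definition $u_{x \cdot u_y}$; equating these gives the first equation, and similarly for the second via $\otimes$.

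It then remains to identify the induced preorder and to verify the partition pair axioms \ref{PS1} and \ref{PS2}. For the preorder: $a \jle_\odot b$ means $b \odot a = b$, i.e. $b \cdot u_a = b$, which by the maximality defining $u_b$ is equivalent to $u_a \le u_b$; restricting to $I$ and using $u_p = p$ shows $\jle_\odot$ and $\le$ agree on $I$. For \ref{PS1}, monotonicity: if $a \le b$ then $a \cdot u_c \le b \cdot u_c$ by monotonicity of the product (so $a \odot c \le b \odot c$), and $a \rd u_c \le b \rd u_c$ by monotonicity of $\rd$ in the numerator (so $a \otimes c \le b \otimes c$). The main obstacle, and the only genuinely delicate point, is \ref{PS2}: from $a \odot b \le b \otimes a$, i.e. $a \cdot u_b \le b \rd u_a$, deduce $a \le b$. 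Here I would use residuation to rewrite $a \cdot u_b \le b \rd u_a$ as $a \cdot u_b \cdot u_a \le b$; since $u_a \le a \rd a$ and $u_b \le u_{ab}$-type inequalities are available, and since $u_a, u_b$ are central idempotents with $u_a \le a \ld a = a \rd a$, one gets $a \le a \cdot u_b$ (as $u_b$ is positive) hence wants $a \cdot u_b \cdot u_a = a \cdot u_a \cdot u_b = a \cdot u_b$, giving $a \le a \cdot u_b \le b$ after absorbing $u_a$ via $a \cdot u_a = a$. Making this chain precise --- in particular checking that the $u_a$ factor can be absorbed on the correct side using centrality and $a \cdot u_a = a$ --- is where care is needed, but no new ideas beyond Propositions~\ref{prop:on:us} and~\ref{prop:equivalent:char:idempositive:central} should be required.
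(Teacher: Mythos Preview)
Your overall plan matches the paper's proof closely: the forward direction, the identification of $\jle_\odot$, and \ref{PS1} are all handled the same way. Two points deserve comment.

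\textbf{Backward direction.} Your proposed specialization $a := u_{b\cdot u_c}$ in the identity $a\cdot u_{b\cdot u_c} = a\cdot(u_b\cdot u_c)$ yields only $u_{b\cdot u_c} = u_{b\cdot u_c}\cdot(u_b\cdot u_c)$, i.e.\ $u_b\cdot u_c \le u_{b\cdot u_c}$, which is already known from Proposition~\ref{prop:on:us}. The informative specialization is $a := u_b\cdot u_c$, which gives $(u_b\cdot u_c)\cdot u_{b\cdot u_c} = u_b\cdot u_c$, hence $u_{b\cdot u_c}\le u_b\cdot u_c$ and thus equality. The paper takes an equivalent route, instantiating associativity as $u_a\odot(a\odot b) = (u_a\odot a)\odot b$, which reads $u_a\cdot u_{a\cdot u_b} = u_a\cdot u_b$; since $u_a\le u_{a\cdot u_b}$, the left side equals $u_{a\cdot u_b}$. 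Your alternative replica argument also works, but note that it requires computing $\jle_\odot$ \emph{first} (you postpone this), and for homotacticity you must also verify $a\jle_\otimes b \iff u_a\le u_b$ (via $b\rd u_a = b \iff b\cdot u_a\le b \iff u_a\le u_b$), which you do not spell out.

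\textbf{\ref{PS2}.} You overcomplicate this. There is no need to residuate and then reabsorb $u_a$: simply chain
\[
a \le a\cdot u_b = a\odot b \le b\otimes a = b\rd u_a \le b,
\]
using positivity of $u_b$ for the first inequality and positivity of $u_a$ (in the form $b\rd u_a\le b$) for the last. This is the paper's one-line argument.
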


\begin{proof}
Suppose that the equalities~\eqref{eq:pi:gamma:pi:delta} hold in $\m A$. Then, the following computations show that both~$\odot$ and~$\otimes$ are left normal bands:
\begin{gather*}
a\odot a = a\cdot u_a = u_a\cdot a = a,\\
a\odot (b\odot c) = a\cdot u_{b\cdot u_c} = a\cdot u_b\cdot u_c = (a\odot b)\odot c,\\
a\odot b\odot c = a\cdot u_b\cdot u_c = a\cdot u_c\cdot u_b = a\odot c\odot b,\\
a\otimes a = a\rd u_a = a,\\
a\otimes(b\otimes c) = a\rd u_{b\rd u_c} = a\rd (u_b\cdot u_c)  = a\rd (u_c\cdot u_b) = (a\rd u_b)\rd u_c = (a\otimes b)\otimes c,\\
a\otimes b\otimes c = (a\rd u_b)\rd u_c = a\rd (u_c\cdot u_b) = a\rd (u_b\cdot u_c) = (a\rd u_c)\rd u_b = a\otimes c\otimes b.
\end{gather*}

Reciprocally, if~$\odot$ and~$\otimes$ are left normal bands on $A$, in particular they are associative. Using the inequalities of Proposition~\ref{prop:on:us} we have $u_b=u_{u_b}\le u_{a\cdot u_b}$, hence
\[
u_a\cdot u_b \le u_a\cdot u_{a\cdot u_b} = u_a \odot (a\odot b) =(u_a\odot a)\odot b = (u_a\cdot u_a) \cdot u_b = u_a\cdot u_b.
\]
As for the other equality,
\begin{align*}
(u_a\cdot u_b)\rd u_{a\rd u_b} &= (u_a\cdot u_b)\otimes (a\rd u_b) = (u_a\cdot u_b)\otimes (a\otimes b) 
= ((u_a\cdot u_b)\otimes a) \otimes b \\
&= ((u_a\cdot u_b)\rd u_a) \rd u_b = (u_a\cdot u_b)\rd (u_b \cdot u_a) = (u_a\cdot u_b)\rd (u_a \cdot u_b)
= u_a\cdot u_b,
\end{align*}
whence we deduce that $u_{a\rd u_b}\le u_a\cdot u_b\cdot u_{a\rd u_b} \le u_a\cdot u_b \le u_{a\rd u_b}$, by residuation and $u_a\cdot u_b = u_a\cdot u_{u_b} \le u_{a\rd u_b}\cdot u_{a\rd u_b} = u_{a\rd u_b}$, by the inequalities of Proposition~\ref{prop:on:us}. Now, given $a,b\in A$, we have
\begin{gather*}
a\jle_\odot b \iff b\odot a = b \iff b\cdot u_a = b \iff u_a \le u_b,\\
a\jle_\otimes b \iff b\otimes a = b \iff b\rd u_a = b \iff b\le b\rd u_a \iff b\cdot u_a\le b \iff u_a\le u_b.
\end{gather*}
by the maximality of $u_b$. In particular, for all $p,q\in I$ we have that
\[
p\jle_\odot q \quad\iff\quad p=u_p\le u_q = q \quad\iff\quad p\jle_\otimes q.
\]

Finally, both $\odot$ and $\otimes$ satisfy~\ref{PS1} because the multiplication is monotone in both arguments and the residuals are monotone in the numerator. Also~\ref{PS2} is satisfied since for all $a,b\in A$, if $a\odot b\le b\otimes a$, then
\[
a\le a\cdot u_b = a\odot b \le b\otimes a = b\rd u_a \le b.\tag*{\QED}
\]\let\QED\relax
\end{proof}

If, moreover, we want these left normal bands to be compatible with the operations of the residuated semigroup, more is required. We say that $\m A$ is \emph{steady over} $\m I$ if it is balanced over $\m I$ and the three following equalities hold for all $a,b\in A$:
\begin{enumerate}[({St}1),leftmargin=*,series=St]
    \item\label{St1} $u_{ab} = u_a\cdot u_b$,
    \item\label{St2} $u_{a\rd b} = u_a\cdot u_b$,
    \item\label{St3} $u_{b\ld a} = u_a\cdot u_b$.
\end{enumerate}
We say that $\m A$ is \emph{steady} if it is steady over $\IDP\m A$, or equivalently if $\Idp \m A = \ZIdp \m A$ and $\m A$ is steady over $\ZIDP \m A$. Because being balanced is equivalent to being balanced over $\IDP \m A$, being steady implies being balanced. Because steadiness is an equational condition, the class of steady residuated semigroups is a sub-po-variety of the class of (balanced) residuated semigroups. The steadiness conditions are in fact slightly redundant, as we prove in the next lemma.

\begin{lemma}\label{lem:St2:or:St3:imply:St1}
Let $\m A$ be a residuated semigroup balanced over $\m I$. If any of the equalities~\ref{St2} or~\ref{St3} hold for all $a,b\in A$, then \ref{St1} also holds for all $a,b\in A$.
\end{lemma}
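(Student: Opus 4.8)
The plan is to establish \ref{St1} by proving the two inequalities $u_a\cdot u_b\le u_{ab}$ and $u_{ab}\le u_a\cdot u_b$ separately. The first is free of hypotheses: Proposition~\ref{prop:on:us} gives $u_a\le u_{ab}$ and $u_b\le u_{ab}$, and since $u_a$, $u_b$, $u_{ab}$ all lie in $\m I$, whose order is the restriction of the order of $\m A$ with the product acting as join (see the discussion of $\ZIDP\m A$ in Section~2), we get $u_a\cdot u_b=u_a\lor u_b\le u_{ab}$. So the real work is the reverse inequality $u_{ab}\le u_a\cdot u_b$; once it is in hand, \ref{St1} follows and $u_a\cdot u_b$ is seen to be idempotent.

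For the reverse inequality from \ref{St2} I would imitate the proof that \ref{H2} implies \ref{H1} in Proposition~\ref{prop:H2:iff:H3:imply:H1}, now letting $u_x$ play the role that $1_x$ played there. Fix $a,b\in A$. Residuation gives $(u_a\rd a)\cdot a\le u_a$; multiplying on the right by $b$ and residuating once more yields $ab\le(u_a\rd a)\ld(u_a\cdot b)$. Since the left residual is order-preserving in its numerator, $(ab)\rd(ab)\le\bigl((u_a\rd a)\ld(u_a b)\bigr)\rd(ab)$, and then---using $ab=a\cdot b$ and the standard residuation identities $x\rd(yz)=(x\rd z)\rd y$ and $(x\ld y)\rd z=x\ld(y\rd z)$---I would rewrite the right-hand side as a nested residual term in which the repeated occurrences of $a$ and of $b$ collapse by means of $a\rd u_a=a$, $b\rd u_b=b$ and the inequalities of Proposition~\ref{prop:on:us}. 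Applying $u$ to the resulting term and evaluating it through repeated use of \ref{St2} and of $u_{u_a}=u_a$ should give exactly $u_{ab}\le u_a\cdot u_b$. Finally, the hypothesis \ref{St3} is dealt with by duality: in the opposite residuated semigroup $\m A^{\op}$, which is again balanced over $\m I$ because the elements of $\m I$ are central idempotents, condition \ref{St3} becomes \ref{St2} and condition \ref{St1} is unchanged, so that case follows from the one above.

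The hard part will be the middle step, i.e.\ the residuation bookkeeping: one has to carry out the rewriting so that \emph{every} occurrence of $a$ and of $b$ inside the nested residual either gets absorbed into one of the central idempotents $u_a$, $u_b$ or is cancelled, so that the $u$-value of the final term is computable purely from \ref{St2}. This is precisely the point at which the present setting is more demanding than Proposition~\ref{prop:H2:iff:H3:imply:H1}: there one works under the standing assumption $1_a=1_b$, which lets one replace $1_a\cdot b$ by $b$ at once, whereas here the twisted element $u_a\cdot b$ admits no such simplification and must be carried along and disposed of through the residual identities.
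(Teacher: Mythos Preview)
Your outline has a genuine gap at exactly the place you flag as ``the hard part''. Following the template of Proposition~\ref{prop:H2:iff:H3:imply:H1} leads to
\[
(ab)\rd(ab)\ \le\ (u_a\rd a)\ld\bigl((u_ab)\rd(ab)\bigr)\ =\ (u_a\rd a)\ld\bigl(((u_ab)\rd b)\rd a\bigr),
\]
and here the argument stalls: the subterm $(u_ab)\rd b$ does not collapse. In Proposition~\ref{prop:H2:iff:H3:imply:H1} the standing hypothesis $1_a=1_b$ turns $1_ab$ into $b$, whence $(1_ab)\rd b = 1_b = 1_a$ and the nested expression becomes a self-residual $(1_a\rd a)\ld(1_a\rd a)$. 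Without $u_a=u_b$ you have no such simplification, and if you try to evaluate $u_{(u_ab)\rd b}$ via~\ref{St2} you get $u_{u_ab}\cdot u_b$, which contains $u_{u_ab}$ --- essentially the quantity you are trying to compute, so the argument becomes circular. Moreover, ``applying $u$'' to an inequality is not itself a legal move: the map $a\mapsto u_a$ is not monotone in general, so from $u_{ab}\le (ab)\rd(ab)\le T$ you cannot conclude $u_{ab}\le u_T$.

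The paper avoids this entirely by never passing through $(ab)\rd(ab)$ or any inequality in~$A$; it works purely at the level of $u$-values. The trick is to hit $u_{ab}$ with $u_a$ on the left and then use~\ref{St2} in the \emph{reverse} direction together with the residuation identity $a\rd(ab)=(a\rd b)\rd a$:
\[
u_a\cdot u_b\ \le\ u_a\cdot u_{ab}\ =\ u_{a\rd(ab)}\ =\ u_{(a\rd b)\rd a}\ =\ u_{a\rd b}\cdot u_a\ =\ (u_a\cdot u_b)\cdot u_a\ =\ u_a\cdot u_b.
\]
Every step is an application of~\ref{St2} or of centrality and idempotence of elements of~$\m I$; no monotonicity of $u$ is needed. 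The chain forces $u_a\cdot u_{ab}=u_a\cdot u_b$, and since $u_a\le u_{ab}$ in the join-semilattice~$\m I$ we have $u_a\cdot u_{ab}=u_{ab}$, giving~\ref{St1}. Your duality argument for~\ref{St3} is fine and matches the paper's ``analogous'' remark.
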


\begin{proof}
Suppose that $\m A$ satisfies~\ref{St2} and let $a,b\in A$ be arbitrary elements. From Proposition~\ref{prop:on:us} we obtain $u_b\le u_{ab}$, and by the monotonicity of the product and several applications of~\ref{St2} we have
\[
u_a\cdot u_b \le u_a\cdot u_{ab} = u_{a\rd (ab)} = u_{(a\rd b)\rd a} = u_{a\rd b}\cdot u_a = u_a\cdot u_b\cdot u_a = u_a\cdot u_a\cdot u_b
=  u_a\cdot u_b.
\]
Therefore $u_{ab}\le u_a\cdot u_{ab}=u_a\cdot u_b$. The reverse inequality follows from maximality of $u_{ab}$, hence
$\m A$ satisfies~\ref{St1}. The proof that~\ref{St3} implies~\ref{St1} is entirely analogous.
\end{proof}

\begin{remark}\label{rem:H:does:not:imply:St}
Every balanced residuated semigroup~$\m A$ that satisfies~\ref{St1} with respect to $\IDP\m A$ also satisfies~\ref{H1}. Indeed, if $a,b\in A$ are such that $1_a = 1_b$, then $1_{ab} = 1_a\cdot 1_b = 1_a\cdot 1_a = 1_a$, by a direct application of~\ref{St1}. Analogously, both~\ref{St2} and~\ref{St3} imply the equivalent conditions~\ref{H2} and~\ref{H3}. On the other hand, condition~\ref{cond:H} does not imply steadiness. Indeed, the residuated poset of Example~\ref{ex:RP:H123} is not steady: it fails~\ref{St1} for $\IDP\m A$, and therefore also~\ref{St2} and~\ref{St3}, because $pa = b$ and $1_{pa} = 1_b = q \neq p\cdot 1 = 1_p\cdot 1_a$.
\end{remark}

\begin{example}\label{exam:St1:weaker:St2:and:St3}
The four-element linearly-ordered commutative and idempotent residuated semigroup $\m A$ with global identity~$1$ such that $\bot \le 1 \le p \le q$ satisfies that $\ZIDP\m A = \IDP\m A = \pair{\{1,p,q\},\cdot}$. We could take $\m I = \IDP\m A$ and therefore $u_a = 1_a$ for every $a\in A$. One can readily check that it satisfies~\ref{St1}, but it fails~\ref{St2} because
\[
1_{1\rd p} = 1_\bot = q \neq p = 1\cdot p = 1_1\cdot 1_p,
\]
as well as~\ref{St3}, because $\m A$ is commutative. This shows that~\ref{St1} is strictly weaker than~\ref{St2} and~\ref{St3}.
\end{example}

We will prove that for every residuated semigroup $\m A$ that satisfies~\ref{St1}, the left normal band $\odot$ satisfies~\hyperref[PF4s]{(PF4$^\cdot$)} and~\hyperref[PF5s]{(PF5$^\cdot$)}. That is, $\odot$ is a partition function for the semigroup reduct of $\m A$. The residuals of residuated semigroups resist such a direct approach. If $\otimes$ were a partition function with respect to the residuals, then $\m A$ would satisfy $(u_a\rd u_a)\rd u_b = (u_a\rd u_b)\rd (u_a\rd u_b)$, for all $a,b\in A$, as a particular case of~\hyperref[PF4s]{(PF4$^\rd$)}. However, as the next lemma shows, in that case $\m I$ would be trivial. In particular, all balanced residuated semigroups that satisfy this for $\m I = \IDP\m A$ are integrally closed. Therefore, this condition is overly restrictive and will need to be relaxed.

\begin{lemma}\label{lem:Plonka:sums:are:not:enough}
Let $\m A$ be a residuated semigroup and $\m I\le\ZIDP\m A$. Then, $\m A$ satisfies $(u_a\rd u_a)\rd u_b = (u_a\rd u_b)\rd (u_a\rd u_b)$ for all $a,b\in A$ if and only if $\m I$ is trivial.   
\end{lemma}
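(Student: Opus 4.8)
The backward direction is immediate, so the content is all in the forward direction. If $\m I$ is trivial, then its unique element is forced to be a global identity $1$ of $\m A$: indeed $u_a$ is only defined when $\{p\in I : pa=a\}$ (equivalently $\{p\in I : ap=a\}$) is non-empty, so that single element must satisfy $1a=a=a1$ for all $a$. Hence $u_a=1$ for every $a\in A$, and using $1\rd 1=1$ both sides of the displayed equation evaluate to $1$.

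For the forward direction I assume the identity holds and fix arbitrary $p,q\in I$; the goal is to show $p=q$, which forces $|I|=1$. The key move is to specialize the identity to $a:=p$ and $b:=q$. Since $p,q\in I$ we have $u_p=p$ and $u_q=q$ by Proposition~\ref{prop:on:us}, and since $p$ is a positive idempotent $p\rd p=p$. The identity therefore collapses to $p\rd q=(p\rd q)\rd(p\rd q)$; writing $s:=p\rd q$, this says $s=s\rd s$.

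Next I sandwich $s$ between $p$ and $pq$. From positivity of $q$ we get $s=p\rd q\le p$ and $p\le pq$. From Proposition~\ref{prop:on:us} we get $p=u_p\le u_{p\rd q}=u_s$ and $q=u_q\le u_s$, hence $pq\le u_s$ by monotonicity of the product; and $u_s\le s\rd s=s$ because $u_s=\max\{t\in I : t\le s\rd s\}$. Chaining these, $pq\le u_s\le s\le p\le pq$, so all four terms coincide; in particular $p=pq$. Repeating the whole argument with the roles of $p$ and $q$ interchanged yields $q=qp$, and since $p$ is central $pq=qp$, so $p=q$, as desired.

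The only point requiring care is the choice of substitution: the identity is stated for all $a,b\in A$, but it becomes tractable only after restricting to $a,b\in I$, where $u$ acts trivially; after that the argument is a short chase through the inequalities of Proposition~\ref{prop:on:us} and the defining property $u_s\le s\rd s$. Note that $u_s=u_{p\rd q}$ is well-defined because $\m A$ is balanced over $\m I$, so $u_c$ exists for every $c\in A$. I expect no genuinely hard step beyond spotting this specialization.
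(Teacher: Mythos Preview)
Your proof is correct and follows essentially the same route as the paper: specialize to $a=p,\ b=q\in I$, use $u_p=p$, $u_q=q$, and the inequalities of Proposition~\ref{prop:on:us} to obtain the chain $pq\le u_{p\rd q}\le (p\rd q)\rd(p\rd q)=p\rd q$, conclude $pq=p$, and finish by symmetry and centrality. The only cosmetic difference is that the paper passes from $pq\le p\rd q$ to $pq\le p$ via residuation ($pqq\le p$ and $q$ idempotent), whereas you use $p\rd q\le p$ directly from positivity of~$q$; both arguments are equivalent.
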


\begin{proof}
First, recall that for every $p\in I$, $u_p = p$. Suppose that $(u_a\rd u_a)\rd u_b = (u_a\rd u_b)\rd (u_a\rd u_b)$ for all $a,b\in A$. Then, for all $p,q\in I$,
\[
pq = u_p\cdot u_q \le u_{p\rd q}\cdot u_{p\rd q} = u_{p\rd q} \le (p\rd q)\rd (p\rd q) = (p\rd p)\rd q = p\rd q,
\]
where the assumption is used in the second-to-last equality.
Hence, $p \le pq = pqq \le p$ and therefore $pq = p$. By a symmetric argument $pq = q$. Hence, $I$ contains only one element. The reverse implication is immediate.
\end{proof}

Given a semilattice $\m I \le \ZIDP \m A$, we say that a residuated semigroup $\m A$ is \emph{fibrant over $\m I$} if the operations $\otimes,\odot\: {A^2\to A}$ defined in the previous proposition are indeed homotactic left residuated bands on $A$ and the assignment $\cdot\mapsto\pair{\odot,\odot,\odot}$, $\ld\mapsto\pair{\otimes,\odot,\otimes}$, $\rd\mapsto\pair{\otimes,\otimes,\odot}$, and ${\le}\mapsto \pair{\otimes,\odot}$ defines a partition system for $\m A$. We use the same terminology for residuated monoids, adjoining $1\mapsto \pair{\odot}$ to the partition system.

\begin{theorem}\label{thm:steady:over:I:partition:meta}
Let $\m A$ be a residuated semigroup steady over $\m I\le \ZIDP\m A$. Then $\m A$ is fibrant over $\m I$. Moreover, $\m A$ is the sum of the directed system of metamorphisms $\Xi = \{\xi_{pq}\: {\m A_p\meta\m A_q} : {p\le q} \text{ in } \m I\}$ between the residuated monoids $\m A_p$, given by 
\begin{gather*}
\xi^1_{pq} = \pair{\varphi_{pq}},\qquad
\xi^\cdot_{pq} = \pair{\varphi_{pq},\varphi_{pq},\varphi_{pq}},\\
\xi^\ld_{pq} = \pair{\psi_{pq}, \varphi_{pq}, \psi_{pq}},\qquad
\xi^\rd_{pq} = \pair{\psi_{pq}, \psi_{pq}, \varphi_{pq}},\qquad
\xi^\le_{pq} = \pair{\psi_{pq},\varphi_{pq}},
\end{gather*}
where the maps $\varphi_{pq},\psi_{pq}\: A_p\to A_q$ for $p \jle q$ in $\m I$ are defined by
\[
  \varphi_{pq}(a) := aq, \qquad \psi_{pq}(a) := a\rd q.
\]
  If $\m I = \ZIDP \m A$, then each $\m A_p$ is integrally closed. If $\m I = \ZIDP \m A$ and $\m A$ is moreover square-decreasing (i.e., satisfies the inequation $x \cdot x \le x$), then each $\m A_p$ is integral.
\end{theorem}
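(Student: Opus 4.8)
The plan is to exhibit the stated data as a \emph{partition system} for $\m A$ in the sense of Section~\ref{sec:sums:of:posets}, and then obtain both the fibrancy of $\m A$ and the description of $\Xi$ from the general correspondence (Theorems~\ref{thm:Plonka:decomposition:meta} and~\ref{thm:partition:pair:poset:sum}). Much of the groundwork is already available. Steadiness gives~\ref{St1} and~\ref{St2}, and since $u_{u_b} = u_b$ by Proposition~\ref{prop:on:us}, the latter yields $u_{a\rd u_b} = u_a\cdot u_b$; hence the equations~\eqref{eq:pi:gamma:pi:delta} hold, and by Proposition~\ref{prop:pi:gamma:delta:over:I:left:normal:bands} the maps $a\odot b := a\cdot u_b$ and $a\otimes b := a\rd u_b$ are homotactic left normal bands with common partition $\{A_p : p\in I\}$, the order $\jle_\odot$ restricts to $\le$ on $I$, and $\pair{\otimes,\odot}$ is a partition pair of the poset reduct of $\m A$ (so~\ref{PS1} and~\ref{PS2} already hold). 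Since $u_a\cdot a = a = a\cdot u_a$, each fiber $\m A_p$ has $p$ as a two-sided identity for the product, as in Section~\ref{sec:decompositions}.

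What remains is to verify~\ref{PF4s} and~\ref{PF5s} for the three binary operations carrying the triples $\cdot\mapsto\pair{\odot,\odot,\odot}$, $\ld\mapsto\pair{\otimes,\odot,\otimes}$, $\rd\mapsto\pair{\otimes,\otimes,\odot}$ (and~\ref{PF5w} for the unit in the monoidal case, which is trivial as $u_1 = 1$). These are short computations. For~\ref{PF5s}, the instance for the product is literally~\ref{St1}, while the instances for $\ld$ and $\rd$ unwind, using centrality and $(x\rd y)\rd z\approx x\rd(zy)$, to $b\rd u_{a_1\ld a_2} = b\rd(u_{a_1}u_{a_2})$ and $b\rd u_{a_1\rd a_2} = b\rd(u_{a_1}u_{a_2})$, which are~\ref{St3} and~\ref{St2}. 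For~\ref{PF4s}, the instance for the product uses only the centrality and idempotence of $u_b$. I expect the instances for $\ld$ and $\rd$ to be the only step requiring genuine care, since these operations are handled by \emph{mixed} triples and the single band $\otimes$ must play a role in both residual slots: the verification calls for the residuation laws $(x\rd y)\rd z\approx x\rd(zy)$, $(xy)\ld z\approx y\ld(x\ld z)$, and $x\ld(y\rd z)\approx(x\ld y)\rd z$, together with the identity $u_b\ld c\approx c\rd u_b$ (valid since $u_b$ is central, by Proposition~\ref{prop:equivalent:char:idempositive:central}) and the idempotence $(c\rd u_b)\rd u_b\approx c\rd u_b$ from Lemma~\ref{lem:char:A/p=Ap}, which is what reconciles the left and right residuals. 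Everything else is bookkeeping: confirming which bands occupy which slots and that these conditions close each $A_p$ under the operations, so each $\m A_p$ is indeed a residuated monoid with unit $p$.

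Once the partition system is in place, Theorems~\ref{thm:Plonka:decomposition:meta} and~\ref{thm:partition:pair:poset:sum} give that $\m A$ is fibrant over $\m I$ and is the sum of the induced directed system of metamorphisms, whose component maps are $\xi^{\sigma i}_{pq}(a) = a\odot^\sigma_i q$. As $u_q = q$ for $q\in I$, this is $a\odot q = aq$ and $a\otimes q = a\rd q$, so $\xi_{pq}$ has exactly the components displayed in the statement, with $\varphi_{pq}(a) = aq$ and $\psi_{pq}(a) = a\rd q$.

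Finally, suppose $\m I = \ZIDP\m A$. Being steady, $\m A$ is balanced, so for each $a\in A$ the element $a\rd a = a\ld a$ is a central positive idempotent and thus lies in $I$; since $a\rd a\le a\rd a$, maximality forces $u_a = a\rd a$. Hence for $a\in A_p$ we have $a\rd_p a = a\rd a = u_a = p$, i.e.\ $\m A_p$ is integrally closed. If moreover $\m A$ satisfies $x\cdot x\le x$, then residuation gives $a\le a\rd a = p$ for every $a\in A_p$, so $p$ is the greatest element of $\m A_p$, i.e.\ $\m A_p$ is integral.
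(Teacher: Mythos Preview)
Your proposal is correct and follows essentially the same approach as the paper: reduce to Proposition~\ref{prop:pi:gamma:delta:over:I:left:normal:bands} via~\eqref{eq:pi:gamma:pi:delta}, verify \ref{PF4s} and \ref{PF5s} for each operation with the indicated mixed bands (the paper carries out exactly the computations you sketch, using centrality and idempotence of $u_b$ together with the residuation identities you list), and then invoke Theorems~\ref{thm:Plonka:decomposition:meta} and~\ref{thm:partition:pair:poset:sum}. Your arguments for the integrally closed and integral cases are also the same in substance; the paper phrases the first via $I\cap A_p = \{p\}$ rather than maximality of $u_a$, but the content is identical.
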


\begin{proof}
If $\m A$ is steady over $\m I$, then in particular for all $a,b\in A$, we have that $u_{a\cdot u_b} = u_a\cdot u_{u_b} = u_a\cdot u_b$ and $u_{a\rd u_b} = u_a\cdot u_{u_b} = u_a\cdot u_b$. Therefore, by virtue of Proposition~\ref{prop:pi:gamma:delta:over:I:left:normal:bands}, $\otimes$ and $\odot$ are homotactic left normal bands on~$A$ whose induced partition is $\{ A_p : p\in I\}$. Furthermore, $\pair{\otimes,\odot}$ is a partition pair for the poset reduct of $\m A$. Let us show that the remaining properties of a partition system are also satisfied. As for~\hyperref[PF4s]{(PF4$^\cdot$)}, indeed
\[
(a_1\cdot a_2)\odot b = (a_1\cdot a_2)\cdot u_b = a_1\cdot a_2\cdot u_b\cdot u_b = a_1 \cdot u_b\cdot a_2\cdot u_b = (a_1\odot b)\cdot (a_2\odot b).    
\]
Concerning~\hyperref[PF5s]{(PF5$^\cdot$)}, we have that $b\odot (a_1\cdot a_2) = b\cdot u_{a_1\cdot a_2}= b\cdot u_{a_1}\cdot u_{a_2} = b\odot a_1\odot a_2$, by~\ref{St1}. The property~\hyperref[PF4s]{(PF4$^\rd$)} follows from the definitions of~$\otimes$ and~$\odot$:
\begin{align*}
\begin{split}
(a_1\rd a_2)\otimes b &= (a_1\rd a_2)\rd u_b = a_1\rd (u_b\cdot a_2) = a_1\rd (a_2\cdot u_b) = a_1\rd (a_2\cdot u_b\cdot u_b) \\
&= (a_u\rd u_b)\rd (a_2\cdot u_b) = (a_1\otimes b)\rd (a_2\odot b).    
\end{split}
\end{align*}
Analogously for~\hyperref[PF4s]{(PF4$^\ld$)}. Let us finally prove that~\hyperref[PF5s]{(PF5$^\rd$)} and~\hyperref[PF5s]{(PF5$^\ld$)} are also satisfied. For the first equality
\[
b\otimes (a_1\rd a_2) = b\rd u_{a_1\rd a_2} = b\rd (u_{a_1}\cdot u_{a_2}) = b\rd (u_{a_2}\cdot u_{a_1}) = (b\rd u_{a_1})\rd u_{a_2} = b\otimes a_1\otimes a_2,    
\]
which is a consequence of~\ref{St2}. The second equality, $b\otimes (a_1\ld a_2) = b\otimes a_1\otimes a_2$, is a consequence of~\ref{St3}. If $\m A$ is a residuated monoid, then we can also check that it satisfies~\hyperref[PF5w]{(PF5$^1)$}, since $b\odot 1 = b\cdot u_1 = b\cdot 1 = b$. Therefore $\m A$ is the sum of the given directed system of metamorphisms by Theorem~\ref{thm:Plonka:decomposition:meta} and Theorem~\ref{thm:partition:pair:poset:sum}. The fibers $\m A_p$ are residuated semigroups because they are subalgebras of $\m A$, and $p$ is a global identity on $\m A_p$ by the definition of $\m A_p$.

Now suppose that $\m I = \ZIDP \m A$. Because $\m A$ is steady over $\m I$, it is balanced over $\m I$, or in other words, balanced. For each $a \in A_p$ with $p \in I$ we therefore have $a \ld_p a = a \ld^{\m A} a \in \ZIdp \m A = I$. But $I \cap A_p = \{ p \}$, so $a \ld_p a = p$ and $\m A_p$ is integrally closed.

Suppose moreover that $\m A$ is square-decreasing and consider $a \in A_p$ with $p \in I = \ZIdp \m A$. To prove that $\m A_p$ is integral, we need to show that $a \le_p p$, i.e., $a \le^{\m A} p$. By the previous paragraph, $p = a \ld^{\m A} a$, so this inequality is equivalent to $a \le^{\m A} a \ld^{\m A} a$, which follows by residuation from the square-decreasing property.
\end{proof}

\begin{remark}
For the reader's convenience, let us record explicitly how the operations of $\m A$ and the partial order are computed in the sum of the semilattice directed system of metamorphisms $\Xi$ in the above theorem. Given $a \in A_p$ and $b \in A_q$ and taking $r := p \vee q$ in $\m I$,
\[
  a \cdot^{\m A} b = \varphi_{pr}(a) \cdot_r \varphi_{qr}(b), \qquad a \ld^{\m A} b = \varphi_{pr}(a) \ld_{r} \psi_{qr}(b), \qquad a \rd^{\m A} b = \psi_{pr}(a) \rd_{r} \varphi_{qr}(b)
\]
  and
\[
  a \le b \iff \varphi_{pr}(a) \le_{r} \psi_{qr}(b).
\]
\end{remark}

\begin{example} \label{ex:iterated:decomposition}
As we noticed in Remark~\ref{rem:H:does:not:imply:St}, the commutative and idempotent residuated semigroup of Example~\ref{ex:RP:H123} is not steady. But it is steady over $\m I = \pair{\{1,p\},\cdot}$, where $u_a = u_1 = 1$ and $u_q = u_p = u_b = u_\bot = p$, and therefore it is fibrant over $\m I$. In turn, the second fiber is steady, and therefore fibrant over its idempotents. All these fibers and the maps between them are depicted in Figure~\ref{fig:iterated:decomposition:1}.
\end{example}

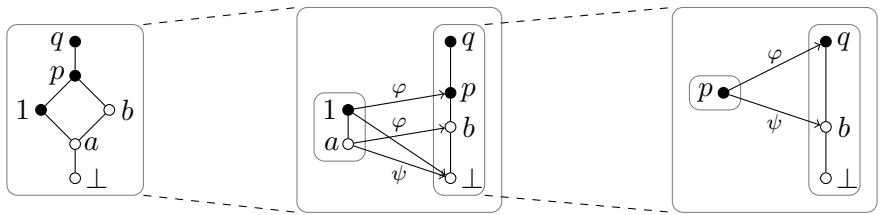
\begin{figure}[ht]\centering
\begin{tikzpicture}[baseline=0pt]
% residuated semilattice
\draw [color=gray, rounded corners](-4,-0.5) rectangle (0,4.5);
\node(4) at (-2,4)[i, label=left:$q$]{};
\node(3) at (-2,3)[i, label=left:$p$]{} edge (4);
\node(2) at (-1,2)[label=right:$b$]{} edge (3);
\node(1) at (-3,2)[i, label=left:$1$]{} edge (3);
\node(0) at (-2,1)[label=right:$a$]{} edge (1) edge (2);
\node(-1) at (-2,0)[label=right:$\bot$]{} edge (0);

%first decomposition
\draw [color=gray, rounded corners](4.5,-1) rectangle (10.5,5);
\draw [color=gray, rounded corners](5,.5) rectangle (6.5,2.5);
\draw [color=gray, rounded corners](8.5,-0.5) rectangle (10,4.5);
\node(4) at (9,4)[i, label=right:$q$]{};
\node(3) at (9,2.5)[i, label=right:$p$]{} edge (4);
\node(2) at (9,1.5)[label=right:$b$]{} edge (3);
\node(1) at (6,2)[i, label=left:$1$]{};
\node(0) at (6,1)[label=left:$a$]{} edge (1);
\node(-1) at (9,0)[label=right:$\bot$]{} edge (2);

%metamorphism
\draw [->] (1) -- node[draw=none, minimum size=0pt, label=above:{\scriptsize $\varphi$}]{} (3);
\draw [->,dashed] (1) -- node[draw=none, minimum size=0pt]{} (-1);
\draw [->] (0) -- node[draw=none, minimum size=0pt, label=above:{\scriptsize $\varphi$}]{} (2);
\draw [->,dashed] (0) -- node[draw=none, minimum size=0pt, label=below:{\scriptsize $\psi$}]{} (-1);

%second decomposition
\draw [color=gray, rounded corners](15.5,-1) rectangle (21.5,5);
\draw [color=gray, rounded corners](16,2) rectangle (17.5,3);
\draw [color=gray, rounded corners](19.5,-0.5) rectangle (21,4.5);
\node(4) at (20,4)[i, label=right:$q$]{};
\node(3) at (17,2.5)[i, label=left:$p$]{};
\node(2) at (20,1.5)[label=right:$b$]{} edge (4);
\node(-1) at (20,0)[label=right:$\bot$]{} edge (2);

%metamorphism
\draw [->] (3) -- node[draw=none, minimum size=0pt, label=above:{\scriptsize $\varphi$}]{} (4);
\draw [->,dashed] (3) -- node[draw=none, minimum size=0pt, label=below:{\scriptsize $\psi$}]{} (2);

%zooms
\draw [dashed] (0,4.5) -- (4.5,5);
\draw [dashed] (0,-0.5) -- (4.5,-1);
\draw [dashed] (10,4.5) -- (15.5,5);
\draw [dashed] (10,-0.5) -- (15.5,-1);
\end{tikzpicture}

\caption{Iterated decomposition of the residuated semigroup of Example~\ref{ex:RP:H123}.}
\label{fig:iterated:decomposition:1}
\end{figure}

\begin{example}
The residuated semigroup of Example~\ref{exam:St1:weaker:St2:and:St3} is not steady, as we saw, but it is steady over $\m I = \pair{\{1,p\},\cdot}$. In this case $u_1 = 1$ and $u_a = p$, for every other $a\in A$. Therefore, $\m A$ is fibrant over~$\m I$. In turn, the second fiber is steady, and therefore it is also fibrant over its idempotents. All the fibers of this situation and the corresponding maps are depicted in Figure~\ref{fig:iterated:decomposition:2}.
\end{example}

\begin{figure}[ht]\centering
\begin{tikzpicture}[baseline=0pt]
% residuated semilattice
\draw [color=gray, rounded corners](-0.5,-0.5) rectangle (1,3.5);
\node(4) at (0,3)[i, label=right:$q$]{};
\node(3) at (0,2)[i, label=right:$p$]{} edge (4);
\node(2) at (0,1)[i, label=right:$1$]{} edge (3);
\node(1) at (0,0)[label=right:$\bot$]{} edge (2);

%first decomposition
\draw [color=gray, rounded corners](5.5,-1) rectangle (11.5,4);
\draw [color=gray, rounded corners](6,.5) rectangle (7.5,1.5);
\draw [color=gray, rounded corners](9.5,-0.5) rectangle (11,3.5);
\node(4) at (10,3)[i, label=right:$q$]{};
\node(3) at (10,2)[i, label=right:$p$]{} edge (4);
\node(2) at (7,1)[i, label=left:$1$]{};
\node(1) at (10,0)[label=right:$\bot$]{} edge (3);

%metamorphism
\draw [->] (2) -- node[draw=none, minimum size=0pt, label=above:{\scriptsize $\varphi$}]{} (3);
\draw [->,dashed] (2) -- node[draw=none, minimum size=0pt, label=below:{\scriptsize $\psi$}]{} (1);

%second decomposition
\draw [color=gray, rounded corners](15.5,-1) rectangle (21.5,4);
\draw [color=gray, rounded corners](16,1) rectangle (17.5,2);
\draw [color=gray, rounded corners](19.5,-0.5) rectangle (21,3.5);
\node(4) at (20,3)[i, label=right:$q$]{};
\node(3) at (17,1.5)[i, label=left:$p$]{};
\node(1) at (20,0)[label=right:$\bot$]{} edge (4);

%metamorphism
\draw [->] (3) -- node[draw=none, minimum size=0pt, label=above:{\scriptsize $\varphi$}]{} (4);
\draw [->,dashed] (3) -- node[draw=none, minimum size=0pt, label=below:{\scriptsize $\psi$}]{} (1);

%zooms
\draw [dashed] (1,3.5) -- (5.5,4);
\draw [dashed] (1,-0.5) -- (5.5,-1);
\draw [dashed] (11,3.5) -- (15.5,4);
\draw [dashed] (11,-0.5) -- (15.5,-1);
\end{tikzpicture}

\caption{Iterated decomposition of the residuated semigroup of Example~\ref{exam:St1:weaker:St2:and:St3}.}
\label{fig:iterated:decomposition:2}
\end{figure}
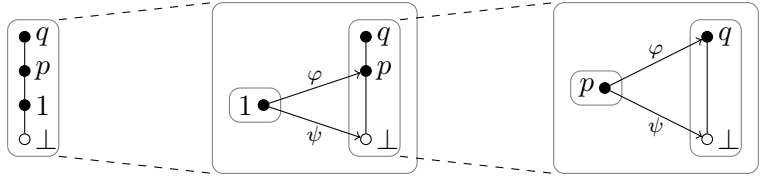

The next proposition shows that steadiness over $\m I$ is not only a sufficient but also necessary for a residuated semigroup balanced over $\m I$ to be fibrant over $\m I$.

\begin{proposition}\label{prop:typically:fibrant:implies:steady}
If a residuated semigroup is balanced over $\m I \le \ZIDP \m A$ and fibrant over $\m I$, then it is steady over~$\m I$ too.
\end{proposition}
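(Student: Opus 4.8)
\subsection*{Proof idea}

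The plan is to read conditions~\ref{St2} and~\ref{St3} directly off the partition system witnessing fibrancy, and then to obtain~\ref{St1} for free from Lemma~\ref{lem:St2:or:St3:imply:St1}. Throughout I would use that $\m A$ is balanced over $\m I$, so that Propositions~\ref{prop:on:us} and~\ref{prop:pi:gamma:delta:over:I:left:normal:bands} apply; in particular $u_p = p$ and $p \rd p = p$ for every $p \in I$, and $u_a, u_b \le u_{a \rd b}$ as well as $u_a, u_b \le u_{b \ld a}$ for all $a, b \in A$. Being fibrant over $\m I$ means, among other things, that $\rd \mapsto \pair{\otimes,\otimes,\odot}$ and $\ld \mapsto \pair{\otimes,\odot,\otimes}$ are part of a partition system, so in particular conditions~\hyperref[PF5s]{(PF5$^\rd$)} and~\hyperref[PF5s]{(PF5$^\ld$)} hold for the left normal band $\otimes$, where $a\otimes b = a\rd u_b$.

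First I would simplify these two instances of~\ref{PF5s}. Unwinding $a\otimes b = a\rd u_b$ and using the residuation identity $(x\rd y)\rd z = x\rd(zy)$ together with the commutativity of central idempotents --- exactly the computation already carried out for $\otimes$ in the proof of Proposition~\ref{prop:pi:gamma:delta:over:I:left:normal:bands} --- the term $b\otimes a_1\otimes a_2$ collapses to $b\rd(u_{a_1}\cdot u_{a_2})$. Hence~\hyperref[PF5s]{(PF5$^\rd$)} reads $b\rd u_{a_1\rd a_2} = b\rd(u_{a_1}\cdot u_{a_2})$ and~\hyperref[PF5s]{(PF5$^\ld$)} reads $b\rd u_{a_1\ld a_2} = b\rd(u_{a_1}\cdot u_{a_2})$, both for all $a_1,a_2,b\in A$. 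This reduction is the only place where a little bookkeeping is required, but since it duplicates a computation already in the paper it should cause no trouble.

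The decisive step is then to specialize the first identity to $b := u_{a_1}\cdot u_{a_2}$. Writing $p := u_{a_1}\cdot u_{a_2}$ and $q := u_{a_1\rd a_2}$, both elements of $\m I$, the right-hand side becomes $p\rd p = p$, so $p\rd q = p$. Residuation then gives $p\cdot q = (p\rd q)\cdot q \le p$; but $p\le q$ by Proposition~\ref{prop:on:us} (since $u_{a_1}, u_{a_2}\le u_{a_1\rd a_2}$), so $p\cdot q = q$ in the semilattice $\m I$, whence $q\le p$ and therefore $p = q$. This is precisely~\ref{St2}. The identical specialization of the~\hyperref[PF5s]{(PF5$^\ld$)} identity yields~\ref{St3}, and then Lemma~\ref{lem:St2:or:St3:imply:St1} yields~\ref{St1}. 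Thus $\m A$ is steady over $\m I$, which completes the proof.
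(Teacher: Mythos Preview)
Your proof is correct and follows essentially the same approach as the paper: both reduce to \ref{St2} and \ref{St3} via Lemma~\ref{lem:St2:or:St3:imply:St1}, unwind \hyperref[PF5s]{(PF5$^\rd$)} and \hyperref[PF5s]{(PF5$^\ld$)} for $\otimes$ into the identity $b\rd u_{a_1\rd a_2} = b\rd(u_{a_1}\cdot u_{a_2})$, and then specialize $b := u_{a_1}\cdot u_{a_2}$ to squeeze $u_{a_1\rd a_2}$ between $u_{a_1}\cdot u_{a_2}$ and itself. The only cosmetic difference is that you phrase the final squeeze using the join-semilattice structure of $\m I$ ($p\le q$ gives $pq=q$, hence $q\le p$), whereas the paper phrases it via positivity ($q\le u_p\cdot q\le p$); these are the same argument.
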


\begin{proof}
By virtue of Lemma~\ref{lem:St2:or:St3:imply:St1}, all we need to check is that the equations~\ref{St2} and~\ref{St3} are satisfied under these hypotheses. Now, using~\hyperref[PF5s]{PF5$^\rd$}, we have
\[
b\rd (u_{a_1}\cdot u_{a_2}) = b\rd (u_{a_2}\cdot u_{a_1}) = (b\rd u_{a_1})\rd u_{a_2}
= (b\otimes a_1) \otimes a_2 = b\otimes (a_1\rd a_2) = b\rd u_{a_1\rd a_2},
\]
which for the particular value $b = u_{a_1}\cdot u_{a_2}$ would be
\[
u_{u_{a_1}\cdot u_{a_2}} \le (u_{a_1}\cdot u_{a_2})\rd (u_{a_1}\cdot u_{a_2}) = 
(u_{a_1}\cdot u_{a_2})\rd u_{a_1\rd a_2}.    
\]
Hence, by the monotonicity of the product and applying residuation, we deduce that
\[
u_{a_1}\cdot u_{a_2} \le u_{a_1\rd a_2}\cdot u_{a_1\rd a_2} = u_{a_1\rd a_2} \le u_{u_{a_1}\cdot u_{a_2}} \cdot u_{a_1\rd a_2} \le u_{a_1}\cdot u_{a_2}.
\]
That is, $u_{a_1\rd a_2} = u_{a_1}\cdot u_{a_2}$. Analogously, $u_{a_2\ld a_1} = u_{a_1}\cdot u_{a_2}$.
\end{proof}

The next result is the constructive part of the structural theorem for steady residuated semigroups.

\begin{theorem}\label{thm:construction:meta}
Let $\Xi=\{\xi_{pq}\:\m A_p\meta\m A_q : p\jle q \text{ in }\m I\}$ be a directed system of metamorphisms of residuated monoids of the form
\begin{gather*}
\xi^1_{pq} = \pair{\varphi_{pq}},\qquad
\xi^\cdot_{pq} = \pair{\varphi_{pq},\varphi_{pq},\varphi_{pq}},\\
\xi^\ld_{pq} = \pair{\psi_{pq}, \varphi_{pq}, \psi_{pq}},\qquad
\xi^\rd_{pq} = \pair{\psi_{pq}, \psi_{pq}, \varphi_{pq}},\qquad
\xi^\le_{pq} = \pair{\psi_{pq},\varphi_{pq}},
\end{gather*}
satisfying~\ref{S1} and~\ref{S2}. Then the sum $\m A$ of $\Xi$ is a residuated semigroup, which is a monoid if $\m I$ has a least element. We may identify $\m I$ with $\pair{\{1_p : p\in I\},\cdot^\m A}\le\ZIDP\m A$. Then $\m A$ is steady (hence fibrant) over $\m I$. If each $\m A_p$ is balanced, then so is $\m A$. If each $\m A_p$ is integrally closed, then $\m I = \ZIDP \m A$. If each $\m A_p$ is integral, then $\m A$ is square-decreasing (i.e., satisfies the inequation $x \cdot x \le x$).
\end{theorem}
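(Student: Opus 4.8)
The plan is to proceed in four stages: first, to show that the sum $\m A$ of $\Xi$ is a residuated semigroup (and a monoid when $\m I$ has a least element); second, to identify $\m I$ inside $\ZIDP\m A$ and establish that $\m A$ is steady, hence fibrant, over it; third, to transfer the hypothesis that the fibers are balanced; and fourth, to transfer the hypotheses that they are integrally closed or integral.

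For the first stage, the algebraic reduct of $\m A$ is the P\l{}onka sum of $\Xi^\alg$, so it is a well-defined algebra of the residuated-semigroup type, while the relation $\le$ on $A = \biguplus A_p$ is a partial order extending each $\le_p$ by Theorem~\ref{thm:char:sums:of:posets:over:directed:systems}, since $\Xi^\po$ satisfies~\ref{S1} and~\ref{S2} by hypothesis. The substantive point---and the one I expect to be the main obstacle---is the residuation equivalence. I would fix $a\in A_p$, $b\in A_q$, $c\in A_m$ and note that $ab$, $c\rd b$ and $a\ld c$ lie in $A_{p\lor q}$, $A_{m\lor q}$ and $A_{p\lor m}$ respectively, so that each of the three statements $ab\le c$, $a\le c\rd b$, $b\le a\ld c$, once the P\l{}onka-sum formulas for $\cdot,\ld,\rd$ and the definition~\eqref{eq:def:sum:order} of the order are expanded, refers only to the join $s := p\lor q\lor m$. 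Then, using the coherence of $\Xi$ together with the metamorphism identities $\varphi_{rs}(x\cdot_r y) = \varphi_{rs}(x)\cdot_s\varphi_{rs}(y)$, $\psi_{rs}(x\ld_r y) = \varphi_{rs}(x)\ld_s\psi_{rs}(y)$ and $\psi_{rs}(x\rd_r y) = \psi_{rs}(x)\rd_s\varphi_{rs}(y)$, each of the three statements reduces to one and the same inequality $\varphi_{ps}(a)\cdot_s\varphi_{qs}(b)\le_s\psi_{ms}(c)$ in the residuated monoid $\m A_s$ (using the residuation law of $\m A_s$ for the latter two), establishing residuation for $\m A$ all at once. For the monoid claim, if $\bot = \min\m I$, then $\varphi_{\bot p}(1_\bot) = 1_p$ by the metamorphism clause for the constant (writing $1_r$ for the monoidal unit of $\m A_r$), so $1_\bot\cdot^{\m A} a = 1_p\cdot_p a = a$ and dually, making $1_\bot$ a global identity of $\m A$.

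For the second stage, I would first check that each monoidal unit $1_p$ is a central positive idempotent of $\m A$: idempotence is immediate; for $a\in A_q$ one computes $1_p\cdot^{\m A} a = \varphi_{q,p\lor q}(a) = a\cdot^{\m A} 1_p$, giving centrality, and $a\le^{\m A}\varphi_{q,p\lor q}(a)$ holds by reflexivity via~\eqref{eq:def:sum:order}, giving positivity. Since $1_p\cdot^{\m A} 1_q = 1_{p\lor q}$ and $p\mapsto 1_p$ is injective, this map is a semilattice embedding of $\m I$ into $\ZIDP\m A$, and we identify $\m I$ with its image. Then $u_a = p$ exactly when $a\in A_p$: for $a\in A_q$ the set of $t\in I$ with $1_t\cdot^{\m A} a = a$ equals $\{t\in I : t\le q\}$, since $1_t\cdot^{\m A} a = \varphi_{q,t\lor q}(a)$ coincides with $a$ iff $t\lor q = q$, and this set has greatest element $q$; so $\m A$ is balanced over $\m I$ with the stated fibers. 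Steadiness over $\m I$ is then immediate, for if $a\in A_p$ and $b\in A_q$, then $a\cdot^{\m A} b$, $a\rd^{\m A} b$ and $b\ld^{\m A} a$ all lie in $A_{p\lor q}$, whence $u_{ab} = u_{a\rd b} = u_{b\ld a} = p\lor q = u_a\cdot u_b$, which is~\ref{St1}--\ref{St3}. By Theorem~\ref{thm:steady:over:I:partition:meta}, $\m A$ is fibrant over $\m I$.

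For the third stage, Remark~\ref{rem:on:us} already guarantees that every self-residual of $\m A$ is positive (being balanced over $\m I$ suffices), so it remains to verify $x\ld x\approx x\rd x$; but for $a\in A_p$ we have $a\ld^{\m A} a = a\ld_p a$ and $a\rd^{\m A} a = a\rd_p a$, which coincide since $\m A_p$ is balanced. For the fourth stage, suppose each $\m A_p$ is integrally closed and let $e\in\ZIdp\m A$; it lies in some $A_q$, and $e$ is idempotent in $\m A_q$ with $1_q\le^{\m A} e\cdot^{\m A} 1_q = e$ by positivity of $e$, so that $1_q\le_q e$; hence $e$ is a positive idempotent of the integrally closed monoid $\m A_q$, which forces $e = 1_q$. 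Thus $\ZIdp\m A = \{1_p : p\in I\}$ and $\m I = \ZIDP\m A$. Finally, if each $\m A_p$ is integral, then for $a\in A_p$ we have $a\le_p 1_p$, whence $a\cdot^{\m A} a = a\cdot_p a\le_p 1_p\cdot_p a = a$, so $\m A$ is square-decreasing.
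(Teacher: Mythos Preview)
Your proposal is correct and follows essentially the same route as the paper's proof: both verify residuation by pushing $a,b,c$ up to the fiber at the triple join and invoking the metamorphism identities, identify $\m I$ with $\{1_p\}$ via the computation $1_p\cdot^{\m A}a=\varphi_{q,p\lor q}(a)$, read off steadiness from the fact that all three operations land in $A_{p\lor q}$, and handle the balanced/integrally closed/integral transfers fiberwise. The only organizational differences are that you reduce all three legs of the residuation law simultaneously to the single inequality $\varphi_{ps}(a)\cdot_s\varphi_{qs}(b)\le_s\psi_{ms}(c)$ (the paper writes out just the $\ld$ case as a chain of biconditionals), and you verify $u_a=p$ via the equality $1_t\cdot^{\m A}a=a\iff t\jle p$ directly, whereas the paper checks the inequality $1_t\cdot^{\m A}a\le^{\m A}a$ and appeals to~\ref{S2}; both are sound.
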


\begin{proof}
Let $\m A$ be the sum of $\Xi$. The conditions on the components $\xi^1$ and $\xi^\cdot$ imply that $\varphi_{pq}$ is a homomorphism between the monoid reducts of $\m A_p$ and $\m A_q$ for all $p\jle q$ in $\m I$. Therefore, $\pair{A,\cdot^\m A}$ is a semigroup and, if $\m I$ contains a least element $\bot$, then $1^\m A = 1^{\m A_\bot}$ and $\pair{A, \cdot^\m A, 1^\m A}$ is a monoid. Moreover, $\pair{A,\le^\m A}$ is a poset by Proposition~\ref{thm:partition:pair:poset:sum}. For residuation (checking only~$\ld\!^\m A$\,), let $a\in A_p$, $b\in A_q$, $c\in A_r$, $s=p\vee q$, $t=p\vee r$, and $u=s\vee r = q\vee t$. Then, $a\ld\!^\m A c \in A_t$ and
\[
\psi_{tu}(a\ld\!^\m A c) = \psi_{tu}\big(\varphi_{pt}(a)\ld_t\psi_{rt}(c)\big)
= \varphi_{tu}\varphi_{pt}(a)\ld_u\psi_{tu}\psi_{rt}(c)
= \varphi_{pu}(a)\ld_u\psi_{ru}(c).
\]
Hence,
\begin{align*}
    a\cdot b\le c&\iff \varphi_{ps}(a)\cdot_s\varphi_{qs}(b)\le c
    \iff \varphi_{su}(\varphi_{ps}(a)\cdot_s\varphi_{qs}(b))\le \psi_{ru}(c) \\
    &\iff \varphi_{pu}(a)\cdot_u\varphi_{qu}(b)\le_u \psi_{ru}(c)
    \iff \varphi_{qu}(b)\le_u  \varphi_{pu}(a)\ld_u\psi_{ru}(c) \\
    &\iff \varphi_{qu}(b)\le_u \psi_{tu}(a\ld\!^\m A c) 
    \iff b\le a\ld c.
\end{align*}

We now show that $\m I \le \ZIDP \m A$, i.e., that all the identities of the fibers are positive and central in $\m A$. Indeed, if $p,q\in I$, $r = p\lor q$, and $a\in A_p$, then $a\le^\m A \varphi_{pr}(a) = \varphi_{pr}(a)\cdot_r 1_r = \varphi_{pr}(a)\cdot_r\varphi_{qr}(1_q) = a\cdot^\m A 1_q$, and analogously $a\le^\m A 1_q\cdot^\m A a$. Moreover, $a \cdot^{\m A} 1_q = \varphi_{pr}(a) \cdot_r \varphi_{pr}(1_q) = \varphi_{pr}(a) \cdot_r 1_r = \varphi_{pr}(a) = 1_r \cdot_r \varphi_{pr}(a) = \varphi_{pr}(1_q) \cdot_r \varphi_{pr}(a) = 1_q \cdot^{\m A} a$.

We show that $\m A$ is balanced over $\m I$. That is, for each element $a \in A_p$ we show that there is a largest element $u_a \in I$ such that $u_a \cdot^{\m A} a \le^{\m A} a$. Because the elements of $I$ have the form $1_q$, it suffices to show that the largest $q \in I$ such that $1_q \cdot^{\m A} a \le^{\m A} a$ is $q := p$. Clearly $1_p \cdot^{\m A} a = a$. Conversely, if $1_q \cdot^{\m A} a \le^{\m A} a$, then $\varphi_{pr}(a) \le^{\m A} a$ for $r := p \vee q$ in $\m I$. By the definition of $\le^{\m A}$, this inequality is equivalent to $\varphi_{pr}(a) \le_r \psi_{pr}(a)$, which by \ref{S2} holds only if it is not the case that $p \jl r$. Because $r = p \vee q$, this condition is equivalent to $q \jle p$ in $\m I$.

To show that $\m A$ is steady over $\m I$, consider $a \in A_p$ and $b \in A_q$ and take $r := p \vee q$ in $\m I$. Then by the previous paragraph and the definition of operations in $\m A$, we have $u_{ab} = u_{a \ld b} = u_{a \rd b} = r$ and also $u_a \cdot u_b = r$, so the conditions \ref{St1}--\ref{St3} are satisfied.

Suppose now that each $\m A_p$ is balanced and consider some $a\in A_p$ with $p\in I$. Then $a\ld\!^\m A a = a\ld_p\,a = a\rd\!_p\, a = a\rd^\m A a$ and $1_p\le_p a \ld a = a\rd\!_p\, a$, so $1_p \le^\m A a \ld^{\m A} a =  a\rd\!^\m A a$. Because $1_p$ is positive in $\m A$, so is $a \ld\!^\m A a = a \rd^\m A a$, hence all self-residuals of $\m A$ are positive and $\m A$ is balanced.

Suppose that all the fibers are integrally closed and consider $a \in \ZIdp \m A$. Then $a \in A_p$ for some $p \in I$. Because $\m A_p$ is a subalgebra of $\m A$, the element $a$ is a central positive idempotent of $\m A_p$. But $\m A_p$ is integrally closed and therefore it only contains one central positive idempotent element, namely $1_p$. Thus $a = 1_p \in I$. Finally, if each $\m A_p$ is integral, then for each $a \in A_p$ clearly $a \cdot^{\m A} a = a \cdot_p a \le_p a$, so $a \cdot^{\m A} a \le^{\m A} a$.
\end{proof}

\begin{remark}
For the reader's convenience, let us record explicitly what it means for two families of maps $\varphi_{pq},\psi_{pq}:A_p\to A_q$ defined for $p\jle q$ in $\m I$ to induce a directed system of metamorphisms as in the above theorem: (i) $\varphi_{pp}=\psi_{pp}=\mathrm{id}_{A_p}$ for all $p \in I$, (ii) $\varphi_{qr}\circ\varphi_{pq}=\varphi_{pr}$ and $\psi_{qr}\circ\psi_{pq}=\psi_{pr}$ whenever $p\jle q\jle r$ in $\m I$, and (iii) for all $a,b\in A_p$
 \[
 \varphi_{pq}(ab)=\varphi_{pq}(a)\varphi_{pq}(b),\quad
 \psi_{pq}(a\ld b)=\varphi_{pq}(a)\ld \psi_{pq}(b)\quad\text{and}\quad
 \psi_{pq}(a\rd b)=\psi_{pq}(a)\rd \varphi_{pq}(b).
 \]
\end{remark}

  The structure theorems proved in this section extend the structure theorems for involutive po-monoids from~\cite{GFJiLo23}. An \emph{involutive po-monoid} is a structure of the form $\m A = \pair{A,\le,\cdot,1,\nein,\no}$ such that $\pair{A,\le}$ is a poset and $\pair{A,\cdot, 1}$ is a monoid satisfying
\[
x\le y\iff x\cdot\nein y\le \no 1 \iff \no y\cdot x\le \no 1.
\]
Such structures are in fact residuated monoids, with residuals defined by
\[
  x\ld y := \nein(\no y\cdot x), \qquad x\rd y := \no(y\cdot\nein x).
\]
An involutive po-monoid $\m A$ was called \emph{locally integral} in~\cite{GFJiLo23} if it is balanced, square-decreasing (i.e., satisfies $x \cdot x \le x$), and satisfies $x\ld (x \rd x) \approx x \rd x$. Such structures were decomposed in~\cite{GFJiLo23} as \emph{ordinary} P\l{}onka sums of directed systems of homomorphisms $\varphi_{pq}$ between the fibers $\m A_p$, with $\varphi_{pq}$ and $\m A_p$ defined in the exactly the same way as in the present paper. The additional conditions, namely involutivity and local integrality, made it possible to disregard the maps $\psi_{pq}$. It is, of course, possible to add these maps to the picture. The maps $\odot,\otimes\: A^2\to A$ defined by $a\odot b := 1_b\cdot a$ and $a\otimes b := a / 1_b = \no(1_b\cdot\nein a)$ can be shown to be homotactic left normal bands and the assignment $1\mapsto\odot$, $\cdot\mapsto\pair{\odot,\odot,\odot}$, $\nein\mapsto\pair{\otimes,\odot}$, and $\no\mapsto\pair{\otimes,\odot}$ is a partition system for each locally integral involutive po-monoid~$\m A$. Thus, $\m A$ is a P\l{}onka sum whose order can be recovered by~\eqref{eq:def:sum:order}. This is precisely the decomposition obtained in an ad hoc manner in~\cite{GFJiLo23}.

Let us now show how this class of ordered algebras fits into the present framework. Involutive po-monoids may equivalently be presented as residuated monoids equipped with a constant $0$ satisfying the equations $0 \rd (x \ld 0) \approx (0 \rd x) \ld 0$: in one direction we take $0 := \nein 1 = \no 1$, while in the other direction we take $\nein x := x \ld 0$ and $\no x := 0 \rd x$. Our main structure theorems may be modified to account for this extra constant as follows. In the decomposition result (Theorem~\ref{thm:steady:over:I:partition:meta}), if $\m A$ is an involutive po-monoid steady over $\m I \le \ZIDP \m A$ with $1 \in I$, then $0 \in A_1$ and each $\m A_p$ becomes an involutive po-monoid when equipped with the constant $0_{p} := \varphi_{1 p}(0) = p \cdot^{\m A} 0$. In the composition result (Theorem~\ref{thm:construction:meta}), if $\m I$ has a least element $\bot$ and each $\m A_p$ is an involutive po-monoid equipped with the constant $0_{p} = \varphi_{\bot p}(0_{\bot})$, then the sum $\m A$ is an involutive pomonoid when equipped with the constant $0_{\bot}$. Moreover, we already saw that in both the composition and the decomposition result, the integrality of the fibers corresponds to the square-decreasing property of the sum.

\section{Balanced Idempotent Residuated Semigroups}
\label{sec:idempotent case}

In this section we specialize the structural description of balanced residuated semigroups to the
idempotent case, where they turn out to be commutative (Proposition~\ref{prop:idemcomm}). In the lattice-ordered case, the fibers are Brouwerian algebras, i.e.,  commutative idempotent integral residuated lattices, and the maps $\varphi_{pq}$ preserve binary meets (Proposition~\ref{prop:phi:preserves:meet}). Since multiplication and meet coincide in the idempotent integral case, these fibers are in fact distributive lattices.

\begin{proposition}\label{prop:idemcomm}
    The following conditions are equivalent for each idempotent residuated semigroup~$\m A$:
    \begin{enumerate}[(1)]
        \item $\m A$ satisfies $1_x \cdot y \approx y\cdot 1_x$,
        \item $\m A$ satisfies $x\ld x \approx x\rd x$,
        \item $\m A$ is commutative.
    \end{enumerate}
    Moreover, each of these conditions implies condition~\ref{cond:H}.
\end{proposition}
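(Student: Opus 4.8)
The implications $(3)\Rightarrow(1)$ and $(3)\Rightarrow(2)$ are immediate: in a commutative residuated semigroup one has $x\ld y \approx y\rd x$, so in particular $x\ld x \approx x\rd x$, and of course $1_x\cdot y \approx y\cdot 1_x$. The plan for the remaining implications rests on two elementary observations about an idempotent residuated semigroup $\m A$. First, from $x\cdot x = x \le x$ residuation gives $x \le x\rd x$ and $x \le x\ld x$, so the always-valid inequalities $(x\rd x)x \le x$ and $x(x\ld x)\le x$ are in fact equalities $(x\rd x)\cdot x = x = x\cdot(x\ld x)$. Second, every self-residual $x\rd x$, $x\ld x$ is idempotent, since every element of $\m A$ is.

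Using the first observation, under $(1)$ we get $x\cdot(x\rd x) = (x\rd x)\cdot x \le x$, which by residuation yields $x\rd x \le x\ld x$. To obtain the reverse inequality, and hence $(2)$, I would show that under $(1)$ all self-residuals of $\m A$ are positive, so that $\m A$ is balanced by Proposition~\ref{prop:equivalent:char:balanced}; then $x\ld x$ is a positive idempotent with $1_{x\ld x}\approx x\ld x$ and $1_{x\rd x}\approx x\rd x$, and combining this with $x\rd x \le x\ld x$ and Lemma~\ref{lem:inequalities:selfresiduals:central} forces $x\ld x \approx x\rd x$. The step $(2)\Rightarrow(3)$ is where the real work lies: again one first deduces that $\m A$ is balanced, so that each self-residual $x\rd x = x\ld x$ is a \emph{central} positive idempotent, and then derives commutativity. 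Writing $a = 1_a\cdot a$, $b = 1_b\cdot b$ and $p := 1_a\cdot 1_b$ (a central positive idempotent with $a,b \le p$), one rewrites $a\cdot b$ and $b\cdot a$ using centrality of the self-residuals together with idempotence of $\m A$ and of $p$; equivalently one may aim directly at the identity $a\ld b \approx b\rd a$, which is equivalent to commutativity via the residuation equivalences. I expect this to be the main obstacle, since idempotent residuated semigroups need not be commutative in general: it is precisely the hypothesis $x\ld x\approx x\rd x$ that rules out the non-commutative behaviour, so it must be used in an essential way.

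For the final assertion that each condition implies~\ref{cond:H}: by the equivalences above we may assume $\m A$ is idempotent and commutative, so $1_x\cdot y \approx y\cdot 1_x$ holds and Proposition~\ref{prop:H2:iff:H3:imply:H1} reduces the task to verifying~\ref{H2}. Given $a,b \in A$ with $1_a = 1_b =: p$, idempotence together with $b \le p = a\rd a$ gives $ab \le a$ by residuation, whence $a \le a\rd b$, and symmetrically $b \le a\rd b$. Combining these with the instance $p \le 1_{a\rd b}$ of Lemma~\ref{lem:inequalities:selfresiduals:central} and a further round of residuation (using $1_{a\rd b}\cdot(a\rd b) = a\rd b$, which is the first paragraph's observation applied to $a\rd b$), one obtains $1_{a\rd b}\le p$, so $1_{a\rd b} = p = 1_a$: this is~\ref{H2}. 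Then~\ref{H3} and~\ref{H1} follow by Proposition~\ref{prop:H2:iff:H3:imply:H1}, which establishes~\ref{cond:H}.
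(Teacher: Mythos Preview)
Your proposal leaves the crucial implication $(2)\Rightarrow(3)$ essentially unproved: you outline a plan (establish balancedness, use centrality of $1_a,1_b$, rewrite $ab$ and $ba$ via $p=1_a 1_b$), but never carry it out, and it is not clear that route closes. The paper's argument is short and uses $(2)$ pointwise, with no detour through positivity or balancedness: from idempotence $b\cdot ba=ba$ one gets $b\le (ba)\rd(ba)=(ba)\ld(ba)$ by $(2)$, hence $bab\le ba$; left-multiplying by $a$ and using $(ab)^2=ab$ gives $ab\le aba$; dually, $ba\cdot a=ba$ gives $a\le (ba)\ld(ba)=(ba)\rd(ba)$, hence $aba\le ba$; combining, $ab\le ba$. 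This is the step you flagged as ``the main obstacle,'' and your sketch does not supply it.

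Your verification of~\ref{cond:H} also contains a concrete error: ``symmetrically $b\le a\rd b$'' would mean $bb\le a$, i.e.\ $b\le a$, which is not part of the hypothesis $1_a=1_b$. The paper's computation instead bounds $a\rd b$ from above: since $a\le 1_a=1_b$ and $1_b\rd b=b\rd b^2=b\rd b=1_b$, one has $a\rd b\le 1_b=1_a\le a\ld a$ (the last inequality uses $(1)$), so $a(a\rd b)\le a$; together with your correct inequality $a\le a\rd b$ this yields $a(a\rd b)=a$, and then, using $(1)$ again, $1_{a\rd b}\cdot a = a\cdot 1_{a\rd b}\cdot(a\rd b)\le a(a\rd b)=a$, giving $1_{a\rd b}\le 1_a$. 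Note finally the logical order: the paper proves~\ref{cond:H} from $(1)$ \emph{first} and then obtains $(1)\Rightarrow(2)$ via Proposition~\ref{prop:H2:iff:H3:imply:H1}; your plan to derive~\ref{cond:H} from commutativity is only legitimate once $(2)\Rightarrow(3)$ is actually in hand, and your separate route to $(1)\Rightarrow(2)$ through positivity of self-residuals is likewise left as an intention rather than an argument.
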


\begin{proof}
First of all, let us prove that~(1) implies condition~\ref{cond:H}. By Proposition~\ref{prop:H2:iff:H3:imply:H1}, it suffices to prove that~\ref{H2} holds. Let $a$ and $b$ be arbitrary elements and assume $1_a = 1_b$. First of all, by the centrality of $1_a$, we have $1_a\le a\ld a$; and by Lemma~\ref{lem:inequalities:selfresiduals:central}, we have that $1_a \le 1_{a\rd b}$. In order to prove the reverse inequality, notice that idempotence implies that $b \le 1_b = 1_a\le a\ld a$ and therefore $ab \le a$, whence we deduce that $a\le a\rd b$ and therefore $a = aa \le a(a\rd b)$. But also $a \le 1_a = 1_b$, so by idempotence $a\rd b \le 1_b\rd b = 1_b = 1_a \le a\ld a$, whence $a(a\rd b) \le a$. That is, $a(a\rd b) = a$. Thus,
\[
1_{a\rd b}\cdot a = 1_{a\rd b}\cdot a(a\rd b) =  a 1_{a\rd b}\cdot (a\rd b) \le a (a\rd b) = a,
\]
whence $1_{a\rd b} \le 1_a$.

\begin{itemize}[topsep=0pt, align=left, left=0pt .. \parindent]
\item[(1) $\Rightarrow$ (2)] We have seen that~(1) implies condition~\ref{cond:H}, which in turn implies $x\ld x \approx x\rd x$ by Proposition~\ref{prop:H2:iff:H3:imply:H1}.

\item[(2) $\Rightarrow$ (3)] Idempotence implies that for all elements $a$ and $b$, we have $bba = ba$, hence $b\le ba\rd ba = ba\ld ba$ and $bab\le ba$. Multiplying by $a$ gives $abab\le aba$, so by idempotence we obtain $ab\le aba$. Finally, $baa = ba$ implies $a\le ba\ld ba = ba\rd ba$, hence $aba\le ba$ and therefore $ab \le ba$.

\item[(3) $\Rightarrow$ (1)] This is trivial.\QED
\end{itemize}\let\QED\relax
\end{proof}

The next result shows that right-residuation, idempotence, and $x\rd x \approx y\rd y$ imply $y \approx (x\rd x)y$ (even in the nonassociative setting). Hence if all self-residuals coincide (i.e., $x\ld x\approx y\rd y$) in an idempotent residuated semigroup, then it is balanced and the term $1_x=x\rd x$ is the identity and top element.

\begin{lemma}\label{lem:identity:and:top}
    Suppose $\m A = \pair{A,\le,\cdot,\rd }$ satisfies the identity $xx \approx x$ and $\rd $ is a right residual of the multiplication, i.e., $xy\le z\iff x\le z\rd y$. Then $\m A$ satisfies $x \approx (x\rd x)x$. If $\m A$ moreover satisfies $x\rd x \approx y\rd y$, then $y \approx (x\rd x)y$ and $y\le x\rd x$ also hold.
\end{lemma}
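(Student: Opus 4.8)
The plan is to reduce everything to the residuation adjunction $ab \le c \iff a \le c \rd b$ together with the single monotonicity fact that multiplication is order-preserving in its left argument; notably, \emph{no} associativity and no right-monotonicity of $\cdot$ are needed, which matters since neither is assumed in the statement. First I would record left-monotonicity of the product: from $a'b \le a'b$ and residuation we get $a' \le (a'b)\rd b$, so if $a \le a'$ then $a \le (a'b)\rd b$, i.e.\ $ab \le a'b$.

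Next I would prove $x \approx (x \rd x)\, x$. The inequality $(a \rd a)\, a \le a$ is immediate by applying residuation to the trivial $a \rd a \le a \rd a$. For the converse, idempotence gives $aa \le a$, hence $a \le a \rd a$ by residuation; applying left-monotonicity of $\cdot$ to this last inequality yields $a = aa \le (a \rd a)\, a$. Antisymmetry then gives the identity $(a \rd a)\, a = a$.

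Finally, assume in addition $x \rd x \approx y \rd y$, so that all self-residuals share one common value. The identity $y \approx (x \rd x)\, y$ is then just the previous identity rewritten: for all $a, b$ we have $(a \rd a)\, b = (b \rd b)\, b = b$ using $a \rd a = b \rd b$. And $y \le x \rd x$ follows from idempotence alone: $bb = b \le b$ gives $b \le b \rd b$ by residuation, and $b \rd b = x \rd x$ for every $x$, so $b \le x \rd x$.

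There is essentially no real obstacle here beyond bookkeeping; the one thing to stay vigilant about is resisting the temptation to invoke associativity or monotonicity of $\cdot$ in its right slot, since the lemma is explicitly stated for the possibly nonassociative, only right-residuated setting. Every step above uses exactly what the hypotheses provide and nothing more.
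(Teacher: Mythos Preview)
Your proof is correct and follows essentially the same route as the paper's: both first derive left-monotonicity of the product from the residuation equivalence, then use $b\le b\rd b$ (from idempotence) together with that monotonicity to sandwich $(b\rd b)b$ between $b$ and $b$, with the second part being an immediate substitution. Your write-up is simply a bit more explicit about the step $(a\rd a)a\le a$ and about how the second part follows, but nothing of substance differs.
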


\begin{proof}
Let $a,b,c\in A$ and assume $a\le b$. Using right residuation, we have $bc\le bc$ implies $a\le b\le (bc)\rd c$, and therefore $ac\le bc$. Hence, multiplication is monotone in the left coordinate. Therefore, since $bb = b$ implies that $b \le b\rd b$, we deduce that $b = bb \le (b\rd b) b \le b$. The second part follows immediately.
\end{proof}

Recall that a Brouwerian semilattice $\pair{A,\wedge,\to,1}$ is a meet-semilattice with a top element $1$ such that $x\wedge y\le z\iff y\le x\to z$.
A Brouwerian algebra is a Brouwerian semilattice expanded with a join $\vee$, and it follows from the residuation equivalence that the lattice reduct is distributive.

\begin{corollary}\label{cor:Brsl}
    Every idempotent residuated semigroup that satisfies the equation $x\ld x \approx y\rd y$ is a Brouwerian semilattice with $x\to y=x\ld y$ and $1=x\ld x$. In particular, every integrally closed idempotent residuated monoid is a Brouwerian semilattice.
\end{corollary}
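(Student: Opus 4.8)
The plan is to derive the statement entirely from results already established, with essentially no fresh computation. First note that the hypothesis $x\ld x\approx y\rd y$ is strictly stronger than $x\ld x\approx x\rd x$: instantiating $y:=x$ gives $x\ld x\approx x\rd x$, while reading the identity for arbitrary pairs of elements shows that \emph{all} self-residuals of $\m A$ are equal to one and the same constant term, which I will write as $1$. By Proposition~\ref{prop:idemcomm}, an idempotent residuated semigroup satisfying $x\ld x\approx x\rd x$ is commutative, so $\m A$ is commutative.

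Next I would invoke Lemma~\ref{lem:identity:and:top}. Since $\m A$ is idempotent, $\rd$ is a right residual of the multiplication, and the $\rd$-self-residuals all coincide, the lemma yields $y\approx (x\rd x)\,y$ and $y\le x\rd x$ for all $x,y$; that is, $y\approx 1\cdot y$ and $y\le 1$. Using commutativity, $1$ is then a two-sided global identity of $\m A$, and the inequality $y\le 1$ says that $1$ is the top element of $\pair{A,\le}$.

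It remains to recognize the Brouwerian semilattice structure. The multiplication of $\m A$ is associative (it is a residuated semigroup), commutative (Proposition~\ref{prop:idemcomm}), and idempotent (by hypothesis), hence a semilattice operation; since its identity $1$ is the greatest element of $\pair{A,\le}$, the product $xy$ is in fact the meet of $x$ and $y$: from $y\le 1$ we get $xy\le x\cdot 1=x$ and likewise $xy\le y$, while $z\le x$ and $z\le y$ give $z=zz\le xy$ by monotonicity of $\cdot$. Thus $\pair{A,\le}$ is a meet-semilattice with $x\wedge y=xy$ and top element $1$. Putting $x\to y:=x\ld y$, the residuation equivalence $xy\le z\iff y\le x\ld z$ becomes precisely $x\wedge y\le z\iff y\le x\to z$, so $\pair{A,\wedge,\to,1}$ is a Brouwerian semilattice. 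The final sentence is then immediate: an integrally closed idempotent residuated monoid satisfies $x\ld x\approx 1\approx x\rd x$, hence a fortiori $x\ld x\approx y\rd y$, and the first part applies.

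I do not expect a genuine obstacle here; the only point requiring a little care is the verification that the product really is the lattice meet, which hinges on having first shown (via Lemma~\ref{lem:identity:and:top}) that $1=x\ld x$ is the top element of the order.
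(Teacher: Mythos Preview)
Your proposal is correct and follows essentially the same route as the paper: both use Proposition~\ref{prop:idemcomm} for commutativity, Lemma~\ref{lem:identity:and:top} to identify $1=x\rd x$ as a global identity and top element, and then verify that the idempotent commutative product coincides with the meet in~$\le$. The only cosmetic difference is the order of presentation: the paper first derives $ab\le b$ directly from $a\le a\ld a=b\rd b$ and then invokes Lemma~\ref{lem:identity:and:top} at the end, whereas you invoke the lemma upfront to get $y\le 1$ and deduce $xy\le x$ from that; both arguments are equally valid.
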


\begin{proof}
Assume the identities hold in an idempotent residuated semigroup. Let $a,b$ be arbitrary elements. From $a\le a \ld a = b \rd  b$ we conclude $ab\le b$. By Proposition~\ref{prop:idemcomm}, the multiplication operation is commutative, hence can be considered a meet-semilattice operation. If $a\le b$, then $ab = a$, since $a = aa\le ab = ba \le a$. On the other hand, if $ab = a$, then $a = ab \le b$, so $\pair{A,\le}$ is a meet-semilattice and the multiplication operation $x y$ coincides with the meet $x \wedge y$ in $\le$. Moreover, the residual $x\to y:=x\ld y$ is then a Brouwerian semilattice implication. The fact that $1 = a\ld a = a\rd a$ is the identity and top element follows from Lemma~\ref{lem:identity:and:top}.
\end{proof}

Hence every balanced idempotent residuated semigroup has fibers that are closed under multiplication and residuation, and Corollary~\ref{cor:Brsl} above shows that these fibers are Brouwerian semilattices. The structure theory of the previous section applies to all balanced idempotent residuated semigroups that are steady, i.e., satisfy \ref{St2}: $1_{x\to y}=1_x1_y$. 
The last result in this section shows that for commutative idempotent residuated lattices the map $\varphi_{pq}(x):=xq$ from $\m A_p$ to $\m A_q$ is meet preserving. 

\begin{proposition} \label{prop:phi:preserves:meet}
    The following quasi-identity holds in each idempotent commutative residuated lattice:
    \[
    1_x \approx 1_y\le 1_z \implies (x\wedge y)1_z \approx x1_z\wedge y1_z.
    \]
\end{proposition}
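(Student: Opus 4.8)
The plan is to reduce the statement to two elementary facts about idempotent residuated lattices, so that no appeal to the structure theory is needed. One of the two inequalities, $(x\wedge y)1_z\le x1_z\wedge y1_z$, requires no hypotheses at all: it is immediate from $x\wedge y\le x$, $x\wedge y\le y$ and the monotonicity of multiplication. Hence the whole content of the proposition is the reverse inequality $x1_z\wedge y1_z\le(x\wedge y)1_z$, and this is what I would concentrate on.

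The first step is to show that, under the hypothesis $1_x=1_y=:p$, the product $xy$ coincides with the lattice meet $x\wedge y$. Since multiplication is idempotent we have $x\cdot x=x\le x$, so residuation gives $x\le x\rd x=1_x=p$, and symmetrically $y\le p$; commutativity gives $x\ld x=x\rd x=1_x$. From $y\le p=x\ld x$ and residuation we get $xy\le x$, and likewise $xy\le y$, so $xy\le x\wedge y$. For the converse, $x\wedge y$ is an element of the idempotent monoid, so $x\wedge y=(x\wedge y)(x\wedge y)\le xy$ by monotonicity. Thus $xy=x\wedge y$. This is precisely where the hypothesis $1_x\approx1_y$ is used: it is what lets us conclude $y\le1_x$ and $x\le1_y$.

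The second step exploits that $r:=1_z=z\rd z$ is a positive idempotent — this holds in every residuated monoid, since $1\le1_z$ — and is central because $\m A$ is commutative. Consequently the map $a\mapsto ar$ is an endomorphism of the monoid reduct: $(ar)(br)=arbr=abrr=abr=(ab)r$, using centrality and idempotence of $r$. Combining this with the identity $xy=x\wedge y$ from the first step, and with the fact that $u\wedge v\le uv$ for all $u,v$ (because $u\wedge v$ is idempotent and $(u\wedge v)^2\le uv$), applied to $u=xr$ and $v=yr$, we obtain
\[
x1_z\wedge y1_z \;=\; xr\wedge yr \;\le\; (xr)(yr) \;=\; (xy)r \;=\; (x\wedge y)1_z,
\]
which is the desired inequality.

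There is no serious obstacle here: the argument is short once one notices that the hypothesis collapses the product to the meet, and the rest is routine residuated-lattice manipulation. The only point worth flagging is that the remaining part of the hypothesis, $1_y\le1_z$, is not actually needed for the argument above; it is presumably retained only because that is the form in which the statement is applied to the connecting maps $\varphi_{pq}(x)=xq$.
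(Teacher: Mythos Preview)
Your proof is correct and follows essentially the same line as the paper's: both arguments first show that $xy=x\wedge y$ whenever $1_x=1_y$, then use centrality and idempotence of $r=1_z$ to rewrite $(xr)(yr)=(xy)r$. The only difference is in the final step: the paper argues that $xr$ and $yr$ lie in the same fiber (using $1_y\le 1_z$) and hence $(xr)(yr)=xr\wedge yr$, whereas you bypass this by invoking the universally valid inequality $u\wedge v\le uv$ for idempotent multiplication. Your route is slightly cleaner and, as you correctly observe, shows that the hypothesis $1_y\le 1_z$ is not actually needed for the conclusion.
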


\begin{proof}
Let $\m A$ be an idempotent commutative residuated lattice. The fibers of $\m A$ are Brouwerian algebras, hence $ab =a\wedge b$ if $1_a = 1_b$. Similarly, for any $q = 1_c$ greater than $1_a = 1_b$, $aq$ and $bq$ are in the same component and satisfy $aqbq=aq\wedge bq$. The conclusion follows by centrality and idempotence of $q$.
\end{proof}

\section{Instructive Examples}
\label{sec:examples}

We recall several instructive instances of P\l{}onka sum decompositions already considered in~\cite{BGJPS2024}.

\paragraph{\bfseries P\l{}onka sum of two residuated posets.}
Consider two residuated posets $\m A_1$ and $\m A_2$ and two directed systems $\Phi=\{\varphi_{pq} : p\le q\}$ and $\Psi=\{\psi_{pq} : p\le q\}$, indexed over the 2-element chain $1<2$, such that the nonidentity maps $\varphi_{12},\psi_{12}\: A_1\to A_2$ are defined by
\[
a\mapsto\varphi_{12}(a) = 1^{\m A_2}\quad\text{and}\quad a\mapsto\psi_{12}(a) = 0^{\m A_2},
\]
with $0^{\m A_2}$ a fixed element in $A_2$ such that $0^{\m A_2} < 1^{\m A_2}$. Then $\pair{\Phi,\Psi}$ satisfies the conditions of  Theorem~\ref{thm:construction:meta}, and hence we obtain a residuated poset $\m S = \m A_1\uplus\m A_2$ (Figure~\ref{fig:residuatedposetsum}).

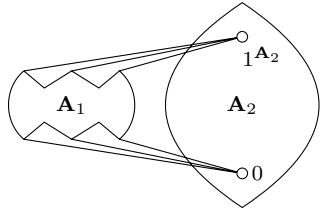
\begin{figure}[ht]
\centering
\begin{tikzpicture}[every label/.append style={font=\scriptsize}]
\node at (0,3)[n]{\scriptsize$\m A_1$};
\node at (5,3)[n]{\scriptsize$\m A_2$};

\draw (5,0)..controls(2,2)and(2,4)..(5,6)..controls(8,4)and(8,2)..(5,0)
(-1.4,4)--(-.8,3.5)--(0,4)--(.8,3.5)--
(1.4,4)..controls(2,3.5)and(2,2.5)..(1.4,2)
(-1.4,4)..controls(-2,3.5)and(-2,2.5)..(-1.4,2)--
(-.8,2.5)--(0,2)--(.8,2.5)--(1.4,2)--(5,1)(0,2)--(5,1)(-1.4,2)--(5,1)
(1.4,4)--(5,5)(0,4)--(5,5)(-1.4,4)--(5,5)
(5,1) node(0p)[label=+45:$0^{\m A_2}$]{}
(5,5) node(1p)[label=-45:$1^{\m A_2}$]{};
\end{tikzpicture}
\caption{Example of a P\l{}onka sum $\m A_1\uplus\m A_2$ of two residuated posets.}
\label{fig:residuatedposetsum}
\end{figure}

\paragraph{\bfseries Doubly chopped lattices.}
A \emph{doubly chopped lattice} (see~\cite{GratzerSchmidt95}, \cite{GratzerBook}) is a poset in which every pair of elements with an upper bound has a join and every pair of elements with a lower bound has a meet. Such a poset will become a lattice if a new top and bottom element is added to the poset, and conversely, every doubly chopped lattice is obtained from a bounded lattice by removing the bounds.

If $\m A$ is a residuated poset that is a doubly chopped lattice and ${\m B}$ is a residuated lattice, then the P\l{}onka sum $\m S = \m A\uplus\m B$ is a lattice under the following operations:
{\small
\begin{align*}
a\vee^{\m S} b &= \begin{cases}
  a\vee^{{\m A}} b& \text{ if } a,b\in A\text{ have an upper bound in $A$}\\
  1^\m B& \text{ if } a,b\in A\text{ have no upper bound in $A$}\\
  a\vee^{{\m B}} b& \text{ if } a,b\in B \\
  a &\text{ if } a\in A, \;b\in B \text{ with } b\le 0^{\m B} \\
  b\vee^{{\m B}} 1^{{\m B}} &\text{ if } a\in A, \;b\in B \text{ with } b\not\le 0^{\m B},
\end{cases} \\
a\wedge^{\m S} b &= \begin{cases}
  a\wedge^{{\m A}} b&\text{ if } a,b\in A \text{ have a lower bound in $A$}\\
  0^{\m B}&\text{ if } a,b\in A \text{ have no lower bound in $A$}\\
  a\wedge^{{\m B}} b& \text{ if } a,b\in B \\
  a &\text{ if } a\in A, \;b\in B \text{ with } 1^{{\m B}}\le b \\
  b\wedge^{{\m B}} 0^{\m B} &\text{ if } a\in A, \;b\in B \text{ with } 1^{{\m B}}\not\le b.
\end{cases}
\end{align*}
}

It is easily checked that $a,b\le a\vee^{\m S} b$. Moreover, consider $a,b\le^{\m S} c$, for some $c\in A\cup B $, $a\in A$, and $b\in B$ (the other cases are trivial). If $c\in A$, then $b\le^{\m S} c$ implies $b\le^{\m B} 0^{\m B}$, then $a\vee^{\m S}b = a\le c$. Alternatively, if $c\in B$, then $a\le^\m S c$ implies $1^{\m B}\le c$, then $a\vee^{\m S}b =1^{\m B}\vee^{\m S}b\le c $. The case of $\wedge^{\m S} $ can be checked analogously.

\paragraph{\bfseries The relation algebra $\mathcal P(\mathbb Z_2)$.}
For any monoid $\m M =\pair{M,\cdot,e}$ the \emph{complex algebra} $\mathcal P(\m M)$ is the residuated lattice $\pair{\mathcal P(M),\cap,\cup,\cdot,\ld,\rd,\{e\}}$, where for all $X,Y\subseteq M$, $X\cdot Y=\{xy\: x\in X, y\in Y\}$, $X\ld Y=\{z\in M\: X\cdot\{z\}\subseteq Y\}$ and $X\rd Y=\{z\in M\: \{z\}\cdot Y\subseteq X\}$. The P\l{}onka sum with two components presented above (Figure~\ref{fig:residuatedposetsum}) encompasses the example of the 4-element relation algebra $\mathcal P(\mathbb Z_2)$ given by the complex algebra of the 2-element group (Figure~\ref{fig:a2}). Note that this relation algebra is not locally integral~\cite{GFJiLo23}, but it is balanced and satisfies the identities~\ref{St1}--\ref{St3}, hence the P\l{}onka sum decomposition can be applied to all members of the variety of relation algebras generated by $\mathcal P(\mathbb Z_2)$.

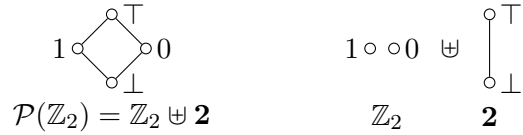
\begin{figure}
\centering
\begin{tikzpicture}[baseline=0pt]
\node at (0,-1)[n]{$\mathcal P(\mathbb Z_2)=\mathbb Z_2\uplus\mathbf 2$};
\node(3) at (0,2)[label=right:$\top$]{};
\node(2) at (1,1)[label=right:$0$]{} edge (3);
\node(1) at (-1,1)[label=left:$1$]{} edge (3);
\node(0) at (0,0)[label=right:$\bot$]{} edge (1) edge (2);
\end{tikzpicture}
\qquad \qquad
\begin{tikzpicture}[baseline=0pt]
\node at (-2,-1)[n]{$\mathbb Z_2$};
\node at (1,-1)[n]{$\m2$};
\node(2) at (-1.8,1)[label=right:$0$]{};
\node(1) at (-2.5,1)[label=left:$1$]{};
\node(1) at (-.15,1)[n]{$\uplus$};
\node(3) at (1,2)[label=right:$\top$]{};
\node(0) at (1,0)[label=right:$\bot$]{} edge (3);
\end{tikzpicture}

\caption{A 4-element relation algebra obtained from the P\l{}onka sum of the 2-element group and the 2-element Boolean algebra.}
\label{fig:a2}
\end{figure}    

\section{Conclusion}
\label{sec:conclusion}

The main universal algebraic contribution of this paper was to generalize the P\l{}onka sum construction to multiple partition functions, or equivalently to multiple families of linking maps, which allows for composing and decomposing larger classes of algebras. Unlike the ordinary P\l{}onka sum construction, this generalization is moreover well-suited for constructing ordered algebras. We then applied this construction to show that a wide class of residuated semigroups, which we called steady residuated semigroups, can be decomposed in this way into a family of integrally closed residuated submonoids, indexed by the positive idempotents of the original residuated semigroup. This extends more specialized existing constructions, which decompose residuated semigroups into integral involutive pieces. The classes of ordered algebras studied in this paper are summarized in Table~\ref{tab:residuated_structures_summary}. Apart from the ordered quasivariety of balanced residuated semigroups with \ref{cond:H}, all of these classes are partially ordered varieties. That is, they are axiomatized by a set of inequalities together with a specification of the monotonicity or antitonicity properties of the primitive operations.

\begin{table}
\centering
\small
\tabcolsep5pt
\begin{tabular}{p{1.9cm} p{6.1cm} p{6.4cm}}
\hline
\textbf{Class} & \textbf{Defining Axioms} & \textbf{Decomposition and Fibers} \\ \hline
\textbf{Residuated\newline semigroups} & Poset $\langle A, \le \rangle$ and semigroup $\langle A, \cdot \rangle$ \newline satisfying residuation: \newline $x \cdot y \le z \ \Leftrightarrow \ x \le z/y \ \Leftrightarrow \ y \le x\backslash z$ & None in general. \\ \hline
\textbf{Balanced\newline residuated\newline semigroups} & Residuated semigroup satisfying: \newline $x\backslash x \approx x/x$ (denoted $1_x$) and \newline $y \le 1_x\cdot y$ and $y \le y\cdot 1_x$ \newline(self-residuals are positive) & Partitioned into disjoint sets $A_p$ indexed by positive idempotents $\Idp\mathbf{A}$, but $A_p$ are not necessarily subalgebras. \\ \hline
\textbf{Balanced\newline residuated\newline semigroups\newline with \ref{cond:H}} & Balanced residuated semigroup satisfying: \newline \ref{H1} $1_x \approx 1_y \implies 1_{xy} \approx 1_x$ \newline \ref{H2} $1_x \approx 1_y \implies 1_{x/y} \approx 1_x$ \newline \ref{H3} $1_x \approx 1_y \implies 1_{x\backslash y} \approx 1_x$ & Partitioned into disjoint subalgebras $\mathbf{A}_p$ {($p\in\Idp \mathbf{A}$)} which are integrally closed ($x \backslash x \approx 1 \approx x / x$) residuated monoids. \\ \hline
\textbf{Steady\newline residuated\newline semigroups} & Balanced residuated semigroup \mbox{satisfying}: \newline \ref{St1} $1_{xy} \approx 1_x \cdot 1_y$ \newline \ref{St2} $1_{x/y} \approx 1_x \cdot 1_y$ \newline \ref{St3} $1_{y \backslash x} \approx 1_x \cdot 1_y$ & Generalized P\l{}onka sum of a directed system of metamorphisms of integrally closed residuated monoids over the semilattice $\IDP \mathbf{A}$. \\ \hline
\textbf{Idempotent\newline steady\newline residuated\newline semigroups} & Steady residuated semigroup satisfying: \newline $x x \approx x$ (idempotence) and \newline $1_x \cdot y \approx y \cdot 1_x$\newline  (equivalently, commutativity) & Decomposes into Brouwerian semilattices. \\ \hline
\end{tabular}
\caption{Summary of classes of residuated structures, their defining axioms, and decomposition results.}
\label{tab:residuated_structures_summary}
\end{table}

The basic structure theorems for steady residuated semigroups proved in this paper form a promising basis for further investigations. For example, when is it the case that the construction yields lattice-ordered structures? What are some natural classes of residuated semigroups for which it gives new insights into their structure theory? And how far beyond the class of steady residuated semigroups can we go if we iterate the P\l{}onka sum construction, as in Example~\ref{ex:iterated:decomposition}?

\section*{Acknowledgments}

S. Bonzio acknowledges the support by the Italian Ministry of Education, University and Research through the projects PRIN 2022 DeKLA (``Developing Kleene Logics and their Applications'', code: 2022SM4XC8) and PRIN Pnrr project Qm4Np (``Quantum Models for Logic, Computation and Natural Processes'', code: P2022A52CR). He also acknowledges the Fondazione di Sardegna for the support received by the project MAPS (grant number F73C23001550007). Finally, he gratefully acknowledges also the support of the INDAM GNSAGA (Gruppo Nazionale per le Strutture Algebriche, Geometriche e loro Applicazioni). The work of A. P\v{r}enosil was funded by the grant 2021 BP 00212 of the grant agency AGAUR of the Generalitat de Catalunya. S. Bonzio and A. P\v{r}enosil also acknowledge the support of the MOSAIC project (H2020-MSCA-RISE-2020 Project 101007627), which funded their visits to Chapman University.

\bibliographystyle{fundam}

\end{document}